\documentclass[a4paper,12pt]{article}
\usepackage[a4paper, total={6.25in, 10in}]{geometry}

\usepackage{newunicodechar}
\newunicodechar{ʹ}{'}
\usepackage[backend=biber,sorting=none,style=numeric-comp]{biblatex} 
\addbibresource{biblio.bib}
\usepackage{amsmath}
\usepackage{color}
\usepackage{enumitem}
\usepackage{subfigure}
\usepackage{hep-title}
\usepackage{float}
\usepackage[dvipsnames]{xcolor}
\usepackage{tikz}
\usetikzlibrary{backgrounds, shapes, positioning}
\usepackage{amsfonts}
\usepackage{amsmath}
\usepackage{amssymb}
\usepackage{amsthm}
\usepackage{enumitem}
\usepackage{shuffle}
\usepackage{comment}
\usepackage{cancel}
\usepackage{blkarray}
\usepackage{hyperref}
\usepackage{savesym}
\savesymbol{div}
\usepackage{physics}
\restoresymbol{TxF}{div}
\usepackage{tikz}
\usepackage{tikz-3dplot}
\usetikzlibrary{calc,intersections}
\usepackage{xcolor}
\usetikzlibrary{arrows.meta}
\usepackage{mathabx}
\usepackage{mathtools}
\usepackage{imakeidx}

\usepackage{graphicx} 
\usepackage{subcaption} 
\usepackage[export]{adjustbox}

\usepackage{amsthm}

\theoremstyle{remark}

\theoremstyle{plain}
\newtheorem{theorem}{Theorem}[section]
\newtheorem{lemma}[theorem]{Lemma} 
\newtheorem{proposition}[theorem]{Proposition}

\newtheorem{cor}[theorem]{Corollary}

\makeatletter
\newtheorem*{rep@theorem}{\rep@title}
\newcommand{\newreptheorem}[2]{%
\newenvironment{rep#1}[1]{%
 \def\rep@title{#2 \ref{##1}}%
 \begin{rep@theorem}}%
 {\end{rep@theorem}}}
\makeatother
\newreptheorem{theorem}{Theorem}
\newreptheorem{cor}{Corollary}
\newreptheorem{proposition}{Proposition}

\theoremstyle{definition}
\newtheorem{definition}[theorem]{Definition}
\newtheorem{eg}[theorem]{Example}

\newtheorem{rem}[theorem]{Remark}
\newtheorem{question}[theorem]{Question}


\makeatletter
\newcommand*\bigcdot{\mathpalette\bigcdot@{1}}
\newcommand*\bigcdot@[2]{\mathbin{\vcenter{\hbox{\scalebox{#2}{$\m@th#1\bullet$}}}}}
\makeatother
\makeindex
\begin{document}
\preprint{
\vspace*{50 pt}
MPP-2025-157}
\title{Canonical Forms as Dual Volumes} 
\author[*]{Elia Mazzucchelli \email{eliam@mpp.mpg.de}}
\affiliation[*]{Max-Planck-Institut f\"{u}r Physik, 
Boltzmannstr. 8,
85748 Garching, Germany.}
\author[*]{Prashanth Raman\email{praman@mpp.mpg.de}}
\date{\today}

\maketitle
\thispagestyle{empty}

\begin{abstract}
    We study {\it dual volume} representations of canonical forms for positive geometries in projective spaces, expressing their rational canonical functions as Laplace transforms of measures supported on the convex dual of the semialgebraic set. When the measure is non-negative, we term the geometry {\it completely monotone}, reflecting the property of its canonical function. We identify a class of positive geometries whose canonical functions admit such dual volume representations, characterized by the algebraic boundary cut out by a hyperbolic polynomial, for which the geometry is a hyperbolicity region. In particular, simplex-like minimal spectrahedra are completely monotone, with representing measures related to the Wishart distribution, capturing volumes of spectrahedra or their boundaries. We explicitly compute these measures for positive geometries in the projective plane bounded by lines and conics or by a nodal cubic, revealing periods evaluating to transcendental functions. This dual volume perspective reinterprets positive geometries by replacing logarithmic differential forms with probability measures on the dual, forging new connections to partial differential equations, hyperbolicity, convexity, positivity, algebraic statistics, and convex optimization.
\end{abstract}

\clearpage

\thispagestyle{empty}

\tableofcontents

\clearpage
\setcounter{page}{1}

\section{Introduction}

Positive geometries originated from the study of scattering amplitudes in high-energy physics, particularly through the introduction of the \textit{amplituhedron}~\cite{the_amplituhedron}. In the context of planar $\mathcal{N}=4$ super Yang-Mills (SYM) theory, scattering amplitudes admit a geometric reformulation as \textit{canonical functions} on~\textit{positive Grassmannians}, reflecting the kinematic space of the theory~\cite{ampl_and_pos_gr} at a fixed helicity sector. The intricate combinatorial structure of these positive geometries precisely encodes the singularities and factorization properties characteristic of amplitudes. This perspective was extended to encompass all gluon amplitudes, including at loop level, culminating in the discovery of the amplituhedron. The ampituhedron is a semialgebraic set in the Grassmannian, which is believed to be a positive geometry~\cite{Positive_geometries,Ranestad:adjoint}. Since then, a variety of positive geometries have been identified, that similarly capture scattering processes in other quantum field theories~\cite{ABHY_original,ABJM_amplituhedron,Correlahedron, Cosmological_polytopes,Banerjee:2018tun,Raman:2019utu,Aneesh:2019cvt,Jagadale:2020qfa,Jagadale:2022rbl,Jagadale:2023hjr,Cosmoehdra}. Parallel to these developments in physics, a rigorous mathematical framework has been established, positioning positive geometries as an active research area bridging algebraic geometry, combinatorics, and analysis~\cite{Positive_geometries,Lam:_PG_notes}, with recent comprehensive surveys available~\cite{Ranestad:what_is_PG,Fevola:Pos_Geom}. These objects lie at the intersection of complex analysis, topology, semialgebraic and tropical geometry, and algebraic statistics, providing a rich landscape for interdisciplinary exploration.

Following the seminal work of Andrew Hodges~\cite{Hodges}, which ultimately led to the discovery of the amplituhedron, a central theme in the study of positive geometries within quantum field theory has been the geometric interpretation of scattering amplitudes as volumes of certain objects in kinematic space. These ``volumes" are to be understood in an algebro-geometric sense, as specific rational functions. In physics, these functions appear as tree-level amplitudes or loop-level integrands, depending on the order of perturbation theory, and are expressed in terms of kinematic invariants. However, interpreting them as volumes in a geometric sense requires some unpacking.


For that, we consider a (toy) example: an $m$-dimensional convex polytope $P $ embedded in real projective space $\mathbb{P}_\mathbb{R}^{m}$. It is known that $P$ is a positive geometry~\cite{Positive_geometries}. This means that there exists a unique rational differential $m$-form $\mathbf{\Omega}_P$ on the $m$-dimensional complex projective space $\mathbb{P}^m$, with logarithmic singularities along any facet-defining hyperplane of $P$, and holomorphic elsewhere. Moreover, $\mathbf{\Omega}_P$ satisfies a recursive property, see Subsection~\ref{subsec:Positive geometries}. Let us pick an affine chart $\mathbb{R}^m \subset \mathbb{P}^m_\mathbb{R}$ and identify $P$ with its image in $\mathbb{R}^m$. 
We introduce the \textit{polar dual} of $P$, shifted by $x \in \mathbb{R}^m$, 
\begin{equation}\label{eq:polar_dual}
    (P-x)^{\circ} :=\{ y \in \mathbb{R}^{m} \, : \, \langle x^\prime - x, y \rangle \geq - 1 \, , \ \forall \, x^\prime \in P  \} \, ,
\end{equation}
where $\langle \cdot, \cdot \rangle$ denotes the standard inner product on $\mathbb{R}^{m}$. We denote by $\Omega_P$ the canonical function of $P$, which in affine coordinates can be written as
\begin{equation}
    \mathbf{\Omega}_P(x) = \Omega_P(x) \, dx_1 \cdots dx_m \, .
\end{equation}
Then, by for example~\cite{Gaetz}, we have that
\begin{equation}\label{eq:omega_vol}
    \Omega_P(x)  = {\rm vol}(P-x)^{\circ}  := \int_{(P-x)^{\circ} } dy_1 \cdots dy_m \, , \quad \forall \, x \in {\rm int } (P) \, ,
\end{equation}
where ${\rm int } (P)$ denotes the interior of $P$.
We can therefore say that the canonical function $\Omega_P(x)$ at $x \in {\rm int}(P)$ computes the volume of the dual polytope of $P$ with respect to $x$. One can check that $\Omega_P(x)$ is singular for $x $ approaching the boundary $\partial P$ of $P$, since in this case~\eqref{eq:polar_dual} becomes unbounded, and hence its volume infinite. It is also not hard to see that $\Omega_P$ is a rational function.

Note that the fact that $\Omega_P(x)$ computes a volume, implies that it is non-negative for $x \in {\rm int}(P)$. 

It turns out that the canonical function of a convex polytope exhibits a much stronger analytic property than mere positivity \cite{Henn:CM}. To illustrate this further, we lift the discussion to the setting of cones. Let $\widehat{P} \subset \mathbb{R}^{m+1}$ denote the pointed cone over $P$,
\begin{equation}
    \widehat{P} := \{\lambda \, x\, : \, x \in P \, , \ \lambda \in \mathbb{R}_{>0} \} \, .
\end{equation}
Similarly, we introduce the (open) dual cone
\begin{equation}\label{eq:dual_P}
    \widehat{P}^* := \{y \in \mathbb{R}^{m+1} \, : \, \langle y,x \rangle > 0 \, , \ \forall \, x \in \widehat{P}\} \, .
\end{equation}
Then, we can rewrite~\eqref{eq:omega_vol} as~\cite{Gaetz} 
\begin{equation}\label{eq:omega_Laplace}
    \Omega_{\widehat{P}}(x) = \int_{\widehat{P}^*} e^{-\langle x , y \rangle } \, dy_1 \cdots dy_{m+1} \, , \quad \forall \, x \in {\rm int } (\widehat{P}) \, .
\end{equation}
Again,~\eqref{eq:omega_Laplace} makes explicit the positivity of $\Omega_{\widehat{P}}$ on ${\rm int } (\widehat{P})$. However, much more is true. In fact, $\Omega_{\widehat{P}}$ is actually \textit{completely monotone} (CM) on ${\rm int } (\widehat{P})$, which means that for every $r \in \mathbb{N} $ we have that
\begin{equation}\label{eq:omega_CM}
    (-1)^r \, D_{v_1} \cdots D_{v_r} \ \Omega_{\widehat{P}}(x) \geq 0 \, , \quad \forall \,  v_i , x \in {\rm int } (\widehat{P}) \, ,
\end{equation}
where $D_{v_i}$ denotes the directional derivative along $v_i$. The conditions~\eqref{eq:omega_CM} can be verified immediately by differentiating~\eqref{eq:omega_Laplace} under the integral sign and using the definition of dual cone~\eqref{eq:dual_P}. By the same reasoning, given a function $f \, \colon C\rightarrow \mathbb{R}$ defined on an open cone $C \subset \mathbb{R}^{m+1}$ which can be expressed as the Laplace transform of some non-negative function $\mu$ on the dual cone $C^{*}$, that is,
\begin{equation}\label{eq:f_laplace}
    f(x) = \int_{C^*} e^{-\langle x , y \rangle } \, \mu(y) \, dy_1 \cdots dy_{m+1} \, , \quad \forall \, x \in C \, ,
\end{equation}
is in fact CM. The Bernstein-Hausdorff-Widder-Choquet (BHWC) theorem~\cite{widder2015laplace}, see Theorem~\ref{thm:BHWC}, states that also the converse is true. That is, if a function $f \, \colon C\rightarrow \mathbb{R}$ is smooth and CM on an open convex cone $C$, then there exists a unique Borel measure $d \nu$ supported on the dual cone $C^{*}$, such that $f$ is equal to the Laplace transform of~$d\nu$. 

We therefore found out that interpreting the canonical function of a positive geometry as the volume function of its dual, naturally yields to the notion of complete monotonicity. Remarkably, this property has recently been observed to hold true for various functions appearing in quantum field theories~\cite{Henn:CM}. For example it was shown in~\cite{Henn:CM} that scalar Feynman integrals are CM in the kinematic variables encoding the momenta and masses of the particles. However, the description of the amplitude in terms of a positive geometry would yield a stronger result, namely the existence of a dual volume description for the underlying geometry. Indeed, even though individual contributions to the amplitude (like those from Feynman diagrams) may be CM, it is not necessarily true that the full amplitude itself is CM. The same principle, motivated Hodges~\cite{Hodges} to study the cancellation of spurious poles between individual contributions to the amplitude, via a volume formula as~\eqref{eq:omega_vol}. In his setting, he writes the tree-level gluon amplitude in Yang-Mills as the volume of a polytope in $\mathbb{P}_\mathbb{R}^3$. In fact, our toy example above, shows that the particular amplitudes considered by Hodges in~\cite{Hodges} are in fact CM on the cone over the polytope \cite{Henn:CM}, which is a special case of an amplituhedron! 

This paper is in fact mainly motivated by the quest of finding the \textit{dual amplituhedron}~\cite{Positive_geometries,Herrmann:2020qlt,Ferro}, according to which the description by Hodges~\cite{Hodges}, of amplitudes as volumes, extends more generally to include amplituhedra which are not polytopes. The existence of a dual amplituhedron is supported by the positivity of amplitude's integrands in $\mathcal{N}=4$ super Yang-Mills~\cite{positive_amplitudes}, which remarkably extends to positivity of certain integrated quantities in the same theory~\cite{Dixon:2016apl,neg_geom_pos,Henn:CM}. We believe that the dual amplituhedron consists in the following two ingredients: a semialgebraic set in the Grassmannian, which serves as the dual of the underlying semialgebraic set given by the amplituhedron, and a non-negative measure supported on it, whose Laplace transform as in~\eqref{eq:Choquet} taken over the Plücker embedding yields the canonical function of the amplituhedron. Regarding the first ingredient, in the case of semialgebraic sets in projective spaces the relevant notion of duality is that of convex duality. On the other hand, there is no obvious notion of duality for semialgebraic sets in a real projective variety, such as the Grassmannian. A potential answer for this problem was recently proposed in~\cite{Mazzucchelli:2025kzy}, where the authors proposed a dual amplituhedron, at the level of the semialgebraic set. In this paper we investigate the second ingredient: the non-negative measure supported on the dual semialgebraic set. However, we restrict our attention to full-dimensional semialgebraic sets in projective spaces, since these provide already a rich playground, and leave the investigation of measures for positive geometries in the Grassmannian to future work. 

A further motivation for this project is to relate the field of positive geometries to that of algebraic statistics. In fact, in addition to the connection to complete monotonicity, the dual volume picture offers the following new interpretation of positive geometries. The information of the semialgebraic set $P$ and its canonical form $\mathbf{\Omega}_P$ is equivalently encoded by the dual semialgebraic set $P^*$ together with a probability measure $\mu_P$ supported on $P^*$. The relation between the two pictures is that $P^*$ is the convex dual of $P$, and $\mathbf{\Omega}_P$ is the Laplace transform~\eqref{eq:Choquet} of~$\mu_P$. In particular, $\mathbf{\Omega}_P$ arises as the moment generating function of $\mu_P$, and defines a \textit{barrier function} for the pointed cone $\widehat{P}$ over $P$, which is relevant for interior-point methods in convex optimization~\cite{guler1996barrier}. This yields a statistical interpretation of positive geometries, for which the triple $(P^*,\mu_P, \iota_P)$ is an \textit{rational exponential family}~\cite{Exponential_varieties}, where $\iota_P \colon P^* \hookrightarrow \mathbb{P}^m_\mathbb{R}$ is the inclusion. In other words, a projective positive geometry becomes dually a parametric statistical model with rational partition function given by~$\mathbf{\Omega}_P$.

In this paper, our primary goal is to investigate which positive geometries in projective spaces admit a dual volume representation. In particular, we move away from polytopes~\eqref{eq:omega_vol} and study dual volume representations as~\eqref{eq:f_laplace} for canonical functions of more general positive geometries, which are in principle bounded by higher-degree varieties. We now summarize our main contributions.

We connect the notion of complete monotonicity see equation~\eqref{eq:omega_CM}, to positive geometries. We say that a positive geometry $P$ is \textit{completely monotone} if its canonical function is completely monotone on the pointed cone $\widehat{P}$ over $P$, see Definition~\ref{def:CM_PG}. It is known~\cite{Scott_2014,Pos_Certif} that complete monotonicity of an inverse power of a homogeneous polynomial is tightly connected to a property of the latter called \textit{hyperbolicity}, see Definition~\ref{def:hyperb_pol}. We extend this to powers of rational functions in Theorem~\ref{thm:rational_fct_CM} and deduce the following. 

\begin{theorem}[Corollary \ref{cor:CM_PG_hyperbolic}]
If $(\mathbb{P}^m,P)$ is a completely monotone positive geometry, then its algebraic boundary, the Zariski closure of its topological boundary, is cut out by a hyperbolic polynomial with hyperbolicity region equal to~$P$.
\end{theorem}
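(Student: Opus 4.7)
The plan is to apply Theorem~\ref{thm:rational_fct_CM} to the canonical function $\Omega_P$ and then identify the resulting hyperbolicity region with $P$ via a connected-component argument.

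First, I would lift to the affine cone and write the canonical function in lowest terms as $\Omega_{\widehat{P}}(x) = N(x)/D(x)$, where $\widehat{P} \subset \mathbb{R}^{m+1}$ is the pointed cone over $P$. Since $P$ is a positive geometry, its canonical form has logarithmic singularities precisely along the facet-defining components of the algebraic boundary $\partial_a P$ and is holomorphic elsewhere. Consequently, the irreducible factors of the denominator $D$ are exactly the irreducible defining polynomials of the components of $\partial_a \widehat{P}$, so that set-theoretically $\{D = 0\} = \partial_a \widehat{P}$ and $D$ is (essentially) the desired candidate for a hyperbolic polynomial cutting out the algebraic boundary.

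Next, by assumption $\Omega_{\widehat{P}}$ is completely monotone on the open cone $\widehat{P}$. Invoking Theorem~\ref{thm:rational_fct_CM}, the extension of the Scott--Sokal--type characterization from inverse powers of homogeneous polynomials to rational functions, one obtains that $D$ must be hyperbolic with respect to any direction $e \in \widehat{P}$, and that $\widehat{P}$ is contained in the corresponding hyperbolicity cone $C(D,e)$. This is the analytic-to-algebraic conversion that does the bulk of the work.

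The final step is to upgrade the inclusion $\widehat{P} \subseteq C(D,e)$ to equality. Since $\Omega_{\widehat{P}}$ blows up exactly along $\{D=0\}$ and is finite on $\widehat{P}$, the topological boundary $\partial \widehat{P}$ is contained in $\{D = 0\}$. Hence $\widehat{P}$ is an open connected subset of $\mathbb{R}^{m+1} \setminus \{D = 0\}$ whose boundary sits inside $\{D = 0\}$, which forces it to be an entire connected component. Since by definition $C(D,e)$ is the connected component of $\mathbb{R}^{m+1} \setminus \{D = 0\}$ containing $e \in \widehat{P}$, I conclude $\widehat{P} = C(D,e)$; descending along the quotient $\mathbb{R}^{m+1}\setminus\{0\} \to \mathbb{P}^m$ gives that $P$ is the hyperbolicity region of $D$.

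The substantive ingredient is the invocation of Theorem~\ref{thm:rational_fct_CM}. The subtlety I anticipate is in the first step, namely ensuring that the denominator in lowest terms is squarefree and that its vanishing locus matches $\partial_a \widehat{P}$ with the correct multiplicities; any residual power in $D$ or spurious factor would have to be absorbed into $N$ or stripped off before applying the hyperbolicity criterion, since hyperbolicity is a property of the underlying reduced hypersurface.
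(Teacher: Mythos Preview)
Your proposal is correct and follows the paper's route: invoke Theorem~\ref{thm:rational_fct_CM} to obtain hyperbolicity of the denominator with $\widehat{P}$ contained in a hyperbolicity cone, then upgrade the inclusion to equality by a connected-component argument (the paper cites \cite[Proposition 2.9]{sinn2015algebraic} for this step, whereas you argue directly). One nitpick: your stated reason for $\partial \widehat{P} \subset \{D=0\}$ is off---finiteness of $\Omega_{\widehat{P}}$ on $\widehat{P}$ does not by itself imply this; the containment holds simply because $\partial \widehat{P} \subset \partial_a \widehat{P} = \{D=0\}$ by the definition of algebraic boundary, which you already established in your first step.
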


We then show that if the algebraic boundary of a positive geometry defines a hyperbolic hypersurface, then the positive geometry admits a so called~\textit{dual volume representation}. We call such positive geometries \textit{hyperbolic}. Hence, every completely monotone positive geometry is hyperbolic. For a positive geometry $P$, being hyperbolic essentially means that the inverse Fourier-Laplace transform of its canonical function is supported on the cone dual to $\widehat{P}$. This follows from a fundamental result in the theory of hyperbolic partial differential equations (PDE) with constant coefficients~\cite{Garding_1959,1Garding_1970,hormander1}, and their fundamental solution, a topic which we review in Subsection~\ref{subsec:Hyperbolic polynomials}. In summary, the fundamental solution $E$ to a PDE associated to a homogeneous polynomial $p$, yields its Riesz measure, i.e. $d \mu$ in~\eqref{eq:f_laplace} for $f=p^{-1}$, see Theorem~\ref{thm:hyp_PDE}. The analogous measure for a rational function $q/p$ with $q$ also homogeneous, is obtained by differentiation (in the distributional sense), i.e. it is given by $q(\partial ) \, E$, see Theorem~\ref{thm:Riesz_p/q}. However, computing a fundamental solution amounts to performing a multidimensional Fourier transform of a rational function (actually a distribution). Such a computation is complicated in general, and can usually be performed only in special cases~\cite{wagner1,wagner2,wagner3,wagner4}.

Nevertheless, there exists a special class of hyperbolic polynomials, whose Riesz measure is better understood. These are the polynomials admitting a symmetric determinantal representation~\cite[Section 4]{Pos_Certif}. The Riesz measure for their inverse power is expressed in terms of the \textit{Wishart distribution} on the space of symmetric positive definite matrices. We review this topic in~Subsection~\ref{subsec:determinantal representations and spectrahedral shadows}. This construction also certifies complete monotonicity, see Proposition~\ref{Prop:det_is_CM}. From these known results, we deduce the following in the context of positive geometries.

\begin{theorem}[Corollary~\ref{cor:simplex_det_is_CM}]
    Let $(\mathbb{P}^m,P)$ be a full-dimensional simplex-like positive geometry. If $\widehat{P}$ is a minimal spectrahedral cone, then $P$ a completely monotone positive geometry.
\end{theorem}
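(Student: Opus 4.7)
The plan is to combine the simplex-like structure of $P$ with the spectrahedral representation of $\widehat{P}$ in order to invoke Proposition~\ref{Prop:det_is_CM} essentially as a black box.

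First, I would unpack the simplex-like hypothesis. Let $p$ be the (irreducible, up to a scalar) homogeneous polynomial cutting out the algebraic boundary of $P$ on $\mathbb{R}^{m+1}$. For a full-dimensional simplex-like positive geometry, the canonical function should take the closed form
\[
\Omega_P(x) \;=\; \frac{c}{p(x)} \quad \text{on } \operatorname{int}(\widehat{P}),
\]
for some scalar $c$; this is the defining feature of being simplex-like, by analogy with an honest projective simplex, whose canonical function is the reciprocal of the product of its facet linear forms. The positive geometry axioms force $c>0$ on $\operatorname{int}(\widehat{P})$, since $\Omega_P$ is known to be positive on the interior (compare~\eqref{eq:omega_vol}). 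The task thus reduces to proving that $1/p(x)$ is CM on $\operatorname{int}(\widehat{P})$.

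Second, I would invoke the minimal spectrahedral assumption. By hypothesis, there exist symmetric matrices $A_0,\dots,A_m$ of size equal to $\deg p$ such that, setting $A(x)=\sum_{i} x_i A_i$, one has
\[
\widehat{P} \;=\; \{\,x\in\mathbb{R}^{m+1}\,:\,A(x)\succeq 0\,\} \quad \text{and} \quad p(x)=\det A(x)
\]
up to a nonzero scalar. In particular, $p$ is hyperbolic with hyperbolicity region $\widehat{P}$, so Corollary~\ref{cor:CM_PG_hyperbolic} is consistent with what we aim to prove. Proposition~\ref{Prop:det_is_CM} is tailored for precisely this setting: it certifies that $1/\det A(x)$ is completely monotone on $\operatorname{int}(\widehat{P})$, with representing measure pulled back, via the adjoint of the pencil $A$, from the Wishart measure on the self-dual cone of symmetric positive semidefinite matrices. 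Hence $\Omega_P = c/\det A(x)$ is itself CM on $\operatorname{int}(\widehat{P})$, which by Definition~\ref{def:CM_PG} says that $(\mathbb{P}^m,P)$ is a completely monotone positive geometry.

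The hard part is the first step, namely rigorously establishing the simplex-like identification $\Omega_P = c/p$ together with the positivity of $c$; this uses the uniqueness of the canonical form, the degree count $\deg p = m+1$, and holomorphicity away from $\{p=0\}$. A secondary subtlety is checking that the exponent $\alpha=1$ falls within the admissible Gindikin range for the matrix size dictated by the minimal representation; for minimal determinantal representations of degree $m+1$ hyperbolic polynomials this should follow from the shape of $\widehat{P}$ and the integrability of the Wishart density on the dual, but it deserves explicit verification.
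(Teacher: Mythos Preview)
Your proposal is correct and follows essentially the same route as the paper: identify $\Omega_{\widehat{P}}$ with a scalar multiple of $p^{-1}=\det A(x)^{-1}$ via the simplex-like and minimal-spectrahedral hypotheses, then apply Proposition~\ref{Prop:det_is_CM}. Your residual worries are non-issues: the sign of $c$ is irrelevant because Definition~\ref{def:CM_PG} asks for CM only up to an overall sign, and $\alpha=1$ always lies in the Gindikin set $\{0,\tfrac12,1,\dots\}\cup(\tfrac{m-1}{2},\infty)$ regardless of the matrix size.
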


 By simplex-like we mean the algebraic boundary of $P \subset \mathbb{P}^m$ has degree $m+1$, and by minimal spectrahedral we mean the algebraic boundary is cut out by a polynomial admitting a symmetric determinantal representation, for which $\widehat{P}$ is a hyperbolicity cone. An example is the half-pizza, a positive geometry in $\mathbb{P}^2$ discussed in Subsection~\ref{subsec:A line and a conic}.

 Finally, we compute explicit dual volume representations for certain positive geometries in the projective plane bounded by lines and conics, or by a nodal cubic. In particular, for the class of positive geometries in the projective plane bounded by lines and conics the measure can be expressed in terms of a logarithm (or equivalently, an inverse tangent function) and constitutes one of the main results of this work. For the convenience of the reader, we summarize here the computations and main results of Section~\ref{sec:Examples and computations}.
 
\begin{enumerate}

\item (One line and one conic) We compute explicitly using convolutions the Riesz measure for the canonical function of the simplex-like minimal spectrahedral positive geometry bounded by a single line and a conic, see~\eqref{mu_alpha}-\eqref{eq:mu_half_pizza} and Figures~\ref{fig:half_pizzas},~\ref{fig:linequadric}. 

\item  (Many lines and one conic) We consider certain {\it polycons}, see Definition~\ref{def:polycon_r_s}, that are certain semialgebraic sets in the projective plane bounded by lines and one conic. By a triangulation-based argument we prove the following result. 
 \begin{theorem}[Theorem~\ref{thm:polycones_CM}]
    Any polycon $(\mathbb{P}^2,P)$ bounded by lines and one conic, that is a hyperbolicity region for its algebraic boundary, is a completely monotone positive geometry, and a formula for the measure representing its canonical function is given in~\eqref{eq:mu_r_s}. 
\end{theorem}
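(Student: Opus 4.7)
The plan is a triangulation argument. First, I would decompose the polycon $P$ into finitely many elementary positive geometries whose complete monotonicity is already established. Since $P$ is bounded by $r$ lines and one conic arc, by coning from an appropriately chosen vertex (e.g.\ one of the conic--line intersection points, or an interior vertex of $\partial P$) one can partition $P$ into a collection of ordinary triangles $T_j$ (bounded by three lines) together with one or more half-pizza pieces $H_k$ (each bounded by two lines and the conic arc). Each $T_j$ is a simplex and thus a completely monotone positive geometry by equation~\eqref{eq:omega_Laplace}, with representing measure the indicator of its dual cone. Each half-pizza $H_k$ is a simplex-like minimal spectrahedron, so by Corollary~\ref{cor:simplex_det_is_CM} it is a completely monotone positive geometry, with representing measure $\mu_{H_k}$ computed explicitly in~\eqref{eq:mu_half_pizza}.

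Second, I would invoke the additivity of canonical forms under triangulation of positive geometries (a defining property established in \cite{Positive_geometries}):
\begin{equation*}
    \Omega_P \;=\; \sum_j \Omega_{T_j} \;+\; \sum_k \Omega_{H_k}.
\end{equation*}
Substituting the Laplace-transform representations of each piece and collecting contributions on the appropriate dual support yields an explicit expression for the Riesz measure $\mu_P$ of $\Omega_P$ on $\widehat{P}^*$. Each individual contribution is non-negative, so the resulting measure---namely the formula~\eqref{eq:mu_r_s}---is manifestly non-negative. By the converse direction of the BHWC theorem (Theorem~\ref{thm:BHWC}), a function that is the Laplace transform of a non-negative Borel measure supported on the dual cone is completely monotone on the cone; this establishes that $(\mathbb{P}^2,P)$ is a CM positive geometry.

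The main obstacle is the bookkeeping of supports. Each piece's measure $\mu_{P_i}$ is supported on the dual cone $\widehat{P_i}^*$, which \emph{strictly contains} $\widehat{P}^*$ (since $\widehat{P_i}\subseteq \widehat{P}$ reverses under duality). The naive formal sum $\sum_i \mu_{P_i}$ therefore lives on a set larger than $\widehat{P}^*$, and one cannot directly identify it with the Riesz measure of $\Omega_P$; the triangulation identity $\Omega_P=\sum_i \Omega_{P_i}$ holds as an identity of rational functions, but the corresponding Laplace integrals need not converge simultaneously on all of $\widehat{P}$. This is precisely where the hyperbolicity region hypothesis enters: it guarantees that the algebraic boundary polynomial of $P$ is hyperbolic with hyperbolicity cone $\widehat{P}$, so that by Theorem~\ref{thm:hyp_PDE} (applied through Theorem~\ref{thm:Riesz_p/q}) there is a \emph{unique} Riesz distribution supported on $\widehat{P}^*$ that represents $\Omega_P$. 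Matching the explicit triangulation-based expression to this unique distribution---carefully combining the densities on overlaps of the dual cones $\widehat{T_j}^*$ and $\widehat{H_k}^*$ and restricting to $\widehat{P}^*$---is the technical heart of the argument, and yields the formula~\eqref{eq:mu_r_s} as well as its non-negativity, from which complete monotonicity follows immediately.
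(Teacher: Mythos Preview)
Your triangulation runs in the wrong direction. You partition $P$ internally into pieces $P_i \subset P$, so under duality $\widehat{P}^* \subset \widehat{P_i}^*$: each piece's Riesz measure is supported on a cone \emph{strictly larger} than $\widehat{P}^*$. The sum $\sum_i \mu_{P_i}$ is then a non-negative measure on $\bigcup_i \widehat{P_i}^*$, and since all summands are non-negative there is no cancellation outside $\widehat{P}^*$. Hence this sum is \emph{not} the Riesz measure $\mu_P$, which by the hyperbolicity hypothesis and Theorem~\ref{thm:hyp_PDE} is the unique distribution with support in $\widehat{P}^*$. Already for an interval $[a,b]=[a,c]\cup[c,b]$ one has $\mathbf{1}_{\widehat{[a,c]}^*}+\mathbf{1}_{\widehat{[c,b]}^*}\neq \mathbf{1}_{\widehat{[a,b]}^*}$ (the left side equals $2$ on $\widehat{[a,b]}^*$). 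Your proposed fix---invoke uniqueness and ``restrict to $\widehat{P}^*$''---therefore cannot recover $\mu_P$, and in particular does not produce formula~\eqref{eq:mu_r_s}.

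The paper's proof avoids this by triangulating \emph{externally}: each building block $P_i$ satisfies $P \subset P_i$ (see Proposition~\ref{prop:ext_tr_con_lines}), so that $\widehat{P_i}^* \subset \widehat{P}^*$ and all the individual Laplace representations converge on $\widehat{P}$, making~\eqref{eq:measure_sum} a genuine identity of measures on $\widehat{P}^*$. The price is that the canonical-form identity can carry signs, so non-negativity of $\mu_P$ is no longer automatic. The paper organizes the external pieces as polycons of type $(1,s_i)$ (one for each maximal run of consecutive linear edges of $P$) and proves non-negativity of each summand by an induction on $s_i$ (Lemma~\ref{lemma:mu_nonnegative}), using the arctangent addition formula~\eqref{eq:arctan_addition} together with the relative position of the lines $\ell_j^*$, $\ell_{j,j+1}^*$ with respect to $Q^*$. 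That lemma is the actual work and has no counterpart in your outline. (A smaller issue: your pieces $H_k$ ``bounded by two lines and the conic arc'' have degree-four algebraic boundary and are not simplex-like, so Corollary~\ref{cor:simplex_det_is_CM} does not apply to them; the half-pizza of~\eqref{eq:mu_half_pizza} is bounded by \emph{one} line and the conic.)
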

For explicit examples and plots, see Example~\ref{eg:curvy_pentagon},~\ref{eg:curvy_k_gons} and Figures~\ref{fig:pizza_slice2},~\ref{fig:curvy_pentagon} and~\ref{fig:line_conic_polypols}.

\item (More conics) We then consider polycons in the projective plane bounded by any number of lines and conics, that are hyperbolicity regions of their algebraic boundary, and provide an algorithm for computing the measure representing their canonical function in Proposition~\ref{prop:ext_tr_con_lines_conics}, followed by Example~\ref{eg:curvy_2gon} and~\ref{eg:curvy_4gon}.

\item (Nodal cubic) Lastly, we present the computation of the measure for the hyperbolic positive geometry bounded by a nodal cubic, see Figure~\ref{fig:cubic}. This semialgebraic set is a simplex-like minimal spectrahedral cone and it is therefore a completely monotone positive geometry. We express its Riesz measure as a one-fold integral over an elliptic function, see~\eqref{eq:cubic_mu_period}.


\end{enumerate}

The measure $\mu_P$ representing the canonical function of $P$, where $P$ is a positive geometry among the examples above, presents both expected and remarkable features. Among the expected properties we have the following: $\mu_P$ is supported on the closure of the dual cone $\widehat{P}^*$, see Theorem~\ref{thm:hyp_PDE}, it is smooth on the complement of the vanishing locus of the dual variety to the algebraic boundary of $P$ and it is continuous on (the interior of) $\widehat{P}^*$, see Remark~\ref{rem:regularity_of_E}. Among the surprising features, when $P$ is a polycon bounded by lines and a single conic, we find that the argument of the inverse tangent function describing the non-constant behavior of $\mu_P$ involves the ratio of two homogeneous polynomials, whose vanishing loci have intriguing interpolation properties with the algebraic boundary of $\widehat{P}^*$ and that of $\widehat{P}$, see Figures~\ref{fig:pizza_slice2},~\ref{fig:curvy_pentagon} and~\ref{fig:line_conic_polypols}. We call such polynomials \textit{dual letters}, see Definition~\ref{def:dual_symbol}. In the case of more than one conic, in all examples we considered the measure is non-negative, which certifies that these positive geometries are completely monotone.

The rest of the paper is organized as follows. Section~\ref{sec:Review on positive geometries and complete monotonicity} reviews essential background on projective and convex geometry, positive geometries, and the concept of complete monotonicity including the Hausdorff-Widder-Choquet theorem. In Section~\ref{sec:Completely monotone positive geometries}, we introduce the main focus: completely monotone positive geometries. Subsection~\ref{subsec:Positivity and convexity} briefly discusses a related weaker notion, positive convexity. Subsection~\ref{subsec:Hyperbolic polynomials} connects complete monotonicity of rational functions to hyperbolic polynomials and their associated PDEs. Subsection~\ref{subsec:determinantal representations and spectrahedral shadows} reviews symmetric determinantal representations, spectrahedra, and their shadows, identifying a class of completely monotone positive geometries. Section~\ref{sec:Examples and computations} presents explicit computations on planar positive geometries, starting with examples involving lines and conics: from a single line and conic in Subsection~\ref{subsec:A line and a conic}, to two lines and one conic in Subsection~\ref{subsec:wo lines and a conic}, to arbitrarily many lines and one conic in Subsection~\ref{subsec:Any number of lines and a conic}, and lastly to many conics in Subsection~\ref{subsec:Any number of lines and conics}. Subsection~\ref{subsec:Cubics} focuses on the measure for a positive geometry bounded by a nodal cubic. Finally, Section~\ref{sec:Open problems} outlines open questions for future research.

\section{Review of positive geometries and complete monotonicity}
\label{sec:Review on positive geometries and complete monotonicity}

We first review basic notions such as semialgebraic sets and cones and their convexity. We then introduce the definition of positive geometry, and later the definition of a complete monotone function on a cone.

\subsection{Semialgebraic sets, Cones and Convexity}
\label{subsec:semialgebraic sets and cones}

We fix some notation and elementary definitions.
We denote by $\mathbb{P}^m$ and $\mathbb{P}^m_\mathbb{R}$ the $m$-dimensional complex and real projective spaces, respectively. Given a homogeneous polynomial $p \in \mathbb{R}[x_1,\dots,x_n]$ with real coefficients we denote by $V(p) \subset \mathbb{R}^n$ the cone given by the vanishing locus of $p$ and by $\mathbb{P}V(p) \subset \mathbb{P}_\mathbb{R}^{n-1}$ the real projective variety given by the image of $V(p) \setminus \{0\}$ under the canonical projection map $\pi: \mathbb{R}^{n} \setminus \{0\} \to \mathbb{P}_\mathbb{R}^{n-1}.$

A \emph{basic semialgebraic cone} $C$ in $\mathbb{R}^{n}$ is a subset defined by homogeneous equations and inequalities. A \emph{semialgebraic cone} is a finite boolean combination of basic semialgebraic cones. A \emph{semialgebraic set} in real projective space $\mathbb{P}_\mathbb{R}^m$ is the image of a semialgebraic cone under $\pi.$

We call a subset $P \subset \mathbb{P}_\mathbb{R}^m$ \emph{convex} if it is of the form $\mathbb{P}(C)$ for some convex set $C \subset \mathbb{R}^{m+1} \setminus \{0\}$, i.e. the image of $C$ under $\pi$. We also say that a subset $P \subset \mathbb{P}_\mathbb{R}^n$ is \textit{very compact} if there is a real hyperplane $H \subset \mathbb{P}_\mathbb{R}^n$ such that $H \cap P = \emptyset$.

We call a cone $C \subset \mathbb{R}^{n}$ \textit{pointed}, if it does not contain any line, i.e. if $C \cap (-C) = \{0\}$. If $P \subset \mathbb{P}_\mathbb{R}^m$ is very compact, then it is equal to $\mathbb{P}(C)$ where $C$ is the union of two pointed cones $\widehat{P}$ and $-\widehat{P}$. Note that $\widehat{P}$ is defined uniquely up to an overall sign. We call $\widehat{P}$ the (pointed) \textit{cone over $P$}. If $P$ is semialgebraic, $\widehat{P}$ is obtained by homogenizing the equations cutting out $P$ and requiring that the homogenizing variables are non-negative. Note that if $P$ is connected, very compact and convex, then $\widehat{P}$ is a pointed convex cone. Every projective semialgebraic set of interest in this paper is quasi-compact, and we can therefore equivalently work with affine pointed cones in one dimension higher.

Given a cone $C \subset \mathbb{R}^n$, the (open) \textit{dual cone of $C$} is defined as
    \begin{equation}
        C^* := \{y \in \mathbb{R}^{n} \, : \, \langle y,x \rangle > 0 \, , \ \forall \, x \in C \} \, .
    \end{equation}
If $C$ is a full-dimensional pointed convex cone, then so is $C^*$. If $P \subset \mathbb{P}_\mathbb{R}^m$ is very compact, we denote by $P^*$ the semialgebraic set given by $\mathbb{P}(\widehat{P}^*) \subset \mathbb{P}^m_\mathbb{R}$.

\subsection{Positive geometries}
\label{subsec:Positive geometries}

We follow the definition of positive geometry in~\cite{Positive_geometries}. See~\cite{Lam:_PG_notes} for a recent review on this subject. 

Throughout this section, let $X$ be an $m$-dimensional irreducible complex projective variety, defined over $\mathbb{R}$. Let $P$ be an $m$-dimensional quasi compact open\footnote{In the notation of~\cite{Positive_geometries}, we would have $X_{>0} = P$, and usually the underlying semialgebraic set of a positive geometry is taken to be closed. This distinction is irrelevant, since as part of the definition~\cite[Section 2.1]{Positive_geometries} the semialgebraic set $X_{\geq 0}$ is assumed to be \textit{regular}, i.e., $X_{\geq 0}$ is equal to the closure of its interior. Our choice is motivated by the fact that we rather work with open cones.} semialgebraic subset of the real points $X(\mathbb{R})$ of $X$. We fix an orientation on $P$. The \textit{algebraic boundary} $\partial_a P$ of $P$ is the Zariski closure in $X$ of the Euclidean boundary of $P$. The irreducible components of $\partial_a P$ are prime divisors $\mathcal{D}_1, \dots , \mathcal{D}_r$ on $X$, see~\cite[Lemma 3.2(a)]{Lam:_PG_notes}. We define $D_i$ as the relative interior in $\mathcal{D}_i(\mathbb{R})$ of $\mathcal{D}_i \cap \overline{P}$, where $\overline{P}$ denotes the Euclidean closure of ${P}$ in $X(\mathbb{R})$. The orientation on $P$ induces an orientation on $D_i$.

\begin{definition}\label{def:PG}
    The pair $(X,P)$ is called a \textit{positive geometry} if there exists a unique meromorphic $m$-form $\mathbf{\Omega}_{P}$ on $X$, called \textit{canonical form}, satisfying the following axioms.
    \begin{itemize}
        \item If $m>0$, $\mathbf{\Omega}_{P}$ has poles only along $\mathcal{D}_1, \dots, \mathcal{D}_r$. Moreover, $\mathbf{\Omega}_{P}$ has a simple pole along each $\mathcal{D}_i$ whose Poincaré residue $\Res_{\mathcal{D}_i}\mathbf{\Omega}_{P}$ equals the canonical form of the positive geometry $(\mathcal{D}_i, D_i)$. 
        \item If $m=0$, $P$ is a point and $\mathbf{\Omega}_P = \pm 1$. The sign is the orientation.
    \end{itemize}
\end{definition}

\begin{eg}[$X=\mathbb{P}^1$]\label{eg:interval}
    A line segment $[a,b] \subset \mathbb{R} = \{(x,1) \,  : \, x \in \mathbb{R}\} \subset \mathbb{P}_\mathbb{R}^1$ defines a positive geometry $(\mathbb{P}^1,[a,b])$ in $\mathbb{P}^1$. Its canonical form is given in the local coordinate $x$ by
    \begin{equation}\label{eq:omega_ab}
        \mathbf{\Omega}_{[a,b]} = \frac{b-a}{(x-a)(b-x)} \, dx \, .
    \end{equation}
    In fact, this form has simple poles along the boundary points of the segment $x=a$ and $x=b$. One can compute the residues of ~\eqref{eq:omega_ab}, as for example $\Res_{x=a}  \mathbf{\Omega}_{[a,b]} = 1$, and similarly $\Res_{x=b}  \mathbf{\Omega}_{[a,b]} = -1$.  
\end{eg}

More generally, every convex projective polytope $P \subset \mathbb{P}^m$ is a positive geometry~\cite[Section 6.1]{Positive_geometries} with canonical form
\begin{equation}\label{eq:pol_form}
        \mathbf{\Omega}_P(x) =  \frac{q(x)}{\ell_1(x) \cdots \ell_{r}(x)} \, dx \, ,
\end{equation}
 where $\ell_{i}$ are the linear equations cutting out the facets of $P$, and $q$ is a polynomial of degree~$r-m-1$. In ~\eqref{eq:pol_form} we chose affine coordinates $x$ on~$\mathbb{P}^m$. Given $P$, the polynomial $q$ is uniquely determined by interpolating the \textit{residual arrangement}, see~\cite[Section 4.4]{Lam:_PG_notes}.

Recently, a more general definition of positive geometry was given in~\cite{Brown:PG_Hodge}, via Hodge theory. This definition does not require a priori a choice of a real semialgebraic set representing the canonical form. On the other hand, in this paper we are interested in positivity properties of the canonical form, and hence such choice is crucial. We therefore adopt the original definition in~\cite{Positive_geometries}. Nevertheless, it would be  very interesting to determine the precise relation between the dual volume picture we provide in this paper and the definition of positive geometry in~\cite{Brown:PG_Hodge}.

\subsection{Complete monotonicity and Choquet's theorem}
\label{subsec:Complete monotoneity and Choquet's theorem}

In this section we review the notion of complete monotonicity and the Bernstein-Hausdorff-Widder-Choquet (BHWC) theorem. The latter allows us to relate dual volume representations to complete monotonicity of canonical functions of positive geometries. For this exposition we follow~\cite{Scott_2014}. We equip $\mathbb{R}^n$ with the standard Euclidean inner product $\langle \cdot , \cdot \rangle$. Let $C \subset \mathbb{R}^n$ be an open convex cone.

\begin{definition}
     A smooth function $f \, \colon \, C \rightarrow \mathbb{R}$ is termed \textit{completely monotone} (CM) if for all $r \geq 0$, all choices of vectors $v_1 , \dots, v_r \in C$, and all $x \in C$, we have
    \begin{equation}\label{eq:CM_ineq}
        (-1)^r \, D_{v_1} \cdots D_{v_r} \, f(x) \geq 0 \, ,
    \end{equation}
    where $D_{v_i}$ denotes a directional derivative. A function $f$ is termed \textit{absolutely monotone} (AM) if the inequality~\eqref{eq:CM_ineq} holds without the factor $(-1)^r$.
\end{definition}

Note that if~\eqref{eq:CM_ineq} holds for all $v_i \in S$ in some set $S \subset \mathbb{R}^n$, then by linearity and continuity it holds also for all $v_i$ in the closed convex cone generated by $S$. This justifies the assumption on $C$ being convex.

Also note that given two open convex cones $C_1, \, C_2 \subset \mathbb{R}^n$, a positive linear function $L \colon \mathbb{R}^n \rightarrow \mathbb{R}^n$ such that $L(C_1) \subset L_2$, and a CM function $f  \colon  C_2 \rightarrow \mathbb{R}^n$, then $f \circ L \colon C_1 \rightarrow \mathbb{R}$ is CM. 

Let us next recall some elementary facts about the space of CM functions. Such space is a convex cone, closed under pointwise multiplication. It is also closed under locally uniform limits. If $f$ is CM and $\Phi \, \colon \, [0, \infty) \rightarrow [0,\infty)$ is smooth AM when restricted on $(0,\infty)$, then $\Phi \circ f$ is CM. 

An equivalent characterization of completely monotone functions on $(0,\infty)$ is given the Bernstein–Hausdorff–Widder theorem~\cite{widder2015laplace}, and its multidimensional generalization for functions defined on cones was proven by Choquet~\cite{Choquet}.

\begin{theorem}[Bernstein-Hausdorff-Widder-Choquet]\label{thm:BHWC}
     A function $f \, \colon \, C \rightarrow \mathbb{R}$ is completely monotone if and only if there exists a positive measure $d\nu$ supported on $C^*$ satisfying
    \begin{equation}\label{eq:Choquet}
        f(x) = \int_{C^*} e^{- \langle x,y \rangle} \, d\nu(y) \, .
    \end{equation}
\end{theorem}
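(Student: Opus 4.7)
The proof splits into a straightforward direction and a substantially harder one, so I would organize it accordingly and lean on two classical machines: differentiation under the integral for the ``if'' part, and Choquet's integral representation theorem together with an identification of extreme rays for the ``only if'' part.

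\smallskip

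\emph{Easy direction (existence of $d\nu$ implies CM).} Fix $x \in C$ and $v_1,\dots,v_r \in C$. Because $\overline{C^*} \supset C^*$ is pointed and $f$ is the Laplace transform of a positive measure on $C^*$, the integrand $e^{-\langle x,y\rangle}$ decays exponentially on $C^*$ uniformly on compacta of $C$; this justifies differentiating under the integral. Doing so $r$ times yields
\begin{equation*}
D_{v_1}\!\cdots D_{v_r} f(x) \;=\; (-1)^r \int_{C^*} \langle v_1,y\rangle \cdots \langle v_r,y\rangle\, e^{-\langle x,y\rangle}\, d\nu(y).
\end{equation*}
Since $v_i \in C$ and $y \in C^*$ force $\langle v_i,y\rangle \ge 0$, the integrand is non-negative and the required inequality~\eqref{eq:CM_ineq} follows.

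\smallskip

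\emph{Hard direction (CM implies existence of $d\nu$).} The plan is to realize $f$ as a barycenter of pure exponentials by Choquet's theorem. First I would fix an auxiliary point $x_0 \in C$ and pass to the normalized set
\begin{equation*}
K \;=\; \{\,g : C \to \mathbb{R} \ \text{CM}\ \text{with}\ g(x_0)=1\,\}.
\end{equation*}
Standard estimates on CM functions (e.g.\ $0 \le g(x) \le g(x')$ whenever $x - x' \in \overline{C}$, plus bounds on higher derivatives forced by~\eqref{eq:CM_ineq}) show that $K$ is a convex, locally equicontinuous family, hence compact in the topology of uniform convergence on compact subsets of $C$ by Arzel\`a--Ascoli. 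Choquet's theorem then provides, for every $f \in K$, a probability measure $\lambda$ on the set $\operatorname{ext}(K)$ of extreme points with $f = \int_{\operatorname{ext}(K)} g\, d\lambda(g)$.

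\smallskip

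\emph{Identification of the extreme rays --- the main obstacle.} The crux is to prove that every extreme element of $K$ is of the form $g_y(x) = e^{-\langle x,y\rangle}$ for some $y \in \overline{C^*}$. I would argue as follows. For each $v \in C$ and $t \ge 0$ the translate $g_v^{(t)}(x) := g(x+tv)/g(tv \cdot \text{anchor})$ is still CM and normalized; the map $t \mapsto g(\,\cdot+tv)$ defines a one-parameter semigroup within a compact convex cone. An extreme element therefore satisfies a Cauchy-type functional relation $g(x+v) g(x_0) = g(x) g(x_0+v)$ for all $v \in C$, $x \in C$. Combined with smoothness this forces $\log g$ to be linear on $C$, hence $g(x) = e^{-\langle x,y\rangle}$ for some $y \in \mathbb{R}^n$. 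The complete monotonicity inequality~\eqref{eq:CM_ineq} applied to these exponentials then forces $\langle v,y\rangle \ge 0$ for every $v \in C$, i.e.\ $y \in \overline{C^*}$. Pushing forward $\lambda$ along the parametrization $y \mapsto g_y$ produces the desired positive Borel measure $d\nu$ on $\overline{C^*}$, which (after discarding the measure-zero locus $\partial C^*$ if needed, or reinterpreting the integral over the closure) satisfies~\eqref{eq:Choquet}. Uniqueness of $d\nu$ is a consequence of the injectivity of the multidimensional Laplace transform on measures of suitable growth.

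\smallskip

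The main obstacle, as indicated, is the extreme-ray analysis: compactness of $K$ and Choquet's theorem are abstract machinery, but ruling out pathological extreme points and concluding they are pure exponentials requires the semigroup/functional-equation step above, which is where the convexity of $C$ and the full strength of the CM inequalities at all orders $r$ really come in. The remainder is bookkeeping and references to \cite{widder2015laplace,Choquet}.
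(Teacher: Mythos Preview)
The paper does not prove Theorem~\ref{thm:BHWC} at all: it is stated as a classical result with references to \cite{widder2015laplace} for the one-dimensional case and to \cite{Choquet} for the cone version, and is used as a black box throughout. So there is no ``paper's own proof'' to compare against.

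Your proposal is a reasonable sketch of the standard Choquet-theory proof, which is indeed the route taken in \cite{Choquet}. The easy direction is correct as written. For the hard direction, your outline (normalize at a point, establish compactness of the resulting convex cap, apply Choquet's integral representation, identify extreme points as pure exponentials via a multiplicative functional equation) is the right architecture. A few places would need more care in a full write-up: the compactness of $K$ in the topology of local uniform convergence requires a genuine a priori bound on CM functions and their derivatives (one typically uses that $g(x+tv)$ is CM in $t$ on $(0,\infty)$ and invokes the one-dimensional theorem, or a Harnack-type estimate); the functional-equation step for extreme points should be stated more carefully (the usual argument is that for an extreme $g$ and any $v\in C$, the function $g(\cdot+v)$ is CM and dominated by a multiple of $g$, hence must be a scalar multiple of $g$, giving $g(x+v)=\varphi(v)g(x)$ and thus exponentiality); and the remark about ``discarding $\partial C^*$'' is unnecessary, since the representing measure naturally lives on $\overline{C^*}$ and the integral in~\eqref{eq:Choquet} should be read over the closed dual cone. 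None of these are fatal gaps, just places where your sketch would need tightening.
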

As we shall see, the support of $d\nu$ can be a lower-dimensional subset of $C^*$. However, in most cases in this paper, the measure is absolutely continuous with respect to Lesbesgue measure on $C^*$, that is, there exists a measurable function $\mu \, : \, C^{*} \rightarrow \mathbb{R}_{\geq 0}$ such that
\begin{equation}
    d \nu(y) = \mu(y) \, dy \,, \quad \forall \, y \in \mathbb{R}^n \, .
\end{equation}
From Theorem~\ref{thm:BHWC} it is also immediate to deduce the following.

\begin{cor}\label{cor:hol_ext}
    If $f$ is completely monotone on $C$, then it is extendable to a holomorphic function on the complex tube $C + i \, \mathbb{R}^n$ satisfying
    \begin{equation}
        | D_{v_1} \cdots D_{v_r} \, f(x+iy) | \leq (-1)^r \, D_{v_1} \cdots D_{v_r} \, f(x) \, ,
    \end{equation}
for all $r \geq 0$, $x \in C$, $y \in \mathbb{R}^n$ and $v_1 , \dots , v_r \in C$.
\end{cor}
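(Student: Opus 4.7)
The plan is to leverage the BHWC representation from Theorem~\ref{thm:BHWC} and then manipulate the resulting Laplace integral. By Theorem~\ref{thm:BHWC}, complete monotonicity of $f$ on $C$ yields a positive measure $d\nu$ on $C^*$ such that
\begin{equation*}
    f(x) = \int_{C^*} e^{-\langle x,y \rangle}\, d\nu(y), \qquad x \in C.
\end{equation*}
For $z = x + iy \in C + i\,\mathbb{R}^n$ with $x \in C$ and $y \in \mathbb{R}^n$, the candidate extension is obtained by plugging $z$ into this formula, so the first task is to verify that $\int_{C^*} e^{-\langle z,w\rangle}\, d\nu(w)$ defines a holomorphic function on the tube.

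For absolute convergence, the key observation is that $|e^{-\langle z,w\rangle}| = e^{-\langle x,w\rangle}$ for $w \in \mathbb{R}^n$, and the right-hand side is $d\nu$-integrable precisely because $f(x) < \infty$. Holomorphicity follows by a standard application of Morera's theorem or, equivalently, by differentiating under the integral sign, which is justified by dominated convergence using the same exponential bound on a compact neighborhood of any given point in the tube.

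For the derivative inequality, differentiating under the integral sign $r$ times and using that $D_{v}\,e^{-\langle z,w\rangle} = -\langle v,w\rangle\, e^{-\langle z,w\rangle}$ gives
\begin{equation*}
    D_{v_1}\cdots D_{v_r}\, f(z) = (-1)^r \int_{C^*} \langle v_1,w\rangle \cdots \langle v_r,w\rangle\, e^{-\langle z,w\rangle}\, d\nu(w).
\end{equation*}
Taking the modulus, passing it inside the integral, and using $|e^{-\langle z,w\rangle}| = e^{-\langle x,w\rangle}$ together with the key positivity fact $\langle v_i,w\rangle > 0$ for $v_i \in C$ and $w \in C^*$ yields
\begin{equation*}
    \bigl| D_{v_1}\cdots D_{v_r}\, f(z) \bigr| \leq \int_{C^*} \langle v_1,w\rangle\cdots\langle v_r,w\rangle\, e^{-\langle x,w\rangle}\, d\nu(w) = (-1)^r\, D_{v_1}\cdots D_{v_r}\, f(x),
\end{equation*}
which is precisely the claimed bound.

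Overall the argument is essentially a bookkeeping exercise once Theorem~\ref{thm:BHWC} is available; no step presents a real obstacle. The only point requiring mild care is the justification of differentiation under the integral, which only needs a local exponential dominator obtained by restricting $x$ to compact subsets of $C$ so that $\langle v_i,w\rangle^r e^{-\langle x,w\rangle}$ remains $d\nu$-integrable uniformly. Positivity of $d\nu$ and of $\langle v_i,w\rangle$ on $C \times C^*$ is what converts the trivial modulus estimate into the sharp inequality of the statement.
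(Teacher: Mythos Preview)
Your proof is correct and follows exactly the approach the paper intends: the paper states only that the corollary is ``immediate to deduce'' from Theorem~\ref{thm:BHWC} and gives no further argument, and you have simply written out the details of that deduction---extend the Laplace representation to the tube, justify holomorphicity and differentiation under the integral by dominated convergence, and use $|e^{-\langle z,w\rangle}| = e^{-\langle x,w\rangle}$ together with $\langle v_i,w\rangle \geq 0$ on $C \times C^*$ to obtain the inequality.
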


Let us see two elementary examples of application of the BHWC theorem.

\begin{eg}
    Let $f(x) = 1/x$ on $(0,\infty)$. It is immediate to check by differentiating that $f$ is CM. On the other hand, $f$ is CM by Theorem~\ref{thm:BHWC} since $f(x) = \int_{0}^\infty e^{-x \, y} dy$.
\end{eg}

\begin{eg}
    Let $f(x) = \log(x+1)/x$ on $(0,\infty)$. One can show that $f$ has a Laplace integral representation with measure given by an exponential integral
    \begin{equation}
        \mu(y) = \int_{y}^{\infty} \frac{e^{-t}}{t} \, dt \, , \quad \forall \, y \in \mathbb{R}_{+} \, .
    \end{equation}
    Since $\mu$ is non-negative on $(0, \infty)$, $f$ is CM. Note that $\log(x+1)$ is not CM, showing that if the product of two functions is CM, it is not necessarily true that also the individual factors are CM.
\end{eg}

The examples above are both one-dimensional, but we will see many multidimensional examples of CM functions later. We now connect the notion of complete monotonicity to positive geometries.

\section{Completely monotone positive geometries}
\label{sec:Completely monotone positive geometries}

In this section we first define the main protagonist of this paper: \textit{completely monotone} positive geometries. We then briefly recall a weaker notion appeared in \cite{Positive_geometries}, called \textit{positive convexity}. We then take an excursion into the world of hyperbolic polynomials, as well as hyperbolic partial differential equations and their fundamental solutions. We show in fact that hyperbolic polynomials are the right class of objects to describe the \text{dual volume} representation of canonical forms. We then move to a subclass of hyperbolic polynomials: those that admitting a symmetric determinantal representation. Their inverse powers are known to be completely monotone, and the representing measure, the \textit{Riesz measure}, is given as the pushforward of the Wishart distribution on the space of symmetric positive definite matrices. These results allow us to determine a class of completely monotone positive geometries.  
In order to relate complete monotonicity to positive geometries, we first need to pass from differential forms in projective space to functions on cones.

Let $(\mathbb{P},P)$ be a positive geometry in projective space with canonical form $\mathbf{\Omega}_P$. 
The canonical form of $P$ can be written as
\begin{equation}\label{eq:can_form_cone}
    \mathbf{\Omega}_P(x) = \frac{q(x)}{p(x)} \, dx   \, , 
\end{equation}
see for example \cite[Remark 1.14]{Brown:PG_Hodge}, 
where $p,q \in \mathbb{R}[x_1,\dots,x_{m+1}]$ are homogeneous polynomials. Since $ \mathbf{\Omega}_P$ defines a differential form on $\mathbb{P}^m$, we have that $\deg(p) = \deg(q) + m + 1$. We call $\Omega_{\widehat{P}}(x) := q(x)/p(x)$ the \textit{canonical function} of $\widehat{P}$, or of $P$, seen as a real-valued function on the open pointed cone $\widehat{P} \setminus V(p) \subset \mathbb{R}^{m+1}$. 

\begin{definition}[Completely monotone positive geometry]\label{def:CM_PG}
    We call a projective convex positive geometry $(\mathbb{P}^m,P)$ \textit{completely monotone} if the canonical function $\Omega_{\widehat{P}}$ is completely monotone on $\widehat{P}$, up to an overall choice of sign.
\end{definition}

\begin{rem}
    Note if $(\mathbb{P}^m,P)$ is a completely monotone positive geometry, then $P$ is convex and $V(p) \cap \widehat{P} = \emptyset$.
\end{rem}

\begin{eg}\label{eg:interval_CM}
    Consider the setting of Example~\ref{eg:interval}. We can express the canonical function as follows
    \begin{equation}\label{eq:Laplce_interval}
        \Omega_{\widehat{[a,b]}} = \frac{b-a  }{(x_1-a \, x_2)(b \, x_2-x_1)} = \int_{\widehat{[a,b]}^*} e^{- x \cdot y} \, dy_1 \, dy_2 \, , \quad \forall x \in \widehat{[a,b]} \, ,
    \end{equation}
    where  the dual cone is 
\begin{equation}
    \widehat{[a,b]}^* = \big\{(y_1,y_2) \in \mathbb{R}^2 \, : \, a \, y_1 + y_2 \geq 0 \, , \ b \, y_1 + y_2 \geq 0 \big\} \, .
\end{equation}
We invite the reader to verify ~\eqref{eq:Laplce_interval}. By Theorem~\ref{thm:BHWC}, $(\mathbb{P}^1,[a,b])$ is a completely monotone positive geometry.
\end{eg}

Example~\ref{eg:interval_CM} is a special case of the following more general result.
\begin{theorem}\label{thm:polytopes_CM}
    Every convex projective polytope $(\mathbb{P}^m,P)$ is a completely monotone positive geometry.
\end{theorem}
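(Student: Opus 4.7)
The plan is to apply the Bernstein-Hausdorff-Widder-Choquet theorem directly to the canonical function $\Omega_{\widehat{P}}$. Since $P \subset \mathbb{P}^m_{\mathbb{R}}$ is a convex projective polytope, the cone $\widehat{P} \subset \mathbb{R}^{m+1}$ is a pointed convex cone, so its dual $\widehat{P}^*$ is a full-dimensional open convex cone on which Lebesgue measure is a non-negative Borel measure. The main work consists of producing a Laplace representation of $\Omega_{\widehat{P}}$ supported on $\widehat{P}^*$, after which Theorem~\ref{thm:BHWC} immediately gives complete monotonicity, and hence $(\mathbb{P}^m,P)$ is completely monotone in the sense of Definition~\ref{def:CM_PG}.

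The most direct route is to invoke the identity \eqref{eq:omega_Laplace}, attributed to Gaetz in the introduction:
\[
\Omega_{\widehat{P}}(x) \;=\; \int_{\widehat{P}^*} e^{-\langle x, y\rangle} \, dy_1 \cdots dy_{m+1}, \qquad x \in \operatorname{int}(\widehat{P}).
\]
This is already in the form required by Theorem~\ref{thm:BHWC} with $d\nu = dy|_{\widehat{P}^*}$, so complete monotonicity is immediate. Alternatively, one verifies the defining inequalities \eqref{eq:CM_ineq} by differentiating under the integral sign: for any $v_1, \dots, v_r, x \in \widehat{P}$,
\[
(-1)^r D_{v_1} \cdots D_{v_r} \, \Omega_{\widehat{P}}(x) \;=\; \int_{\widehat{P}^*} \langle v_1, y\rangle \cdots \langle v_r, y\rangle \, e^{-\langle x, y\rangle} \, dy \;\geq\; 0,
\]
because $\langle v_i, y\rangle \geq 0$ whenever $v_i \in \widehat{P}$ and $y \in \widehat{P}^*$.

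For a self-contained derivation of \eqref{eq:omega_Laplace} not relying on \cite{Gaetz}, the plan is to triangulate $\widehat{P}$ into simplicial cones $\widehat{\sigma}_1, \dots, \widehat{\sigma}_N$ and use additivity of canonical forms under triangulation (see Section 3 of \cite{Positive_geometries}). For a single simplicial cone $\widehat{\sigma}$ generated by linearly independent rays $v_1, \dots, v_{m+1}$, an elementary change of variables $y = \sum_i t_i w_i$, where $\{w_i\}$ is the basis dual to $\{v_i\}$, gives
\[
\int_{\widehat{\sigma}^*} e^{-\langle x, y\rangle}\, dy \;=\; \frac{\lvert \det(v_1 \mid \cdots \mid v_{m+1}) \rvert}{\langle x, v_1\rangle \cdots \langle x, v_{m+1}\rangle},
\]
which matches the canonical function of the simplicial cone. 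Summing over the pieces of the triangulation, the spurious poles along internal walls cancel (precisely by the residue axiom of Definition~\ref{def:PG}), leaving $\Omega_{\widehat{P}}$ as the total Laplace integral over $\widehat{P}^*$.

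The one subtlety to track is the sign: canonical forms are defined up to an orientation, and one selects the sign so that $\Omega_{\widehat{P}}$ is positive on $\operatorname{int}(\widehat{P})$, which is exactly what the clause ``up to an overall choice of sign'' in Definition~\ref{def:CM_PG} permits. I do not expect any genuine obstacle beyond bookkeeping in the triangulation argument; the conceptual content is entirely absorbed by the Laplace representation \eqref{eq:omega_Laplace} together with Theorem~\ref{thm:BHWC}.
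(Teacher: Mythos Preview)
Your proof is correct and matches the paper's argument exactly: cite the Laplace representation \eqref{eq:omega_Laplace} from \cite{Gaetz} and apply Theorem~\ref{thm:BHWC} with $d\nu = dy$ on $\widehat{P}^*$. One small slip in your optional triangulation derivation: if $\widehat{\sigma}$ is \emph{generated} by rays $v_1,\dots,v_{m+1}$, then its facet normals are the dual basis $w_i$, so the canonical function (and the Laplace integral over $\widehat{\sigma}^*$) equals $\lvert\det(v_i)\rvert^{-1}\big/\prod_i\langle x,w_i\rangle$, not $\lvert\det(v_i)\rvert\big/\prod_i\langle x,v_i\rangle$; your displayed formula is right if you instead take the $v_i$ to be the facet normals.
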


\begin{proof}
By ~\cite{Gaetz}, or eq.~\eqref{eq:omega_vol}, we have the following representation for the canonical function of~$P$:
\begin{equation}\label{eq:choquet_pol}
    \Omega_{\widehat{P}}(x) = \int_{\widehat{P}^*} e^{-\langle x , y \rangle } \, dy_1 \cdots dy_{m+1} \, , \quad \forall \, x \in \widehat{P} \, .
\end{equation}
The claim follows by Theorem~\ref{thm:BHWC}, where in this case $d \mu (y) = dy$.
\end{proof}


\subsection{Positive convexity}
\label{subsec:Positivity and convexity}

In~\cite[Section 9]{Positive_geometries} the authors introduce the following class of positive geometries.

\begin{definition}[Positively convex positive geometry]\label{def:positive_convexity}
    A positive geometry $(\mathbb{P}^m,P)$ is \textit{positively convex} if its canonical function $\Omega_{\widehat{P}}$ can be taken to be a positive regular function on all $\widehat{P}$.
\end{definition}

It is clear that every completely monotone positive geometry is positively convex. In~\eqref{eq:can_form_cone}, the algebraic boundary $\partial_a P$ of $P$ is given by the hypersurface in $\mathbb{P}^m$ defined by the vanishing locus of $p$. We also define the following.

\begin{definition}[Adjoint]\label{def:adjoint}
    The polynomial $q $ in~\eqref{eq:can_form_cone} is called the~\textit{adjoint polynomial}, and the projective hypersurface $\mathcal{A}(P) \subset \mathbb{P}^m$ defied by the vanishing locus of $q$ is called the \textit{adjoint hypersurface} of $P$.
\end{definition}
The following result is immediate. 
\begin{lemma}\label{lemma:positive_convexity}
    Let $(\mathbb{P}^m,P)$ be a positive geometry. Then, $P$ is positively convex if and only if ${\mathcal{A}(P) \cap P = \emptyset}$ and ${\partial_a P \cap P = \emptyset}$. 
\end{lemma}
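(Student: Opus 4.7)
The plan is to unpack the definition of positive convexity directly into conditions on the homogeneous polynomials $p$ and $q$ appearing in the representation $\mathbf{\Omega}_P(x) = (q(x)/p(x))\,dx$ of the canonical form, and then translate these affine conditions on $\widehat{P}\subset\mathbb{R}^{m+1}$ into projective conditions on $P\subset\mathbb{P}^m$ via the projection $\pi$. Recall that $\partial_a P = \mathbb{P}V(p)$ and $\mathcal{A}(P) = \mathbb{P}V(q)$ by construction, so projective non-intersection with $P$ is equivalent to non-vanishing of the corresponding polynomial on the open cone $\widehat{P}$.

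For the forward direction, suppose $\Omega_{\widehat{P}} = q/p$ is regular and strictly positive on $\widehat{P}$ after possibly a global sign choice. Regularity forces $p$ to have no zero on $\widehat{P}$, since any zero of $p$ at a point where $q$ is nonzero would be a pole, and at a common zero the rational function $q/p$ fails to extend to a regular function along the vanishing divisor of $p$. Passing to $\mathbb{P}^m$ gives $\partial_a P \cap P = \emptyset$. Strict positivity forces $q$ to have no zero on $\widehat{P}$ either, so $\mathcal{A}(P)\cap P = \emptyset$.

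For the converse, assume $\partial_a P\cap P = \emptyset$ and $\mathcal{A}(P)\cap P = \emptyset$. Then neither $p$ nor $q$ vanishes on the pointed open cone $\widehat{P}\subset\mathbb{R}^{m+1}$, so the rational function $q/p$ is defined and nonzero at every point of $\widehat{P}$, hence is a regular, non-vanishing function there. Since $\widehat{P}$ is an open convex cone, it is connected, so $q/p$ has constant sign on $\widehat{P}$ by continuity. Recalling from Subsection~\ref{subsec:semialgebraic sets and cones} that $\widehat{P}$ is only well-defined up to an overall sign, we may replace $\widehat{P}$ by $-\widehat{P}$ if necessary; since $p$ and $q$ are homogeneous and $\deg p - \deg q = m+1$, this changes the sign of $q/p$ by $(-1)^{m+1}$, which together with the freedom in the overall orientation of $P$ in Definition~\ref{def:PG} allows us to arrange $\Omega_{\widehat{P}} > 0$ on $\widehat{P}$. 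Thus $P$ is positively convex.

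The argument is elementary, so there is no real obstacle; the only point requiring some care is the bookkeeping of signs, since both the pointed cone $\widehat{P}$ and the canonical form $\mathbf{\Omega}_P$ carry an unavoidable sign ambiguity (the former from the choice of affine lift of $P$, the latter from orientation), and one must check that these suffice to turn a constant-sign nowhere-zero regular function into a strictly positive one in the sense of Definition~\ref{def:positive_convexity}.
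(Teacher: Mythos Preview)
The paper declares the lemma ``immediate'' and gives no proof; your argument is the natural unpacking of the definitions and is correct in substance.

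One small slip worth fixing: in the converse direction you assert that $\widehat{P}$ is an open \emph{convex} cone, but convexity of $P$ (and hence of $\widehat{P}$) is not part of the hypotheses here---indeed, the paper's Figure~\ref{fig:non_convex_polypols} exhibits a positively convex positive geometry that is not convex. What you actually use is only that $\widehat{P}$ is \emph{connected}, which follows because $P$ is connected (implicit in the standard setup) and $\widehat{P}$ is the cone over it. With ``convex'' replaced by ``connected'' the argument goes through unchanged.
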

In particular, if $P$ is positively convex, then it is equal to a single connected component of $\mathbb{P}^m_\mathbb{R} \setminus \partial_a P(\mathbb{R})$.

\begin{eg}
    By Lemma~\ref{lemma:positive_convexity} it follows that a projective polytope is positively convex if and only if it is convex. This is not the case for non-linear positive geometries, as one can see from the last example in Figure~\ref{fig:non_convex_polypols}.
\end{eg}

\begin{figure}[t]
\begin{minipage}[c]{0.33\linewidth}
\includegraphics[width=\linewidth]{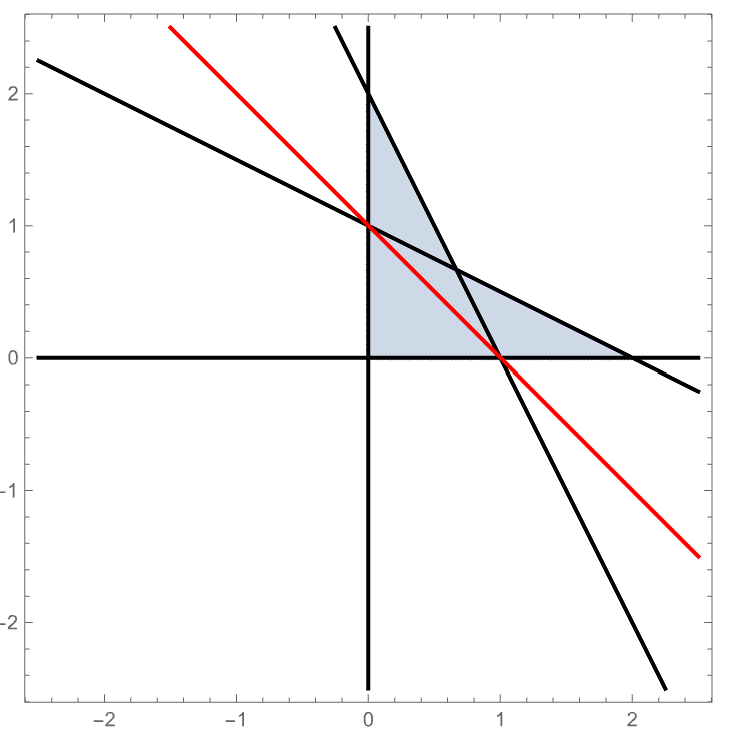}
\end{minipage}
\hfill
\begin{minipage}[c]{0.33\linewidth}
\includegraphics[width=\linewidth]{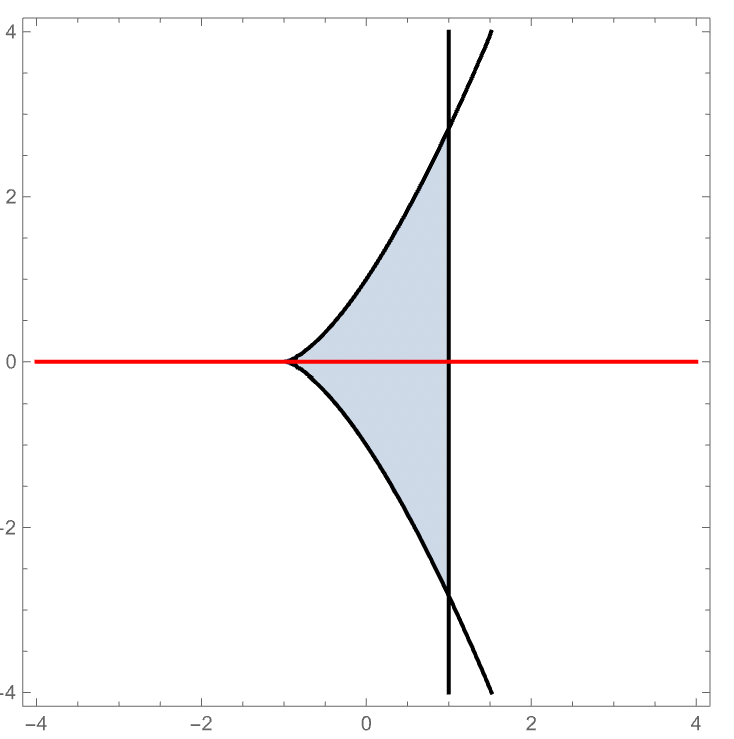}
\end{minipage}%
\hfill
\begin{minipage}[c]{0.33\linewidth}
\includegraphics[width=\linewidth]{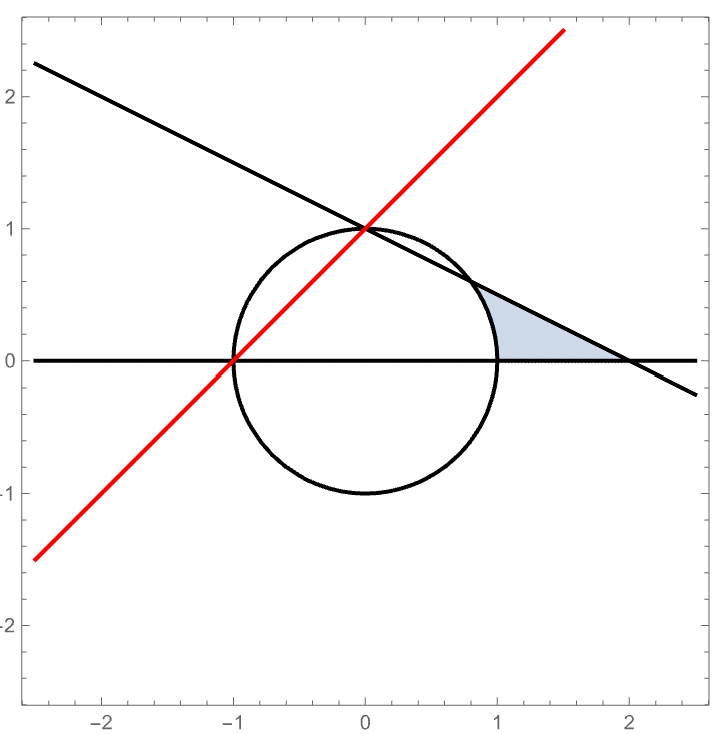}
\end{minipage}%
\caption{Three positive geometries in $\mathbb{P}^2$, given in an affine chart by the shaded regions. The components of the algebraic boundary are colored in black, while the adjoint hypersurface in red. The first two examples are not positively convex, as the algebraic boundary or the adjoint hypersurface intersect the interior of the semialgebraic set. This is not the case for the last example, which is in-fact positively convex, but not convex. None of these examples is a completely monotone positive geometry, since any such is convex.}
\label{fig:non_convex_polypols}
\end{figure}

\subsection{Hyperbolic polynomials}
\label{subsec:Hyperbolic polynomials}

At the beginning of this section, we defined completely monotone positive geometries. More generally, we can ask when does the canonical function~\eqref{eq:can_form_cone} of a positive geometry admit an integral representation as~\eqref{eq:Choquet}, where $C= \widehat{P}$ and $f=\Omega_{\widehat{P}}$, for any (not necessarily positive) measure $\mu$? If it exists, we call such an integral a \textit{dual volume} representation of the canonical function. In principle, ~\eqref{eq:can_form_cone} is a special case of an inverse Fourier-Laplace transform, in our setting of a rational function. 
The requirement that the transform is supported on the dual cone then naturally yields to the notion of hyperbolic polynomials and their associated hyperbolic partial differential equations with constant coefficients, which we review presently. 
\begin{definition}[Hyperbolic polynomial]\label{def:hyperb_pol}
    A homogeneous polynomial $p \in \mathbb{R}[x_1, \dots ,x_n]$ is \textit{hyperbolic} with respect to a vector $e \in \mathbb{R}^n$ if $p(e) \geq 0$ and, for any $x \in \mathbb{R}^n$, the univariate polynomial $t \rightarrow p(t \, e + x)$ has only real zeros. Let $C$ be the connected component of the set $\mathbb{R}^n \setminus V(p)$ that contains $e$. If $p$ is hyperbolic for $e$, then it is hyperbolic for all vectors in $C$. In that case, $C$ is an open convex cone, called the \textit{hyperbolicity cone} of $p$. Equivalently, a homogeneous polynomial $p \in \mathbb{R}[x_1,\dots,x_n]$ is hyperbolic with hyperbolicity cone $C$ if and only if $p(z) \neq 0$ for any vector $z$ in the tube domain $C + i \, \mathbb{R}^n$ in the complex space $\mathbb{C}^n$. We call a real projective hypersurface $\mathbb{P} V(p) \subset \mathbb{P}^{n-1}_\mathbb{R}$ \textit{hyperbolic} with \textit{hyperbolicity region} $\mathbb{P}(C)$ if $p \in \mathbb{R}[x_1,\dots,x_n]$ is hyperbolic with hyperbolicity cone $C$.
\end{definition}


\begin{eg}
    The polynomial $p(x) = x_2(x_3^2-x_2^2-x_1^2)$ is hyperbolic, as one can visually check in Figure~\ref{fig:half_disk}.
\end{eg}

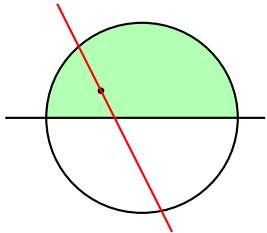
\begin{figure}[t]
    \centering
\begin{tikzpicture}[scale=1.8]

  \fill[green!30] (-1,0) arc[start angle=180, end angle=0, radius=0.7] -- (1,0) -- (-1,0) -- cycle;

  \draw[thick] (-1,0) arc[start angle=180, end angle=540, radius=0.7];
  \draw[thick] (-1.3,0) -- (0.6,0);

  \coordinate (P) at (-0.6,0.2);
  \filldraw[black] (P) circle (0.02);

  \coordinate (Q1) at ($(P)!-1.3!(-1,1)$); 
  \coordinate (Q2) at ($(P)!0.8!(-1,1)$);  

  \draw[red, thick,name path=redline] (Q1) -- (Q2);

\end{tikzpicture}
\caption{In green is the hyperbolicity region of $p(x_1,x_2,x_3)=x_2(x_3^2-x_2^2-x_1^2)$ on the affine slice $x_3=1$, and in black the vanishing locus of $p$. Pick any point in the green region. Then, any line through the point intersects $V(p)$ in three real points, counting multiplicities. }
    \label{fig:half_disk}
\end{figure}

The theory of hyperbolic polynomials has its origin
in the theory of partial differential equations, and is connected with the well-posedness of the Cauchy problem~\cite{Garding_1959,1Garding_1970,hormander1}. We briefly review this connection following~\cite{Guler_Hyperbolic_pol}. Given a homogeneous polynomial $p \in \mathbb{C}[x_1 ,\dots,x_n]$, we associate to it a partial differential operator with constant coefficients $p(-i \partial)$, which is the Fourier transform of $p$, obtained from $p$ by replacing $x_i$ with $-i  \partial_i$.
\begin{definition}[Fundamental solution]
    A fundamental solution to the partial differential equation (PDE) with constant coefficients associated to $p$ is a distribution $E$ on $\mathbb{R}^n$ satisfying the distributional equation
\begin{equation}\label{eq:fund_sol}
    p(-i  \partial) \,  E = \delta \, ,
\end{equation}
where $\delta$ is the Dirac measure supported at the origin in $\mathbb{R}^n$. The support of $E$ is called the \textit{propagation cone}.
\end{definition}

Morally, the fundamental solution $E$ in ~\eqref{eq:fund_sol} is computed as the inverse Fourier transforms of $p^{-1}$, where the latter is understood as an appropriately regularized distribution. This leads to the so called Borovikov's formulae in [Section 6.2 \cite{gelʹfand1968generalized}].
In the case where $p$ is hyperbolic, we have a particularly nice result.

\begin{theorem}[Theorem 2.2 \cite{Guler_Hyperbolic_pol}]\label{thm:hyp_PDE}
    Let $p \in \mathbb{R}[x_1,\dots,x_n]$ be a hyperbolic polynomial with hyperbolicity cone $C$. Then there exists a unique fundamental solution $E$ to~\eqref{eq:fund_sol} with support~$C^*$, and it is given by
    \begin{equation}\label{eq:fund_sol_int_repr}
        E(y) = (2 \pi )^{-n} \int_{\mathbb{R}^n} e^{i\langle y, \xi \rangle} \,  p_{-}(\xi)^{-1} \, d\xi \, , \quad \forall \, y \in \mathbb{R}^n \, , 
    \end{equation}
    where $p_{-}(\xi)^{-1} := \lim_{t \rightarrow 0^+} p( \xi - i~ t ~e)$ for any $e \in C$.
    Thus, the fundamental solution $E(y)$ is the inverse Fourier transform of the distribution $p_{-}^{-1}$.
\end{theorem}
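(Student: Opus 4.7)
The plan is to construct $E$ through the explicit Fourier formula \eqref{eq:fund_sol_int_repr}, and then to read off both its support and its uniqueness from the Paley--Wiener principle for tube domains combined with the holomorphic extension of $1/p$ guaranteed by hyperbolicity.

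First I would exploit the equivalent formulation given at the end of Definition~\ref{def:hyperb_pol}: $p(z) \neq 0$ for $z \in C + i\mathbb{R}^n$, which by homogeneity of $p$ (rotation by $-i$) is the same as $p(\xi - ite) \neq 0$ for every $\xi \in \mathbb{R}^n$, $t > 0$, and $e \in C$. Hence $F_t(\xi) := p(\xi - ite)^{-1}$ is smooth on $\mathbb{R}^n$ and is the restriction of a function holomorphic on the tube $T := \mathbb{R}^n - iC$. To take the boundary value $t \to 0^+$ I would establish uniform polynomial bounds of the form $|p(\xi - ite)| \geq c(K)(1+|\xi|)^{-N_1} t^{N_2}$ for $e$ in a compact subset $K \subset C$ and $0 < t \leq 1$. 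These can be obtained from Seidenberg--Tarski, or directly from the hyperbolic factorization $p(\xi + se) = p(e) \prod_j(s + \lambda_j(\xi, e))$ with $\lambda_j \in \mathbb{R}$. They imply that $\{F_t\}$ is uniformly polynomially bounded in $\mathcal{S}'(\mathbb{R}^n)$, so the limit $p_-^{-1} := \lim_{t \to 0^+} F_t$ exists as a tempered distribution and $E := \mathcal{F}^{-1}(p_-^{-1})$ is a well-defined tempered distribution agreeing with \eqref{eq:fund_sol_int_repr}.

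The PDE becomes automatic on the Fourier side: $\widehat{p(-i\partial)E} = p(\xi) \hat E = p(\xi) \cdot p_-^{-1}(\xi)$, and since $p(\zeta) p(\zeta)^{-1} \equiv 1$ on the tube, continuity of multiplication by the polynomial $p$ in $\mathcal{S}'$ gives $p \cdot p_-^{-1} = 1$, whose inverse Fourier transform is $\delta$. For the support statement I would invoke the Paley--Wiener theorem for tempered boundary values of tube-holomorphic functions (H\"ormander, Vol.~I, Thm.~7.4.2): a tempered distribution whose Fourier transform extends holomorphically to the tube $\mathbb{R}^n - iC$ with tempered growth is supported in the closed dual cone $\overline{C^*}$. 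Applied to our $E$ this yields $\operatorname{supp}(E) \subset \overline{C^*}$.

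For uniqueness, if $E'$ is another fundamental solution with support in $\overline{C^*}$, then $u := E - E'$ solves $p(-i\partial)u = 0$ and has support in $\overline{C^*}$. Its Fourier transform $\hat u$ is annihilated by multiplication by $p$, hence supported on the real zero locus of $p$, while simultaneously being the tempered boundary value of a distribution holomorphic on the tube $\mathbb{R}^n - iC$; the edge-of-the-wedge principle forces $\hat u = 0$, so $E' = E$. The main obstacle is the analytic input in the second step: controlling $p(\xi - ite)^{-1}$ uniformly in $\xi$ as $t \to 0^+$ so as to secure that the boundary value is tempered and that it meets the growth hypothesis of Paley--Wiener. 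Once this analytic ingredient is in hand, every remaining claim is a formal consequence of Fourier duality and tube-domain function theory.
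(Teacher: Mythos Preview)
The paper does not provide its own proof of this theorem: it is quoted verbatim as Theorem~2.2 of \cite{Guler_Hyperbolic_pol} (with the underlying analysis going back to G\r{a}rding and H\"ormander), and the surrounding text only reviews the statement and its consequences. So there is no in-paper proof to compare against.

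Your outline is essentially the classical argument and is correct. One minor comment: your uniqueness step via edge-of-the-wedge is more delicate than necessary, since you have not argued that a solution $u$ of $p(-i\partial)u=0$ supported in $\overline{C^*}$ is automatically tempered. The standard and cleaner route is purely on the space side: because $C$ is open, $\overline{C^*}$ is a proper closed cone, so the convolution $u\ast E$ is well-defined for any two distributions supported there; then $u = u\ast\delta = u\ast p(-i\partial)E = p(-i\partial)(u\ast E) = 0$. With that substitution your sketch is complete.
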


There is an integral representation analogous  to~\eqref{eq:fund_sol_int_repr} for the inverse Laplace transform $\mu_\alpha$ of $p^{-\alpha}$ with hyperbolic $p$, see G$\mathring{\rm a}$rding~\cite[Theorem 3.1]{Garding_1951} and \cite[Theorem 4.8]{Pos_Certif}. The function $\mu_\alpha$ is called the \textit{Riesz measure} for $p^{-\alpha}$. Note that for $\alpha=1$ the Riesz measure is precisely the fundamental solution in~\eqref{eq:fund_sol_int_repr}.

\begin{rem}[Regularity of the fundamental solution]\label{rem:regularity_of_E}
The study of regularity of fundamental solutions to hyperbolic PDEs with constant coefficients goes back to G$\mathring{\rm a}$rding~\cite{Garding_1951,1Garding_1970}. It turns out that ~\eqref{eq:fund_sol_int_repr} is smooth on $\mathbb{R}^n \setminus {\rm WF}(E)$, where the \textit{wave front set} of $E$ is contained in the real cone over the projective dual variety\footnote{If a complex projective variety is reducible, the dual variety is defined as the union of the projective duals of every irreducible component.} to $\mathbb{P} V(p) \subset \mathbb{P}_\mathbb{R}^n$~
\cite[Theorem 10.2.11 and 12.6.2]{hormander2}. In particular, $E$ is smooth outside a codimension-one locus. Moreover, Wagner~\cite[Proposition 2]{wagner4} proves that if the degree of $p$ is greater than $n$ and the real projective variety $V(p)$ is smooth, then the fundamental solution is continuous.
In general, studying the regularity of $E$ on ${\rm WF}(E)$ is an open problem.
\end{rem}

\begin{eg}[Wave equation]\label{eg:wave_eq}
    The prototypical example of a hyperbolic polynomial is the Lorentz form $p(x) = x_1^2 -x_2^2 - \dots - x_n^2$. The region in $\mathbb{R}^n$ on which $p$ is positive is the union of two open cones, and the hyperbolicity cone can be taken to be any of the two, e.g. 
    $C:=\{x \in \mathbb{R}^n \, : \, p(x) > 0 \, , \ x_1 > 0 \} $.
    By~\cite[Proposition 5.6]{Scott_2014} $p^{-\alpha}$ is completely monotone on $C$ for every $\alpha > (n-2)/2$, and its Riesz kernel is given by
    \begin{equation}
        \mu_{\alpha}(y) = \Big(\pi^{\frac{n-2}{2}} \, 2^{2\alpha -1} \, \Gamma(\alpha)  \,\Gamma\Big(\alpha - \frac{n-2}{2} \Big)\Big)^{-1} \, (y_1^2 -y_2^2 - \dots - y_n^2)^{\alpha - \frac{n}{2}} \,  , \ \forall \, y \in C^* \, .
    \end{equation}
\end{eg}

\begin{rem}[The support]
    The fact that the fundamental solution $E$ in ~\eqref{eq:fund_sol} is supported on a proper cone is peculiar to hyperbolic polynomials. For example, for $n \geq 3$ the fundamental solution to the elliptic operator given by the Laplacian $\partial_1^2-\sum_{i=2}^n \partial_i^2$ is $E(x) = |x|^{2-n} /((n-2) \, \omega_n) $ for $x \in \mathbb{R}^n \setminus \{0\}$, where $ \omega_n$ is the surface area of the $(n-1)$-dimensional sphere in $\mathbb{R}^n$. 
\end{rem}

We began this subsection motivating the relevance of hyperbolic polynomials in relation to the dual volume representation of canonical forms. We now show that this class of polynomials is related to complete monotonicity, when the function under consideration is a power of a rational function. In fact, we extend~\cite[Theorem 4.7]{Pos_Certif} to the following result.

\begin{theorem}\label{thm:rational_fct_CM}
    Let $p,q \in \mathbb{R}[x_1, \dots ,x_n]$ be homogeneous coprime polynomials that are positive on an open convex cone $C \subset \mathbb{R}^n$ and such that $f = (q/p)^\alpha$ is completely monotone on $C$ for some $\alpha>0$. Then $p$ is hyperbolic and its hyperbolicity cone contains~$C$.
\end{theorem}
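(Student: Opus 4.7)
The plan is to show that $p$ does not vanish anywhere on the tube domain $U := C + i\mathbb{R}^n$; by the equivalent characterization of hyperbolicity recalled in Definition~\ref{def:hyperb_pol}, this is exactly what it means for $p$ to be hyperbolic with hyperbolicity cone containing $C$. The approach extends the strategy of \cite[Theorem 4.7]{Pos_Certif} (which handles the case $q\equiv1$), the new ingredient being the treatment of the numerator $q$.

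The first step is to invoke Corollary~\ref{cor:hol_ext}: since $f = (q/p)^\alpha$ is CM on $C$, it extends to a single-valued holomorphic function $F$ on $U$ with $|F(x+iy)| \leq f(x)$ for $x \in C$, $y \in \mathbb{R}^n$, so $F$ is in particular locally bounded on $U$. Since $F$ and the principal branch of $(q/p)^\alpha$ are holomorphic and agree on the totally real set $C \subset U$, they coincide on some open complex neighborhood of $C$ in $U$ by the identity principle.

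For the contradiction, assume $V(p) \cap U \neq \emptyset$. Because $p$ and $q$ are coprime in $\mathbb{R}[x_1,\dots,x_n]$ they remain coprime in $\mathbb{C}[x_1,\dots,x_n]$, so $V(p) \cap V(q)$ has complex codimension at least two in $\mathbb{C}^n$, whereas $V(p) \cap U$ has pure complex codimension one. Hence $(V(p) \setminus V(q)) \cap U$ is nonempty; fix a point $z_0$ in it. Since $V(p) \cup V(q)$ has real codimension two in $\mathbb{C}^n$, a generic smooth path $\gamma \colon [0,1] \to U$ joining some $x_\ast \in C$ to $\gamma(1) = z_0$ can be chosen so that $\gamma([0,1))$ avoids $V(p) \cup V(q)$ entirely. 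Continuing the principal branch of $(q/p)^\alpha$ holomorphically along $\gamma$ yields, on a thin simply-connected tubular neighborhood of $\gamma([0,1))$ in $U \setminus (V(p) \cup V(q))$, a specific branch that agrees with $F$ there (by the identity theorem, starting from the neighborhood of $x_\ast$ where the two already coincide).

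To conclude, observe that $|F(\gamma(t))| = |q(\gamma(t))/p(\gamma(t))|^\alpha$ for $t \in [0,1)$, an identity which is independent of the chosen branch. As $t \to 1$, $p(\gamma(t)) \to 0$ while $|q(\gamma(t))| \to |q(z_0)| > 0$, forcing $|F(\gamma(t))| \to \infty$ and contradicting the continuity of $F$ at $z_0 \in U$. Therefore $V(p) \cap U = \emptyset$, which via a short homogeneity argument translates to $p(x + ty) \neq 0$ for all $x \in \mathbb{R}^n$, $y \in C$, and $t \in \mathbb{C} \setminus \mathbb{R}$, giving hyperbolicity of $p$ with hyperbolicity cone containing~$C$. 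The main subtlety throughout is the multivaluedness of $(q/p)^\alpha$ when $\alpha$ is non-integer; this is circumvented by following a single branch along a one-parameter path and then passing to the branch-independent modulus at the point of contradiction.
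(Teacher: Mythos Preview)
Your proof is correct and follows the same strategy as the paper: invoke the holomorphic extension from Corollary~\ref{cor:hol_ext}, match it with $(q/p)^\alpha$ via the identity principle, and force a blowup at a point of $V(p)\setminus V(q)$ in the tube, whose existence is guaranteed by coprimality. The only organizational difference is that the paper first argues $S_p\subset S_q$ by expanding a tube $U+iB_R$ from the real slice and then invokes coprimality, whereas you invoke coprimality first to locate $z_0$ and analytically continue along a path to it; your explicit handling of the branch ambiguity for non-integer $\alpha$ via the branch-independent modulus is, if anything, slightly tidier.
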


\begin{proof}
    Since $f$ is completely monotone, by Corollary~\ref{cor:hol_ext} it can be extended to a holomorphic function on the tube domain $T= C+i \, \mathbb{R}^n$, which we still denote by $f$. Note that $T$ is open in $\mathbb{C}^n$. We denote by $S_p$ and $S_q$ the intersection of $T$ with the vanishing locus of $p$ and $q$ in $\mathbb{C}^n$, respectively. Since $f$ is holomorphic\footnote{We follow the argument in the proof of \cite[Corollary 2.3]{Scott_2014}. Take a simply connected open subset $U$ of $C$ whose closure is a compact subset of $C$ and satisfying $(U+ i \, \mathbb{R}^n) \cap S_p \neq 0$. We then take a tubular neighborhood $D = U + i \, B_R$ for some $R>0$, where $B_R$ is the open ball in $\mathbb{R}^n$ of radius $R>0$, such that $D \subset T \setminus S_p$ and $\overline{D} \cap S_p \neq \emptyset$. Then, by the identity theorem for holomorphic functions and holomorphic extensions, $f \big|_{D}$ coincides with $(q/p)^\alpha \big|_{D}$, as both functions are equal on $U$. The latter is however singular when approaching a point in $\overline{D} \cap S_p$, and therefore so is $f$, contradicting holomorphicity of $f$ on $T$.} on $T$, we must have that $S_p = \emptyset$ or $\emptyset \neq S_p \subset S_q$. The latter condition contradicts the fact that $p$ and $q$ are coprime\footnote{Note that if $\emptyset \neq S_p \subset S_q$, then $q$ vanishes on a (Euclidean) open subset of the vanishing locus $V$ of $p$ in $\mathbb{C}^n$. Therefore $q$ vanishes on all $V$, which means that $p$ divides $q$.}, hence $S_p = \emptyset$, i.e. $p$ never vanishes on $T$. This is equivalent to $p$ being hyperbolic with hyperbolicity cone containing $C$.
\end{proof}

\begin{cor}\label{cor:CM_PG_hyperbolic}
If $(\mathbb{P}^m,P)$ is a completely monotone positive geometry, then $\partial_a P$ is cut out by a hyperbolic polynomial with hyperbolicity region equal to $P$.
\end{cor}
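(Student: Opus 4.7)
The plan is to apply Theorem~\ref{thm:rational_fct_CM} directly to the canonical function $\Omega_{\widehat{P}} = q/p$ associated via equation~\eqref{eq:can_form_cone} to $P$, where $p$ is the defining polynomial of $\partial_a P$ and $q$ is the adjoint. First I would verify the three hypotheses of that theorem for the cone $C = \widehat{P}$ and exponent $\alpha = 1$. Coprimality of $p$ and $q$ is automatic: any common irreducible factor would cancel a simple pole of $\mathbf{\Omega}_P$ along the corresponding component of $\partial_a P$, contradicting Definition~\ref{def:PG}. By the remark after Definition~\ref{def:CM_PG}, $V(p) \cap \widehat{P} = \emptyset$, so $p$ has constant nonzero sign on $\widehat{P}$; after simultaneously multiplying $p$ and $q$ by $\pm 1$ (which preserves the ratio), we may assume $p > 0$ there. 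Since by hypothesis some $\epsilon \in \{\pm 1\}$ makes $\epsilon \, q/p$ a nontrivial CM function on $\widehat{P}$, the BHWC representation (Theorem~\ref{thm:BHWC}) with strictly positive integrand $e^{-\langle x, y \rangle}$ forces $\epsilon \, q/p$ to be strictly positive at every point of $\widehat{P}$, since otherwise the representing measure would vanish and the function would be identically zero. Absorbing $\epsilon$ into $q$, we obtain $q > 0$ on $\widehat{P}$.

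With the hypotheses in place, Theorem~\ref{thm:rational_fct_CM} gives that $p$ is hyperbolic and its hyperbolicity cone $C_p$ contains $\widehat{P}$. The second step is to upgrade this containment to the equality $C_p = \widehat{P}$, whence the corollary follows upon projectivizing, since $P = \mathbb{P}(\widehat{P}) = \mathbb{P}(C_p)$ is then the hyperbolicity region of $\mathbb{P}V(p)$ in the sense of Definition~\ref{def:hyperb_pol}. Now $\widehat{P}$ is open and connected (because $P$ is convex) and disjoint from $V(p)$, so it lies in a unique connected component of $\mathbb{R}^{m+1} \setminus V(p)$, which must be $C_p$. For the reverse inclusion, note that the Euclidean boundary of $\widehat{P}$ is contained in the homogenization of $\partial_a P$, hence in $V(p)$. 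Any path in $C_p$ leading from a point of $\widehat{P}$ to a hypothetical point of $C_p \setminus \widehat{P}$ would therefore have to cross $V(p)$, contradicting $C_p \cap V(p) = \emptyset$. Thus $\widehat{P}$ is itself a connected component of $\mathbb{R}^{m+1} \setminus V(p)$ and coincides with $C_p$.

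The main obstacle is precisely the second step: Theorem~\ref{thm:rational_fct_CM} alone yields only a containment of cones, and extracting the sharper statement that the hyperbolicity region equals $P$ requires the topological input that the Euclidean boundary of $\widehat{P}$ lies inside $V(p)$. This input is however built into the setup, because $\partial_a P$ is by definition the Zariski closure of the Euclidean boundary of $P$ and $p$ is its defining polynomial; combined with convexity of $\widehat{P}$, this is exactly what is needed. Everything else is bookkeeping with signs and the coprimality that is forced by the residue axiom of a positive geometry.
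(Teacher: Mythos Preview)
Your proposal is correct and follows essentially the same approach as the paper: apply Theorem~\ref{thm:rational_fct_CM} to obtain hyperbolicity of $p$ with hyperbolicity cone containing $\widehat{P}$, then upgrade the containment to equality via the topological fact that $\partial\widehat{P}\subset V(p)$. The only difference is cosmetic: where the paper cites \cite[Proposition~2.9]{sinn2015algebraic} to conclude that $P$ is a union of connected components of $\mathbb{P}^m_\mathbb{R}\setminus V(p)$, you supply the path argument directly; and you are more explicit than the paper in verifying the positivity and coprimality hypotheses of Theorem~\ref{thm:rational_fct_CM}.
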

\begin{proof}
    Let $p$ be the denominator of $\Omega_{\widehat{P}}$ as in ~\eqref{eq:can_form_cone}. Then, $\partial_a P$ is cut out by $p$, and by Theorem~\ref{thm:rational_fct_CM} we are left to show that $\widehat{P}$ is equal to a hyperbolicity region of $p$. A hyperbolicity region is necessarily equal to a connected component $P^{'}$ of $\mathbb{P}^m_\mathbb{R} \setminus V(p)$. By Theorem~\ref{thm:rational_fct_CM} we have that $P \subset P^{'}$. 
    By the argument in the proof of~\cite[Proposition 2.9]{sinn2015algebraic}, $P$ is equal to the union of finitely many connected components in $\mathbb{P}_\mathbb{R}^m \setminus V(p)$. Since $P \subset P^{'}$, it follows that $P=P^{'}$.
\end{proof}

This motivates the following definition.
\begin{definition}[Hyperbolic positive geometry]
    We call a positive geometry $(\mathbb{P}^m, P)$ \textit{hyperbolic} if the algebraic boundary $\partial_a P(\mathbb{R}) $ is a hyperbolic hypersurface with a hyperbolicity region equal to $P$.
\end{definition}

\begin{rem}[Hyperbolicity of the numerator]
    In the proof of Theorem~\ref{thm:rational_fct_CM}, one could ask when is $S_q = \emptyset$, i.e. when is $q$ also hyperbolic. Note that if this is the case, then the hyperbolicity cone of $q$ contains $C$. In the context of positive geometries, the question translates to: when is the adjoint hypersurface of a positive geometry hyperbolic with hyperbolicity cone containing the positive geometry? In~\cite[Theorem 3.8]{Polypols} the authors prove that this is the case for every convex polygon in the projective plane. This does not generalize to higher dimensions, not even in the case of convex polytopes, see~\cite[Example 3.13]{Polypols}.
\end{rem}

For what concerns the representing measure, formula~\eqref{eq:fund_sol} provides in principle a way of computing the Riesz measure for $p^{-1}$ when $p$ is a homogeneous hyperbolic polynomial. We extend this to rational functions, by relying on elementary properties of the Fourier transform with respect to differentiation, see for example \cite[Chapter II]{gelʹfand1968generalized} or~\cite[Chapter VII]{hörmander1983analysis}. 

\begin{theorem}\label{thm:Riesz_p/q}
    Let $p,q \in \mathbb{R}[x_1,\dots,x_n]$ be homogeneous polynomials and assume that $p$ is hyperbolic with hyperbolicity cone equal to $C$. Then,
    \begin{equation}
        \frac{q(x)}{p(x)} = \int_{C^*} e^{-\langle x,y \rangle} \, \mu(y) \, dy \, , \quad \forall \, x \in C \, ,
    \end{equation}
    where $\mu$ is a Schwartz distribution on $\mathbb{R}^n$ with support on $C^*$ given by, see~\eqref{eq:fund_sol} in Theorem~\ref{thm:hyp_PDE},
    \begin{equation}\label{eq:fund_sol_int_repr_2}
        \mu(y) = q(\partial) \, E(y) = (2 \pi )^{-n} \int_{\mathbb{R}^n} e^{i\langle y, \xi \rangle} \,  q(i\xi) \, p_{-}(\xi)^{-1} \, d\xi \, , \quad \forall \, y \in \mathbb{R}^n \, .
    \end{equation}
\end{theorem}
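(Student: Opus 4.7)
The strategy is to construct $\mu$ by applying $q(\partial)$ to the fundamental solution $E$ from Theorem~\ref{thm:hyp_PDE}, and then verify both the stated integral representation and the Laplace identity. First, I would set $\mu := q(\partial) E$, interpreted distributionally. Since $E$ is the inverse Fourier transform of the tempered distribution $p_-^{-1}$, it lies in $\mathcal{S}'(\mathbb{R}^n)$, and constant-coefficient differential operators preserve $\mathcal{S}'$, so $\mu$ is also tempered. Locality of $q(\partial)$ gives $\mathrm{supp}(\mu) \subseteq \mathrm{supp}(E) \subseteq C^*$, which is the support claim.

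For the explicit integral formula~\eqref{eq:fund_sol_int_repr_2}, I would Fourier-transform the identity $\mu = q(\partial) E$, using the rule $\widehat{\partial^{\alpha} E}(\xi) = (i\xi)^{\alpha}\,\hat{E}(\xi)$ to obtain $\hat{\mu}(\xi) = q(i\xi)\,p_-(\xi)^{-1}$. Applying the inverse Fourier transform then yields the stated integral, requiring no input beyond Theorem~\ref{thm:hyp_PDE}.

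The essential analytic step is the Laplace identity. Fix $x \in C$. Since $y \mapsto \langle x, y\rangle$ is strictly positive on $C^* \setminus \{0\}$ and in fact bounded below by $c\,\|y\|$ on $C^*$ for some $c = c(x) > 0$, the function $e^{-\langle x,\cdot\rangle}$ decays exponentially on $\mathrm{supp}(\mu)$, so the distributional pairing $\langle \mu, e^{-\langle x,\cdot\rangle}\rangle$, which I would denote by $\int_{C^*} e^{-\langle x,y\rangle} \mu(y)\,dy$, is well-defined. I would then transpose $q(\partial_y)$ onto the exponential by integration by parts in the distributional sense: each derivative $\partial_{y_j}$ contributes one sign flip from the transpose and a second from differentiating $e^{-\langle x,y\rangle}$, so the net effect is $q(\partial_y)\mapsto q(x)$. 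This reduces everything to the special case $\int_{C^*} e^{-\langle x,y\rangle} E(y)\,dy = 1/p(x)$, which in turn follows by pairing the defining equation $p(-i\partial) E = \delta$ with $e^{-\langle x,\cdot\rangle}$ and applying the same transposition trick on the left-hand side.

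The main obstacle is making rigorous the integration-by-parts step and the pairing of $E$ (and hence $\mu$) with $e^{-\langle x,\cdot\rangle}$, given that this exponential is not Schwartz on all of $\mathbb{R}^n$. The standard way I would resolve this is to introduce a smooth conic cutoff that equals $1$ on a neighborhood of $C^*$ and vanishes outside a slightly larger conic neighborhood, so that the cutoff multiplied by $e^{-\langle x,\cdot\rangle}$ is a genuine Schwartz function; one then performs the integration by parts on the regularized test function and passes to a limit using exponential decay on $\mathrm{supp}(\mu)$ together with continuity of tempered-distribution pairings. This is essentially a Paley--Wiener--Schwartz argument tailored to distributions supported in a proper convex cone; once this is in place, the rest of the proof is bookkeeping.
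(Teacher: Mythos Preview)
Your proposal is correct and follows essentially the same route as the paper, which does not give a detailed proof but simply remarks that the result follows from ``elementary properties of the Fourier transform with respect to differentiation'' (with references to Gel'fand--Shilov and H\"ormander) and notes that $q(\partial)E$ is to be read distributionally. Your write-up is in fact considerably more careful than the paper's, in particular your treatment of the support claim via locality, the reduction of the Laplace identity to the base case $\int_{C^*}e^{-\langle x,y\rangle}E(y)\,dy = 1/p(x)$ via transposing $q(\partial_y)$, and the cutoff argument to make sense of pairing a cone-supported tempered distribution with the exponential; the paper leaves all of these points implicit.
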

Note that $q(\partial) \, E(y)$ has to be interpreted in a distributional sense, since in general $E$ is not a differentiable function, see Remark~\ref{rem:regularity_of_E}.

\subsection{Determinantal representations}
\label{subsec:determinantal representations and spectrahedral shadows}

The object of interest are spectrahedra and their shadows.

\begin{definition}[Spectrahedral cone]
    A cone $C \subset \mathbb{R}^n$ is said to be \textit{spectrahedral} if it admits the following representation. There exist linearly independent real symmetric $m \times m$ matrices $A_1,\dots, A_n$ such that 
\begin{equation}\label{eq:spectrahedron}
    C = \{ x \in \mathbb{R}^n \,  : \,  x_1 \, A_1 + \cdots + x_n \, A_n \text{ is positive definite} \} \, .
\end{equation}
\end{definition}
Note that every spectrahedral cone $C$ is convex.
\begin{eg}[The ice-cream cone]
    Consider the cone
    \begin{equation}\label{eq:ice_cream_C}
        C= \big\{(x_1,x_2,x_3) \in \mathbb{R}^3 \, : \, x_3^2-x_2^2-x_1^2 > 0 \, , \ x_3 > 0 \big\} \subset \mathbb{R}^3 \, .
    \end{equation}
    We claim that this cone is spectrahedral, with representation
    \begin{equation}\label{eq:sps_matrix_disk}
        x_1 \, \begin{pmatrix}
            0 & 1 \\ 1 & 0
        \end{pmatrix} + x_2 \, \begin{pmatrix}
            -1 & 0 \\ 0 & 1
        \end{pmatrix} + x_3 \, \begin{pmatrix}
            1 & 0 \\ 0 & 1
        \end{pmatrix} = \begin{pmatrix}
            x_3-x_2 & x_1 \\ x_1 & x_3+x_2
        \end{pmatrix} \, .
    \end{equation}
    By Sylvester's criterion, the symmetric matrix in ~\eqref{eq:sps_matrix_disk} is positive definite if and only if all its principal minors are positive. We obtain the conditions $x_2 + x_3 > 0$ and $x_3^2-x_2^2-x_1^2>0$, which are equivalent to the conditions in~\eqref{eq:ice_cream_C}. Hence, $C$ is spectrahedral. Note that the space of symmetric $3 \times 3$ matrices is three-dimensional, and hence~\eqref{eq:sps_matrix_disk} actually yields an isomorphism between $C$ and the cone of symmetric $3 \times 3$ positive definite matrices.
\end{eg}

An equivalent description of spectrahedral cones is the following. Set $N = \binom{m+1}{2}$. We identify $\mathbb{R}^N$ with the space of real symmetric $m \times m $ matrices and denote by $S_m$ the open cone of all $m \times m$ positive definite matrices. Up to closure, this cone is self-dual with respect to the trace inner product ${\rm tr}(B_1 \, B_2)$ with $B_1,B_2 $ symmetric $m \times m$ matrices. Thus, $S_m^*$ is the closed cone of positive semidefinite (psd) matrices in $\mathbb{R}^N$. Then,~\eqref{eq:spectrahedron} means that the linear inclusion
\begin{equation}\label{eq:iota_emb}
    A \, \colon \, \mathbb{R}^n \hookrightarrow \mathbb{R}^N \, , \quad x \mapsto A(x) := x_1 \, A_1 + \cdots + x_n \, A_n \, ,
\end{equation}
maps the cone $C \subset \mathbb{R}^n$ to a subcone of $S_m$, given by a linear slice of $S_m$. If $n=N$, then this subcone is full-dimensional and ~\eqref{eq:iota_emb} is an isomorphism, but in general $n<N$ and $A(C) \subset S_m$ is a lower-dimensional subcone.
Moreover, the dual cone $C^*$ is the image of $S_m^*$ under the linear projection $L = A^* \, : \, \mathbb{R}^N \rightarrow \mathbb{R}^n$, where the dual is understood in the sense of linear maps. That is, $L(S^*_m) = C^*$. Therefore, $C^*$ is part of the following class fo objects.
\begin{definition}[Spectrahedral shadow]
    A linear projection of the cone of positive semidefinite matrices is called a \textit{spectrahedral shadow}. 
\end{definition}
The following polynomial vanishes on the boundary of $C$:
\begin{equation}\label{eq:spectr_pol}
    p(x) = \det A(x) = \det(x_1 \, A_1 + \cdots + x_n \, A_n) \, ,
\end{equation}
which means that $\partial_a C $, which is the Zariski closure of the boundary of $C$ in $\mathbb{C}^n$, is contained in the vanishing locus of $p$ in $\mathbb{C}^n$. If $\partial_a C$ equals the vanishing locus of $p$, then $C$ corresponds to the hyperbolicity cone of $p$. This special case of spectrahedral cones will be relevant to us later. 
\begin{definition}[Minimal spectrahedral cone]\label{def:minimal_spec}
   We call a spectrahedral cone $C$ \textit{minimal} if $\partial_a C $ equals the vanishing locus of $p$ in $\mathbb{C}^n$, where $p$ is defined in~\eqref{eq:spectr_pol} such that~\eqref{eq:spectrahedron} holds.
\end{definition}
\begin{eg}[V\'amos polynomial]
    Not every hyperbolic polynomial admits a symmetric
    determinantal representation. Consider the \textit{specialized V\'amos polynomial}
    \begin{equation}\label{eq:vamos_pol}
        q(x) = x_1^2 x_2^2 + 4 (x_1 + x_2 + x_3 + x_4) (x_1 x_2 x_3 + x_1 x_2 x_4 + x_1 x_3 x_4 + x_2 x_3 x_4) \, .
    \end{equation}
    It is known that $p$ is hyperbolic with respect to $e=(1,1,0,0)^T$ but no power of $q(x)$ admits a symmetric determinantal representation~\cite{Kummer_2015}. Nevertheless, the hyperbolicity cone $C$ of $q$ containing $e$ is spectrahedral. It follows that $C$ is not a minimal spectrahedral cone.
\end{eg}



This example raises the following question, which goes under the name of \textit{the generalized Lax conjecture}: is every hyperbolicity cone spectrahedral? This conjecture remains open in general, but it is known to hold when $n = 3$~\cite{Helton_Linear} or in the case of elementary symmetric polynomials~\cite{branden2014hyperbolicity}.




Let us go back to the polynomial in~\eqref{eq:spectr_pol} associated to a spectrahedral cone. Raised to the appropriate negative power, this function is known to be completely monotone on its associated spectrahedral cone.
\begin{proposition}[Corollary 4.2 \cite{Pos_Certif}]\label{Prop:det_is_CM}
Let $\alpha \in \{0,\tfrac{1}{2},1,\tfrac{3}{2},\dots,\tfrac{m-1}{2}\}$ or $\alpha > \tfrac{m-1}{2}$. Then the function $p^{-\alpha}$ for $p$ as in~\eqref{eq:spectr_pol} is completely monotone on its spectrahedral cone $C$ as in~\eqref{eq:spectrahedron}.
\end{proposition}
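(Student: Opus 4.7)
The plan is to exhibit $p^{-\alpha}$ as a Laplace transform of a non-negative measure supported on $\overline{C^*}$ and then invoke the Bernstein--Hausdorff--Widder--Choquet theorem (Theorem~\ref{thm:BHWC}) to conclude complete monotonicity. The key ingredient is the classical Gindikin--Wallach integral representation of $(\det Y)^{-\alpha}$ on the symmetric cone $S_m$ of positive definite matrices: for $\alpha$ in the \emph{Wallach set} $W_m := \{0, \tfrac{1}{2}, 1,\ldots, \tfrac{m-1}{2}\} \cup (\tfrac{m-1}{2},\infty)$, there exists a non-negative Borel measure $d\sigma_\alpha$ on $\overline{S_m^*}$ such that
\begin{equation*}
(\det Y)^{-\alpha} = \int_{\overline{S_m^*}} e^{-\operatorname{tr}(YZ)}\, d\sigma_\alpha(Z), \qquad \forall\, Y \in S_m.
\end{equation*}
For $\alpha > (m-1)/2$ this is the classical multivariate gamma identity, where $d\sigma_\alpha(Z) = \Gamma_m(\alpha)^{-1} (\det Z)^{\alpha - (m+1)/2}\, dZ$ is absolutely continuous on $S_m$ (i.e.\ a Wishart distribution). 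For $\alpha \in \{0,\tfrac{1}{2},\ldots,\tfrac{m-1}{2}\}$, the absolutely continuous expression fails to converge, but Gindikin's theorem provides a bona fide non-negative measure supported on the stratum of positive semidefinite matrices of rank $2\alpha$, obtained by analytically continuing the distributional family $\alpha \mapsto \Gamma_m(\alpha)^{-1}(\det Z)^{\alpha - (m+1)/2}\, dZ$ to the boundary values.

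Granted this, the proof reduces to a straightforward pullback along the embedding $A$ from~\eqref{eq:iota_emb}. By construction $p(x) = \det A(x)$, and the adjoint $L = A^* : \mathbb{R}^N \to \mathbb{R}^n$ with respect to the trace inner product satisfies
\begin{equation*}
\langle L(Z), x \rangle = \operatorname{tr}(Z \cdot A(x)), \qquad Z \in \mathbb{R}^N,\ x \in \mathbb{R}^n,
\end{equation*}
and, as recalled in the paragraph preceding the statement, $L(\overline{S_m^*}) \subset \overline{C^*}$. Substituting $Y = A(x)$ with $x \in C$ in the Gindikin--Wallach identity and pushing forward the measure through $L$ yields
\begin{equation*}
p(x)^{-\alpha} = \int_{\overline{S_m^*}} e^{-\langle x, L(Z)\rangle}\, d\sigma_\alpha(Z) = \int_{\overline{C^*}} e^{-\langle x, y\rangle}\, d\nu_\alpha(y), \qquad \forall\, x \in C,
\end{equation*}
where $d\nu_\alpha := L_* d\sigma_\alpha$ is non-negative and supported on $\overline{C^*}$. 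Theorem~\ref{thm:BHWC} then delivers complete monotonicity of $p^{-\alpha}$ on $C$.

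The only genuinely delicate step is the existence of $d\sigma_\alpha$ on the discrete part of the Wallach set; this is the classical obstruction identified by Gindikin and explains why the admissible exponents are exactly those appearing in the statement (complete monotonicity fails for $\alpha \in (0,\tfrac{m-1}{2}) \setminus \tfrac{1}{2}\mathbb{Z}$, since there $\Gamma_m(\alpha)$ has the wrong sign and the boundary measures acquire negative mass). Once this input is accepted as a black box from representation theory of symmetric cones, the pullback argument is purely formal and uses no properties of $C$, $p$, or $A$ beyond the determinantal representation~\eqref{eq:spectr_pol} and the positivity of the trace pairing between $S_m$ and $\overline{S_m^*}$.
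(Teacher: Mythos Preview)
Your proposal is correct and follows essentially the same route the paper sketches immediately after the statement (and attributes to \cite{Pos_Certif}): one writes $p(x)^{-\alpha}=(\det A(x))^{-\alpha}$ as the Laplace transform of the Wishart/Gindikin measure $d\nu_\alpha$ on $S_m^*$ via~\eqref{eq:det_nu}, then pushes forward along $L=A^*$ to obtain a non-negative measure on $\overline{C^*}$ and invokes Theorem~\ref{thm:BHWC}. Your discussion of the Wallach set and the rank-stratified support for half-integer $\alpha$ matches the paper's case distinction in items~1 and~2 below the proposition.
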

Moreover, in this case the Riesz measure for $p^{-\alpha}$ is described in terms of the so called \emph{Wishart distribution}, a probability distribution on the space of $m \times m$ symmetric matrices. We refer to \cite{Pos_Certif,Scott_2014} for further details, here we only summarize the formulas relevant to us. In particular, the function $p(x)^{-\alpha}$ with $p$ as in ~\eqref{eq:spectr_pol} admits an integral representation as a Laplace transform of a Borel measure $\nu_\alpha$ on the cone $S^*_m$~\cite[Proof of Theorem 4.1]{Pos_Certif}:
\begin{equation}\label{eq:det_nu}
    p(x)^{-\alpha} = \int_{S_m^*} e^{-{\rm tr}(A(x)B)} \, d\nu_\alpha(B) \, , \quad \forall \, x \in C \, .
\end{equation}
From\eqref{eq:det_nu}, the Riesz measure $\mu_\alpha$ on $C^*$ is obtained by integrating $\nu_\alpha$ along the fibers of the projection map $L$, defined below~\eqref{eq:iota_emb}. This is achieved by decomposing the domain of $L$ into its co-image and kernel, i.e. by a linear isomorphism
\begin{equation}\label{eq:Phi_change_of_var}
\begin{aligned}
    \Phi \, \colon \, \mathbb{R}^N &\rightarrow  {\rm coim}(L) \oplus {\rm ker}(L)    \, , \\
    (y_1,\dots,y_n,z_1,\dots,z_{N-n}) &\mapsto (y_1 \, A_1 + \cdots + y_n \, A_n, z_1 \, B_1 + \cdots + z_{N-n} \, B_{N-n}) \, ,
\end{aligned}
\end{equation}
completing $A_1 ,\dots,A_n$ as in~\eqref{eq:spectrahedron} to a basis of the space of symmetric $m \times m$ matrices.
Since $ L(S^*_m) = C^*$, we have that $\Phi^{-1}({\rm coim}(L)) \cap S^*_m = C^* $. Applying the change of variables~\eqref{eq:Phi_change_of_var} together with Fubini's theorem, ~\eqref{eq:det_nu} decomposes into an integral on $C^*$ and one on $S^*_m \cap {\rm ker}(L)$. We can then write the Riesz kernel of $p(x)^{-\alpha}$ as
\begin{equation}\label{eq:Reisz_det}
\mu_{\alpha}(y) = \int_{L^{-1}(y)} d\nu_\alpha  \,, \quad \forall \, y \in C^* \, , 
\end{equation}
where the linear map $L$ is defined below ~\eqref{eq:iota_emb} and $L^{-1}(y)$ denotes the fiber of $L$ along~$y$. The latter is equal to a linear subspace of dimension $N-n$ intersected with $S_m^*$, and hence it is (the closure of) a spectrahedron. Then, $\mu_{\alpha}(y)$ computes the volume of $L^{-1}(y)$ with respect to the measure $d\nu_\alpha$, which we discuss presently. There are two main cases to be distinguished, depending on the support of the Wishart distribution, which in turn affects the support of $d\nu_\alpha$.
\begin{enumerate}
    \item If $2  \alpha \geq m$, then 
\begin{equation}\label{eq:Reisz_det_1}
d\nu_\alpha(B) = \Big(\pi^{\tfrac{m(m-1)}{4}} \prod_{j=0}^{m-1} \Gamma\big(\alpha - \tfrac{j}{2}\big) \Big)^{-1} \det(B)^{\alpha - \tfrac{m+1}{2}} \, dB \, , 
\end{equation}
where $dB$ denotes the pullback of the Lebesgue measure on $\mathbb{R}^N$ to the spectrahedron $L^{-1}(y)$. Formula~\eqref{eq:Reisz_det_1} holds whenever $\alpha > (m-1)/2$ and shows that $p(x)^{-\alpha}$ is completely monotone on $C$ for this range of $\alpha$, as $d\nu_\alpha $ is non-negative.

\item If $2  \alpha < m$, then the the Wishart distribution is supported on the subset of $m \times m$ psd matrices of rank at most $r =2  \alpha$, where $\alpha$ is a non-negative half-integer. We parametrize this subspace by the map $\phi \colon Z \mapsto ZZ^T$, where $Z$ is a real $m \times r$ matrix. Then, $d\nu_{\alpha}$ is the push-forward of the (scaled) Lebesgue measure on the space $\mathbb{R}^{m\times r}$ of $m\times r$ matrices under~$\phi$. More explicitely, $d\nu_\alpha (ZZ^T) = \pi^{-\alpha m } \, dZ$ where $dZ:= dz_{12} \cdots dz_{mr}$ is the Lebesgue measure on $\mathbb{R}^{m \times r}$. In this case, in ~\eqref{eq:Reisz_det} the measure $d \nu_\alpha$ has support only on the boundary of the spectrahedron $L^{-1}(y)$. The codimension of such boundary depends on $r=2 \alpha$: the smaller is $r$, the bigger is the codimension. In particular, for $2  \alpha < m$ the Riesz measure $\mu_{\alpha}(y)$ computes the volume of a boundary of $L^{-1}(y)$ with respect to~$d \nu_\alpha$.

\end{enumerate}

We deduce a simple Corollary of Proposition~\ref{Prop:det_is_CM} in the context of positive geometries. For that, note that certain projective positive geometries have canonical functions with constant numerator. This is the case for projective simplexes, and motivates the following terminology introduced in~\cite[Section 5]{Positive_geometries}.

\begin{definition}[Simplex-like positive geometry]
    A positive geometry $(\mathbb{P}^m,P)$ is \textit{simplex-like} if the polynomial $q$ in~\eqref{eq:can_form_cone} has degree zero, or equivalently, if $\mathcal{A}(P) = \emptyset$, see Definition~\ref{def:adjoint}.
\end{definition}
Note that $(\mathbb{P}^m,P) $ is simplex-like if and only if $\partial_a P$ is cut out by a homogeneous polynomial of degree $m+1$.
\begin{cor}\label{cor:simplex_det_is_CM}
    Let $(\mathbb{P}^m,P)$ be a full-dimensional simplex-like positive geometry. If $\widehat{P}$ is a minimal spectrahedral cone, see Definition~\ref{def:minimal_spec}, then $P$ is a completely monotone positive geometry.
\end{cor}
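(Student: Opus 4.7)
The plan is to reduce the statement to Proposition~\ref{Prop:det_is_CM} applied at exponent $\alpha=1$. First, I would use the simplex-like hypothesis to write the canonical form as $\mathbf{\Omega}_P = (c/p(x))\, dx$ for some $c \in \mathbb{R}^{\times}$ and a square-free homogeneous polynomial $p$ of degree $m+1$ whose vanishing locus is $\partial_a P$; square-freeness is forced by the simple-pole axiom in Definition~\ref{def:PG}. The minimal spectrahedral hypothesis on $\widehat{P}$ then supplies real symmetric matrices $A_1,\ldots,A_{m+1}$ with $\widehat{P} = \{x : A(x) := \sum_i x_i A_i \succ 0\}$ and $V(\det A) = \partial_a \widehat{P} = V(p)$. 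Since $p$ is square-free of degree $m+1$, this forces the matrix size to be exactly $m+1$ and $\det A = \lambda\, p$ for some $\lambda \in \mathbb{R}^{\times}$.

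Next, I would invoke Proposition~\ref{Prop:det_is_CM} at exponent $\alpha=1$ with matrix size $m'=m+1$. The admissible range stated there is $\alpha \in \{0,\tfrac12,1,\tfrac32,\ldots,\tfrac{m}{2}\}$ or $\alpha > m/2$: for $m \geq 2$ the value $\alpha=1$ lies in the half-integer set, while for $m=1$ it falls in the continuous range $\alpha > 1/2$. In every case $(\det A)^{-1}$ is completely monotone on the spectrahedral cone $\widehat{P}$, and hence so is $p^{-1}$ since it differs from $(\det A)^{-1}$ by the nonzero scalar $\lambda$ (up to a compensating sign if $\lambda<0$, which is permitted by Definition~\ref{def:CM_PG}).

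Finally, $\Omega_{\widehat{P}} = c/p$ is a scalar multiple of $p^{-1}$, so it is completely monotone on $\widehat{P}$ up to the overall sign allowed by Definition~\ref{def:CM_PG}, proving that $(\mathbb{P}^m,P)$ is a completely monotone positive geometry. No step is particularly delicate: the only substantive observation is the identification $\det A = \lambda\, p$ (and therefore the matrix size $m+1$) enforced by minimality together with square-freeness of $p$, after which the exponent $\alpha=1$ is immediately seen to lie inside the CM range of Proposition~\ref{Prop:det_is_CM}. The entire argument is essentially a bookkeeping check that the hypotheses of the previously established determinantal CM result are met.
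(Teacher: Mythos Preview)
Your proof is correct and follows exactly the paper's approach: identify the canonical function with $(\det A)^{-1}$ up to a scalar via the simplex-like and minimal-spectrahedral hypotheses, then invoke Proposition~\ref{Prop:det_is_CM} at $\alpha=1$. You add useful detail the paper omits (square-freeness of $p$ from the simple-pole axiom, the explicit check that $\alpha=1$ lies in the admissible range); note only that your inference ``$V(\det A)=V(p)$ with $p$ square-free forces $\det A=\lambda p$ and matrix size $m+1$'' implicitly reads ``minimal'' as ``of minimal matrix size,'' which is precisely the tacit reading the paper's own proof makes.
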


\begin{proof}
    Since $\widehat{P}$ is minimal spectrahedral, we have that $\partial_a \widehat{P}$ equals the vanishing locus of $p$ on $\mathbb{C}^n$ for $p(x) = \det A(x)$ as in~\eqref{eq:spectr_pol}. As $P$ is simplex-like, the canonical function can be chosen up to a positive scalar factor to be equal to $p(x)^{-1}$. This function is then completely monotone on $\widehat{P}$ by Proposition~\ref{Prop:det_is_CM}.
\end{proof}
When $P$ is a minimal spectrahedral positive geometry, but it is not simplex-like, i.e. when its canonical function has a non-trivial numerator~$q$, the representing measure can be computed by differentiating the Riesz kernel of the denominator as in~\eqref{eq:fund_sol_int_repr}. Then $P$ is completely monotone, if the so obtained measure is non-negative.~The polycons bounded by many lines and conics, discussed in Section~\ref{sec:Examples and computations}, are examples of such positive geometries.

Let us close this section by commenting on a special class of hyperbolic polynomials that admit an obvious determinantal representation as in~\eqref{eq:spectr_pol}. This is the case of $p = \prod_{i=1}^m \ell_i$ being a product of linear forms $\ell_i \in (\mathbb{R}^n)^*$. Then $p$ is hyperbolic, and its hyperbolicity cone $C$ is polyhedral. Moreover, $C$ is pointed if $m \geq n$ and the $\ell_i$ span $(\mathbb{R}^n)^*$, in which case $C$ is the cone over a convex projective polytope $P=\mathbb{P}(C)$. Note that $p$ is equal to the the determinant of a symmetric $m \times  m$ diagonal matrix with diagonal entries $\ell_i$. The Riesz measure for $p^{-\alpha}$, and more generally for $\prod_{i=1}^m \ell_i^{\alpha_i}$, is given by Aomoto-Gelf'and hypergeometric functions~\cite[Theorem 7.4]{Pos_Certif}. For this case, in~\eqref{eq:Reisz_det} the fibers of $L$ are polytopes and $\mu_\alpha$ is continuous in the interior on $C^*$, homogeneous of degree $\sum_{i=1}^m \alpha_i - n$ and differentiable of order $\sum_{i=1}^m \alpha_i - n -1$ see  ~\cite[Theorem 3.3]{Pos_Certif} on $C^*$. In our case of interest, $\alpha_i = 1$, and  the Riesz kernel is piecewise polynomial of degree $m-n$, with domains of definition given by the chamber complex, see~\cite[Example 3.4]{Pos_Certif} for a concrete example. Note that $p^{-1}$ constitutes the denominator of the canonical function of the cone over the projective polytope $P$, see~\eqref{eq:pol_form}. By Theorem~\ref{thm:Riesz_p/q}, the measure representing the actual canonical function $q/p$ of $P$ is obtained by differentiating the Riesz measure for $p^{-1}$ with respect to $q(\partial)$. By Theorem~\ref{thm:polytopes_CM}, this results in the constant function equal to one on the closure of $C^*$.

\section{Examples and computations of measures}
\label{sec:Examples and computations}

In this section we present concrete examples and computations of dual volume representations of canonical functions and their associated measures, for certain positive geometries in the projective plane. We consider semialgebraic sets bounded by lines and conics, and one bounded by a nodal cubic. In general, the explicit computation of measures is complicated, and involves computing periods evaluating to transcendental functions. We find that the measures for lines and conics evaluate to a logarithm, while that of the nodal cubic evaluates to an elliptic integral. We prove that every hyperbolic positive geometry bounded by a single conic and lines is completely monotone, and conjecture that the same is true in the case of more conics. We also provide a triangulation-based algorithm for computing the measure in all these cases.

\subsection{A line and a conic}
\label{subsec:A line and a conic}

Let us first consider the case of one conic and one line in $\mathbb{P}^2$. 
To simplify our computation, after a projective transformation we can assume that the conic cuts out the ice-cream cone $C$ from~\eqref{eq:ice_cream_C}. We then consider $p_a(x)=(x_3^2-x_2^2-x_1^2)(x_2 + a \, x_3)$, for $a \geq 0 $. 
Note that $p_{a}$ can be written as the determinant of a symmetric $3 \times 3$ matrix, see~\eqref{eq:sps_matrix_disk}. In particular, the hyperbolicity cone of $p_{a}$ is
\begin{equation}\label{eq:pizza_cone}
    \widehat{P}_a = \{ x \in \mathbb{R}^3 \, : \,    x_3^2-x_2^2-x_1^2 > 0 \,  \ x_2+ a x_3 > 0 \, , \ x_3 > 0 \} \, , \quad 0 \leq a < 1 \, ,
\end{equation}
and $\widehat{P}_a = C$ if $a \geq 1$, which is a minimal spectrahedral cone. Let us denote by $P_a = \mathbb{P}(\widehat{P}_a)$ the projective semialgebraic set. For $a \in (-1,1)$, $P_a$ is a positive geometry in $\mathbb{P}^2$, which is completely monotone by Corollary~\ref{cor:simplex_det_is_CM}. For $a=1 $ the line is tangent to the conic while for $a>1$ the intersection points are complex, see Figure~\ref{fig:half_pizzas}. 
\begin{figure}[t]
\centering
\subfigure[$a=1/2$]{
  \includegraphics[width=0.25\textwidth]{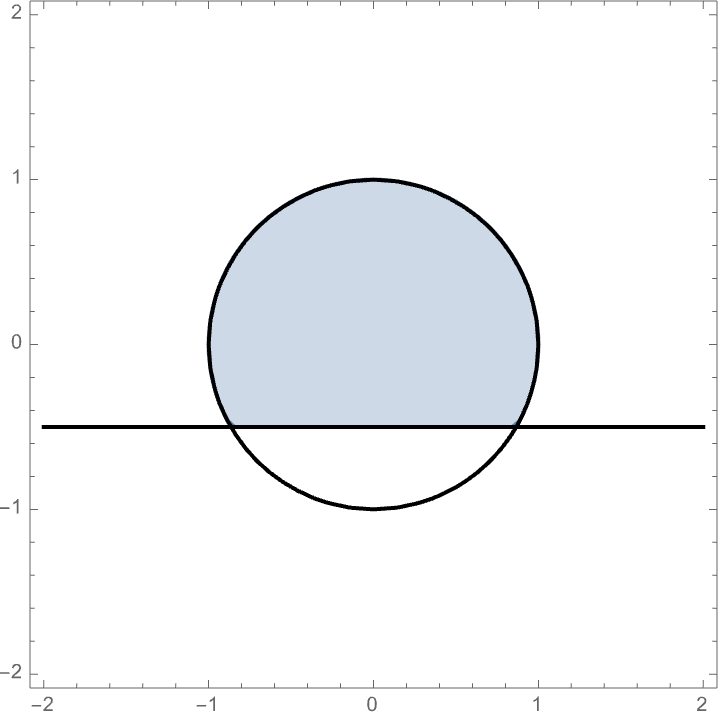}}
  \hspace{0.2in}
\subfigure[dual for $a=1/2$]{
  \includegraphics[width=0.25\textwidth]{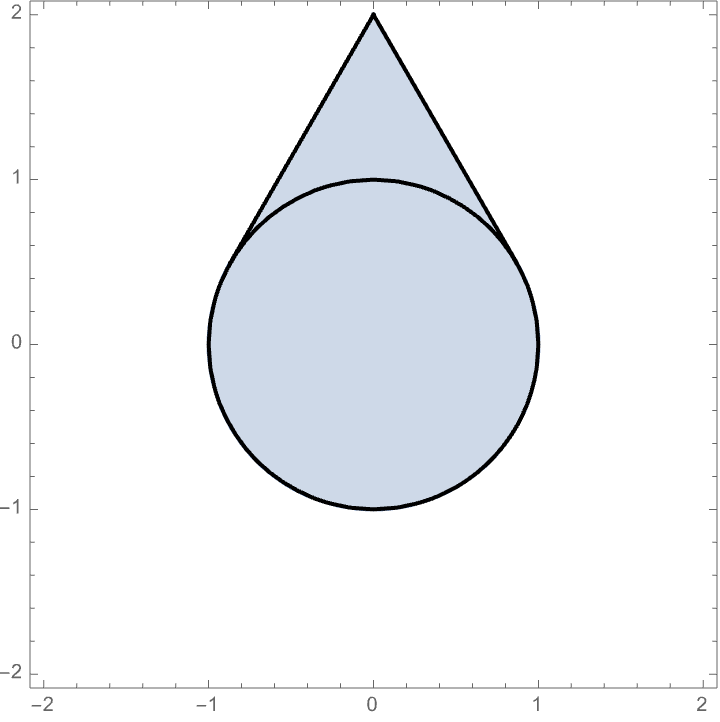}}
  \subfigure[$a=1$]{
  \includegraphics[width=0.25\textwidth]{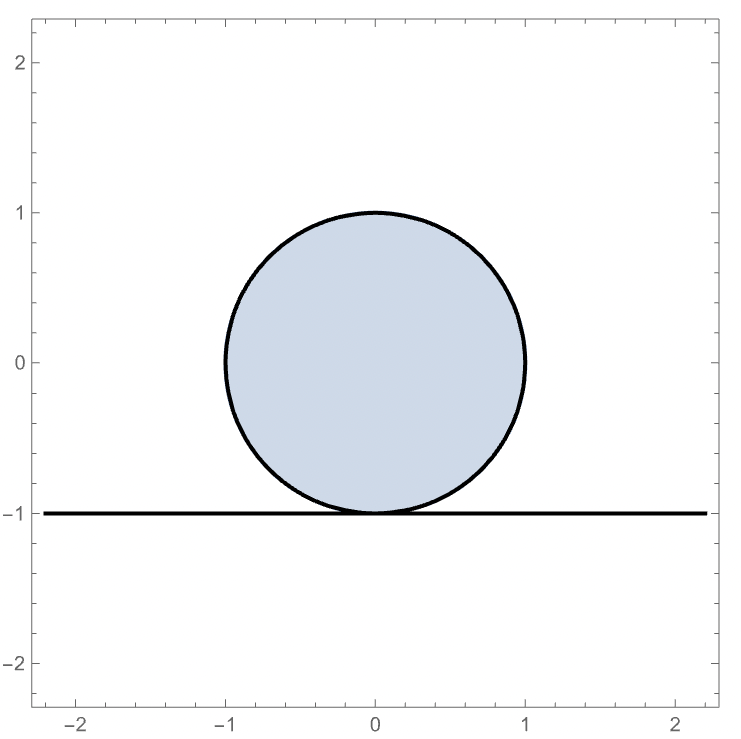}} \hspace{0.2in}
   \subfigure[$a=3/2$]{
  \includegraphics[width=0.25\textwidth]{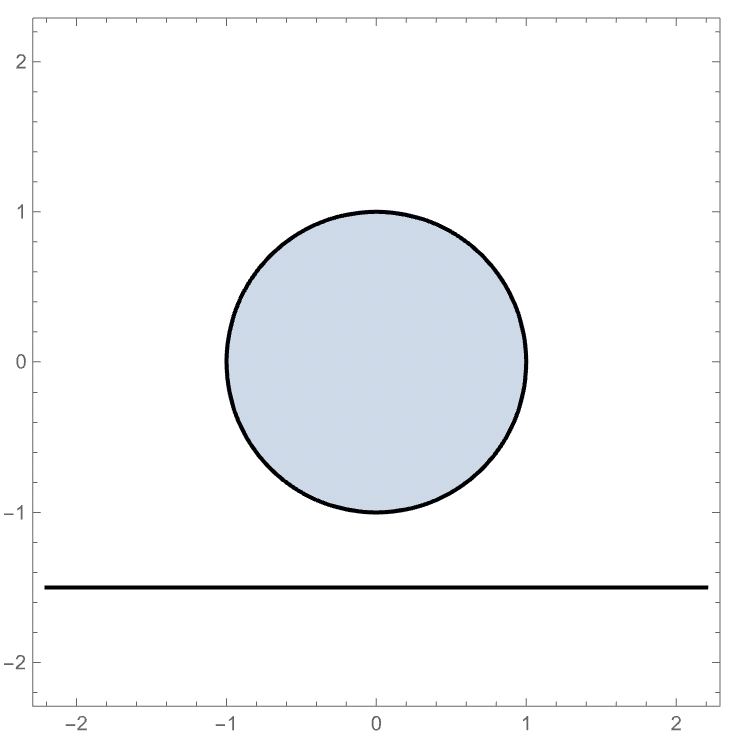}}
 \caption{In (a), (c) and (d) we show $\widehat{P}_a$ from~\eqref{eq:pizza_cone} for different values of $a$ on the slice $x_3=1$. In (b) we show the dual $\widehat{P}^*_a$ for $a=1/2$. }
 \label{fig:half_pizzas}
\end{figure}

We are interested in computing the Riesz measure for $p_{a}^{-1}$. 
By standard properties of the Laplace transform~\cite[Proposition 5.7]{Pos_Certif}, the measure for $p_a^{-1}$ is given by the convolution of the measure of $(x_2 + a \, x_3)^{-1}$ with that of $(x_3^2-x_2^2-x_1^2)^{-1}$. The former is immediate to obtain, while the latter is given in Example~\ref{eg:wave_eq}. We define $\mu_a$ to be the Riesz measure of
\begin{equation}\label{eq:form_half_pizza_a}
    \Omega_{\widehat{P}}(x) = \frac{2 \sqrt{1-a^{2}}}{p_a(x)} \, , \quad 0 \leq a < 1 \, ,
\end{equation}
and of $p_a^{-1}$ for $a=  1$. We chose the normalization in \eqref{eq:form_half_pizza_a} because it yields the correct canonical function of $P_a$ for $0 \leq a <1$.
Since the measure for the linear factor is supported on a line, the convolution yields a one-dimensional integral:
\begin{equation}\label{eq:line_qudric_measure_1i}
    \mu_{a}(y) = \int_{\mathcal{R}_a(y)}   \frac{dt}{2 \pi \sqrt{q_a(t,y)}}  \ ,
\end{equation}
where $q_a(t,y)=(y_3- a\, t)^{2}-(y_2-t)^2 -y_1^2$ and $ \mathcal{R}_a(y) = \{t \in \mathbb{R} : 0<t<y_3/a \,,\ q_a(t,y) >0 \}.$ 
The integration domain $\mathcal{R}_a(y)$ depends on the sign of coefficient to $t^2$ in $q_a(t,y)$, which is equal to $a^2-1$, and on its roots $r_{\pm}(y)$ in $t$ and their relative order with respect to $y_3/ a$. We compute the discriminant of $q_a(t,y)$ in $t$ to be
\begin{equation}
    \Delta_a(y) = 4 (y_3^2 -y_1^2 - 2 a y_2 y_3  + a^2 y_1^2  + a^2 y_2^2 ) \, .
\end{equation}
We check that $\Delta_a(y) > 0$ for every $y \in \widehat{P}_a^{*}$. To evaluate the integral ~\eqref{eq:line_qudric_measure_1i} we distinguish between the following three cases.
\begin{enumerate}[label=(\alph*)]
    \item $0<a<1$: in this case $\mathcal{R}_a(y)=(r_{-}(y),r_{+}(y)) \cap (0,y_3/a)$. We check that for every $y \in C$ we have $y_3/a > r_{\pm }(y)$ and $r_{+}(y)>0$. On the other hand, $r_{-}(y)>0$ if and only if $y \in \widehat{P}_a^{*} \setminus C^*$. With this, we compute 
    \begin{equation}\label{mu_alpha}
        \mu_{a}(y) = \begin{cases}
            \frac{1}{2} + \frac{1}{\pi} \arctan(\frac{y_2-a  y_3}{ \sqrt{(1-a^{2})(y_3^{2}-y_2^{2}-y_1^2)}}) \, , \quad \ y \in C^* \\
            1 \, , \qquad \qquad \qquad \qquad \qquad \qquad \qquad \quad y \in \widehat{P}_a^{*} \setminus C^* \, .
        \end{cases}
    \end{equation}
    Note that $\mu_{a}$ is smooth on the interior of $\widehat{P}_a^{*}$. 
    

     \item $a = 1$: $q_a(t,y)$ becomes of degree one in $t$. The integration region is $\mathcal{R}_a(y) = (0, (y_3^2-y_1^2-y_2^2)/(2(y_3-y_2)))$. We then compute
     \begin{equation}\label{mu_a1}
         \mu_{a=1}(y)= \frac{\sqrt{y_3^2-y_1^2-y_2^2}}{2 \pi(y_3-y_2)} \, , \quad \forall \, y \in C^* \setminus \{y_3=y_2 \} \, .
     \end{equation}
     It is interesting that $\mu_{a=1}$ vanishes at the boundary of $C^*$ except at $\{y_3=y_2 \}$, where it has a singularity of order $(y_3-y_2)^{-1/2}$.

    \item $ a > 1$: for every $y \in C$ we have that $r_{-}(y)<y_3/a<r_{+}(y)$, and hence $\mathcal{R}_a(y)=(0,r_{-}(y))$. We compute
    \begin{equation}\label{measure_alpha_bigger_one}
        \mu_{a}(y) = \frac{1}{4 \pi \sqrt{a^2-1}} \log(\frac{a  y_3 - y_2 + \sqrt{(a^2-1)(y_3^2-y_1^2-y_2^2)}}{a  y_3 - y_2 - \sqrt{(a^2-1)(y_3^2-y_1^2-y_2^2)}}) \, . 
    \end{equation}
    Note that $\mu_{a}$ has a logarithmic singularity at $y_1=0$, $y_2=y_3/a$. The location of this singularity approaches the center of the ice-cream cone for $a \rightarrow \infty$. Nevertheless, $\mu_{a}(y_1,y_2,y_3=1)$ is integrable on $\widehat{P}_a \cap \{y_3 = 1\}$.

    
\end{enumerate}

\begin{figure}[t]
\centering
\subfigure[$a=1/2$]{
  \includegraphics[width=0.295\linewidth]{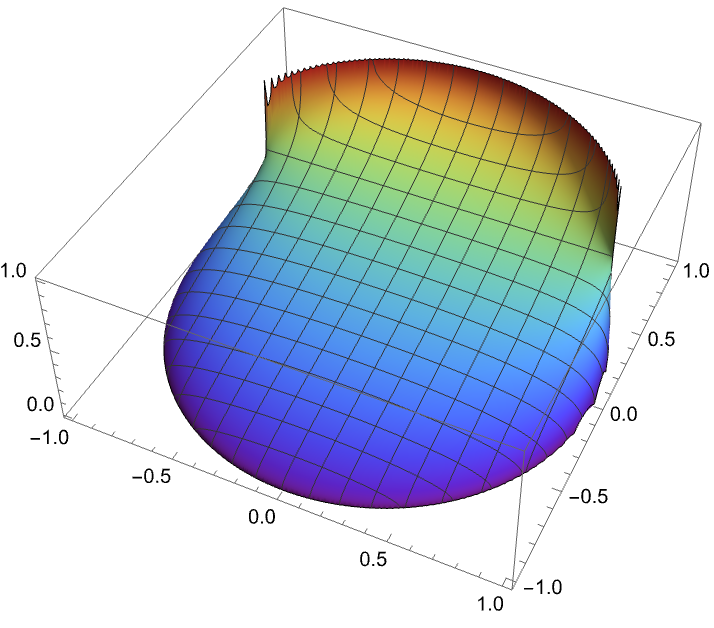}
  \label{fig:linearquadricmeas1}}
\subfigure[$a=1$]{
  \includegraphics[width=0.295\linewidth]{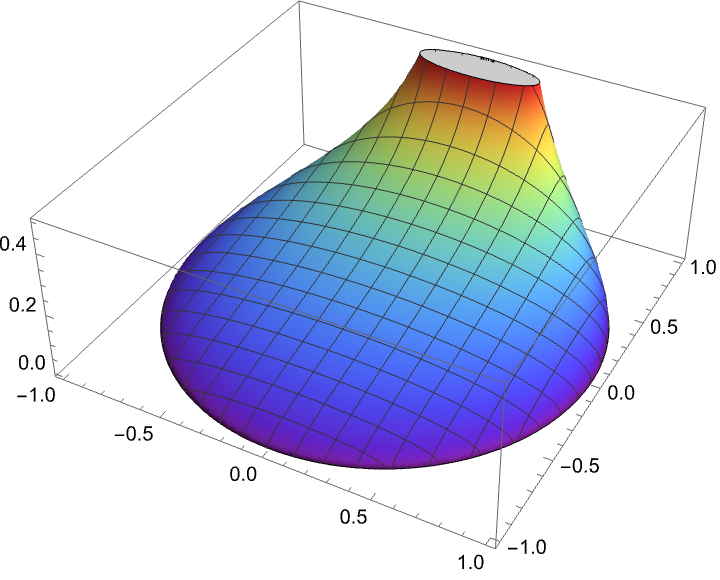}
  \label{fig:linearquadricmeas2}}
  \subfigure[$a=10$]{
  \includegraphics[width=0.295\linewidth]{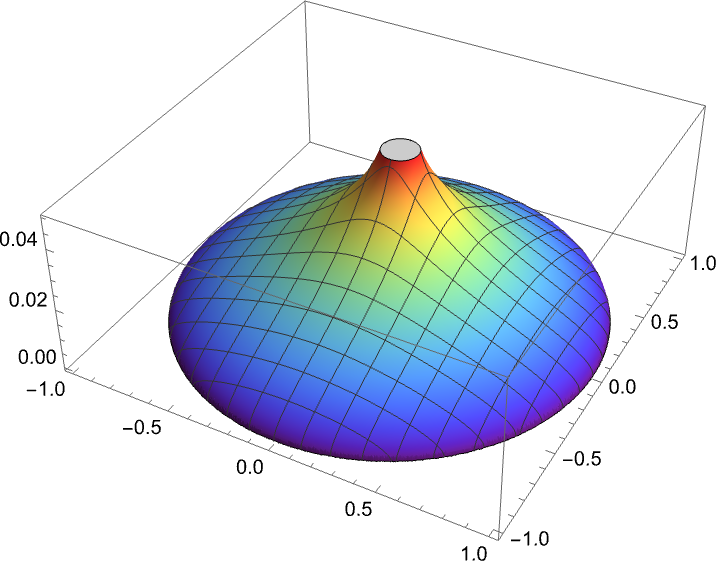}
  \label{fig:linearquadricmeas3}}
 \caption{The graph of $\mu_{a}$ in~\eqref{mu_alpha}, ~\eqref{mu_a1} and ~\eqref{measure_alpha_bigger_one}, respectively, for different values of $a$, and plotted on the slice $C^* \cap \{y_3=1\}$, where $C^*=C$ is in~\eqref{eq:ice_cream_C}. Note that for (b) and (c) the gray cuts on the vertical axis happen at locations where the measure is singular. We observe that the measure is non-negative, and it smooth on the interior of $C^*$ for $a<1$ (when the line intersects the conic in two real points), while it develops a singularity for $a\geq 1$. The different colors in the plots indicate the level sets of $\mu_{a}$. }
 \label{fig:linequadric}
\end{figure}

Let us comment on the form of~\eqref{mu_alpha} for a general conic and line. Let $Q,L \subset \mathbb{P}^2_\mathbb{R}$ be a (non-degenerate) conic and a line respectively, such that $L$  intersects $Q$ in two distinct real points. Denote by $q \in \mathbb{R}[x_1,x_2,x_3]$ the quadratic polynomial cutting out $Q$, and by $\ell \in \mathbb{R}^3$ the vector defining the linear form $\ell(x) = \langle \ell, x \rangle $ cutting out $L$. Assume that $q(x)>0$ defines the hyperbolicity region $C$ of $q$. Consider the semialgebraic cone
\begin{equation}
    \widehat{P} = \{ x \in \mathbb{R}^3 \, : \, q(x)>0 \, , \ \ell(x) > 0 \, , \ x_3 > 0 \} \subset C \, .
\end{equation}
This is the cone over a positive geometry in $\mathbb{P}^2$ with canonical function equal to
\begin{equation}\label{eq:rat_fct_half_pizza}
   c(\ell,q) \,  \frac{1}{\ell(x) q(x) } \, ,
\end{equation}
where the constant $c(\ell,q)$ is such that the residue of~\eqref{eq:rat_fct_half_pizza} on any of the two intersection points $L \cap Q$ is equal to $1$ or $-1$. If we write $q(x) = \langle x, Ax \rangle$, where $A$ is a real symmetric $3 \times 3$ matrix, then $c(\ell,q) = \sqrt{-\frac{1}{2} \varepsilon^{abc} \varepsilon^{ijk} \, A_{ai} \, A_{bj} \, \ell_c \, \ell_k }$, where $\varepsilon^{ijk}$ is the $3$-dimensional Levi-Civita tensor, see~\cite[Eq. (5.20)]{Positive_geometries}. In the following, we assume without loss of generality that $\det(A) = 1$.
The dual cone $\widehat{P}^*$ is the convex hull of $C^*$ and the ray spanned by~$\ell$. Then, denoting by $\mu_{\widehat{P}}$ the measure representing the canonical function on $\widehat{P}$, ~\eqref{mu_alpha} becomes
\begin{equation}\label{eq:mu_half_pizza}
        \mu_{\widehat{P}}(y) = \begin{cases}
            \frac{1}{2} + \frac{1}{\pi} \arctan(\frac{\ell^*(y)}{ \sqrt{q^*(y)}}) \, , \quad \ y \in C^* \, , \\
            1 \, , \qquad \qquad \qquad \qquad \qquad \ \,  y \in \widehat{P}^{*} \setminus C^* \, ,
        \end{cases}
\end{equation}
where $q^*(x) = \langle x, A^{-1} x \rangle \in \mathbb{R}^3$ cuts out the conic $Q^*$ projectively dual to $Q$, and $\ell^* = -\frac{A^{-1} \ell}{c(\ell,q)} $. The geometric meaning of $\ell^*$ is the following. Let $T_1$ and $T_2$ be the two lines passing through $\ell$ and tangent to $Q^*$ at points which we denote by $t_1$ and $t_2$. Then, $L^*$ is the unique line passing through $t_1$ and $t_2$, and $\ell^* $ defines its equation. If the conic is the one cutting out the ice-cream cone~\eqref{eq:ice_cream_C}, i.e. $q(x) = x_3^2-x_2^2-x_1^2$, then $q^*=q$ and $C=C^*$. In this case, we compute $\ell^* = \sqrt{\frac{\ell_1^2+\ell_2^2}{\ell_1^2+\ell_2^2-\ell_3^2}}(\ell_1,\ell_2,-\ell_3)$.

\subsection{Two lines and a conic}
\label{subsec:wo lines and a conic}

\begin{figure}[t]
\begin{minipage}[c]{0.32\linewidth}
\includegraphics[width=\linewidth]{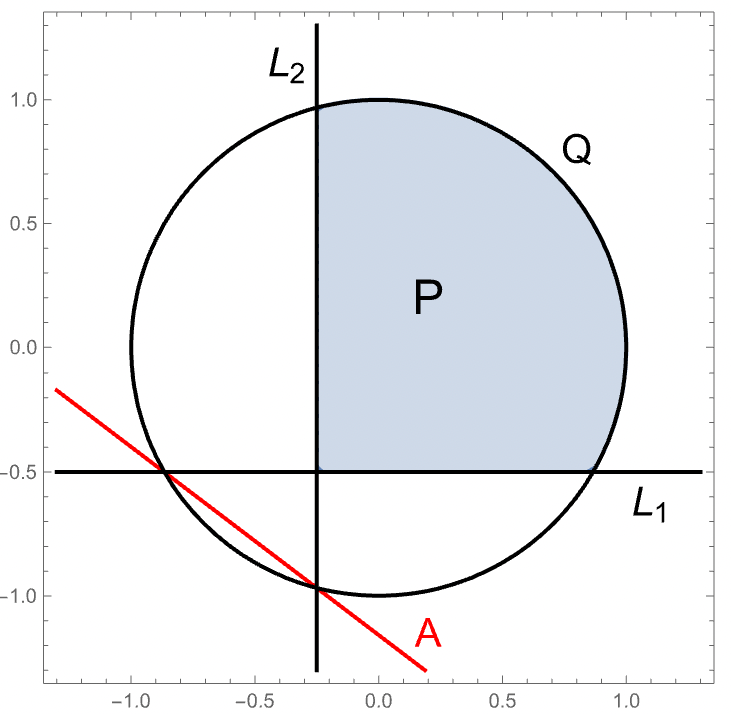}
\end{minipage}
\hfill
\begin{minipage}[c]{0.31\linewidth}
\includegraphics[width=\linewidth]{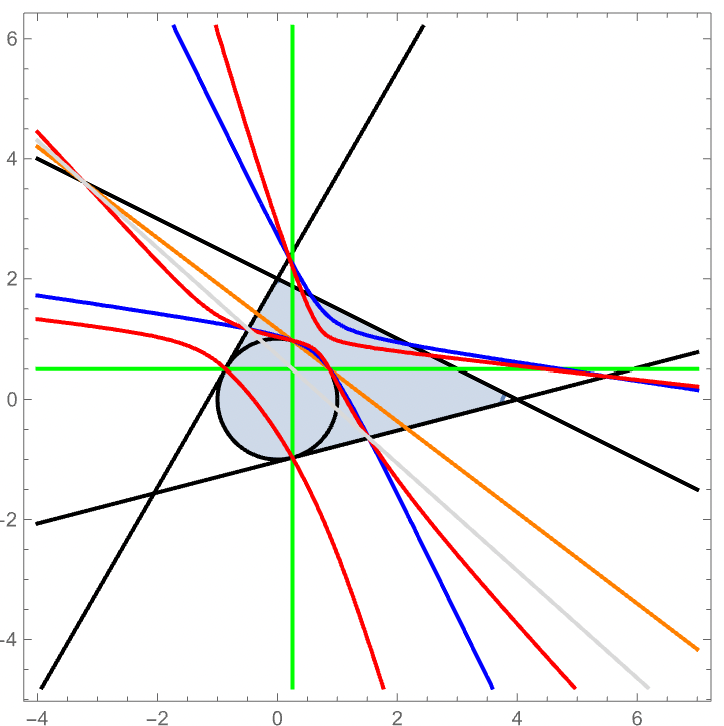}
\end{minipage}%
\hfill
\begin{minipage}[c]{0.35\linewidth}
\includegraphics[width=\linewidth]{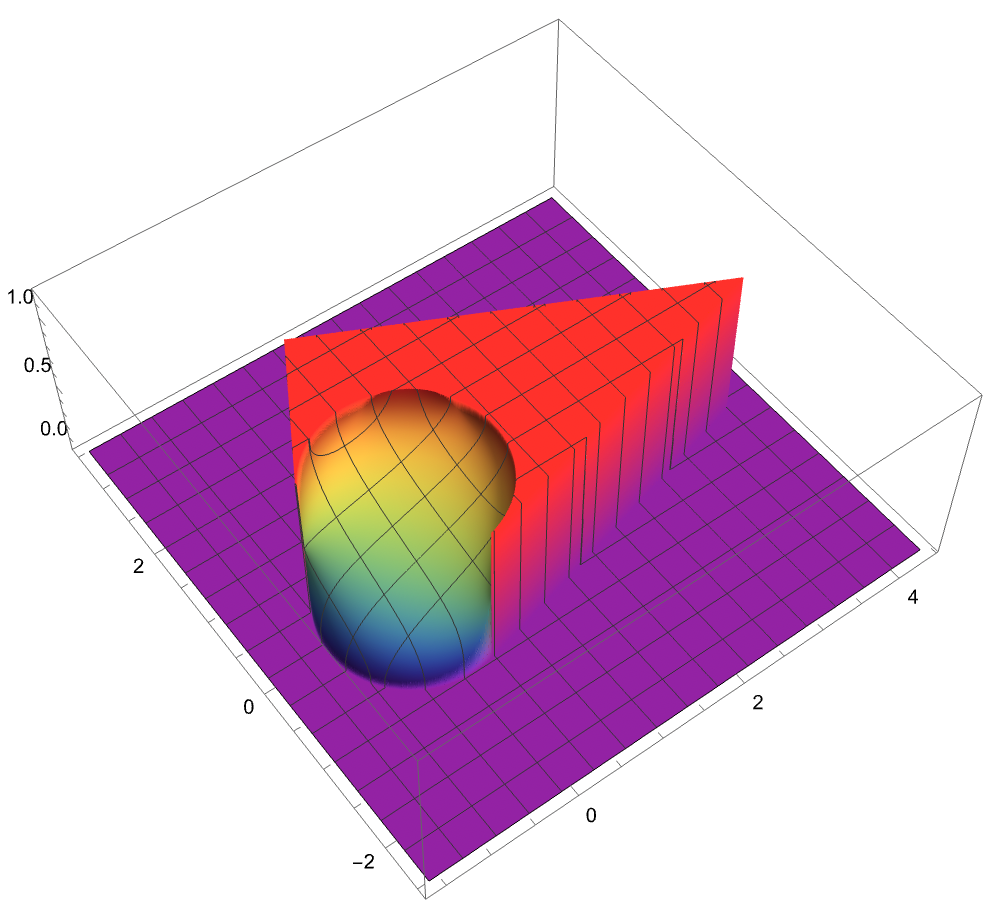}
\end{minipage}%
\caption{On the left, a positive geometry $P$ inside the hyperbolicity region $\mathbb{P}(C)$ of a conic $Q$, bounded by $Q$ and two lines $L_1$, $L_2$ cut out by $\ell_1 = (1,0,1/4)$, $\ell_2=(0,1,1/2)$, respectively. The adjoint line of $P$, cut out by $\alpha$, is drawn in red. In the middle, we show the dual $P^*$ with many curves: in green the lines $\mathbb{P}V(\ell_i^*)$, in orange $\mathbb{P}V(\alpha^*)$, and in red and blue a cubic and conic, respectively, appearing in the argument of the inverse tangent in the last line of~\eqref{eq:mu_pizza_slice2}. The graph of the measure~\eqref{eq:mu_pizza_slice2} representing the canonical function~\eqref{eq:can_fct_pizza_slice} of $P$ is plotted on the right. This is non-negative, supported on $P^*$ and constant equal to one on $P^* \setminus \mathbb{P}(C^*)$.}
\label{fig:pizza_slice2}
\end{figure}
We now consider a semialgebraic set $P$ bounded by a conic $Q$ and two lines $L_1$ and $L_2$ in~$\mathbb{P}^2_\mathbb{R}$, see Figure~\ref{fig:pizza_slice2}. After a projective transformation, we may take the conic to be that cutting out the ice-cream cone $C$ in~\eqref{eq:ice_cream_C}. We denote by $\ell_i \in \mathbb{R}^3$ the vectors cutting out the two lines $L_i$. Then, $P$ is a positive geometry by~\cite[Theorem 2.15]{Polypols}, with canonical function $\Omega_{\widehat{P}} $ proportional to\footnote{The proportionality constant is uniquely determined by requiring that ~\eqref{eq:can_fct_pizza_slice} is positive on $\widehat{P}$ and the maximal residue on any vertex of the geometry is equal to $1$ or $-1$.}
\begin{equation}\label{eq:can_fct_pizza_slice}
    \frac{\alpha(x)}{\ell_1(x) \ell_2(x) q(x)} \, ,
\end{equation}
where the adjoint polynomial $\alpha(x) = \langle \alpha, x\rangle$ for $\alpha \in \mathbb{R}^3$ is homogeneous of degree one. The dual semialgebraic set $P^*$ is the convex hull of the disk and the two points $\ell_1$ and $\ell_2$, see Figure~\ref{fig:pizza_slice2}.

\begin{figure}[t]
\begin{minipage}[c]{0.32\linewidth} \label{fig:quad2lines1}
\includegraphics[width=\linewidth]{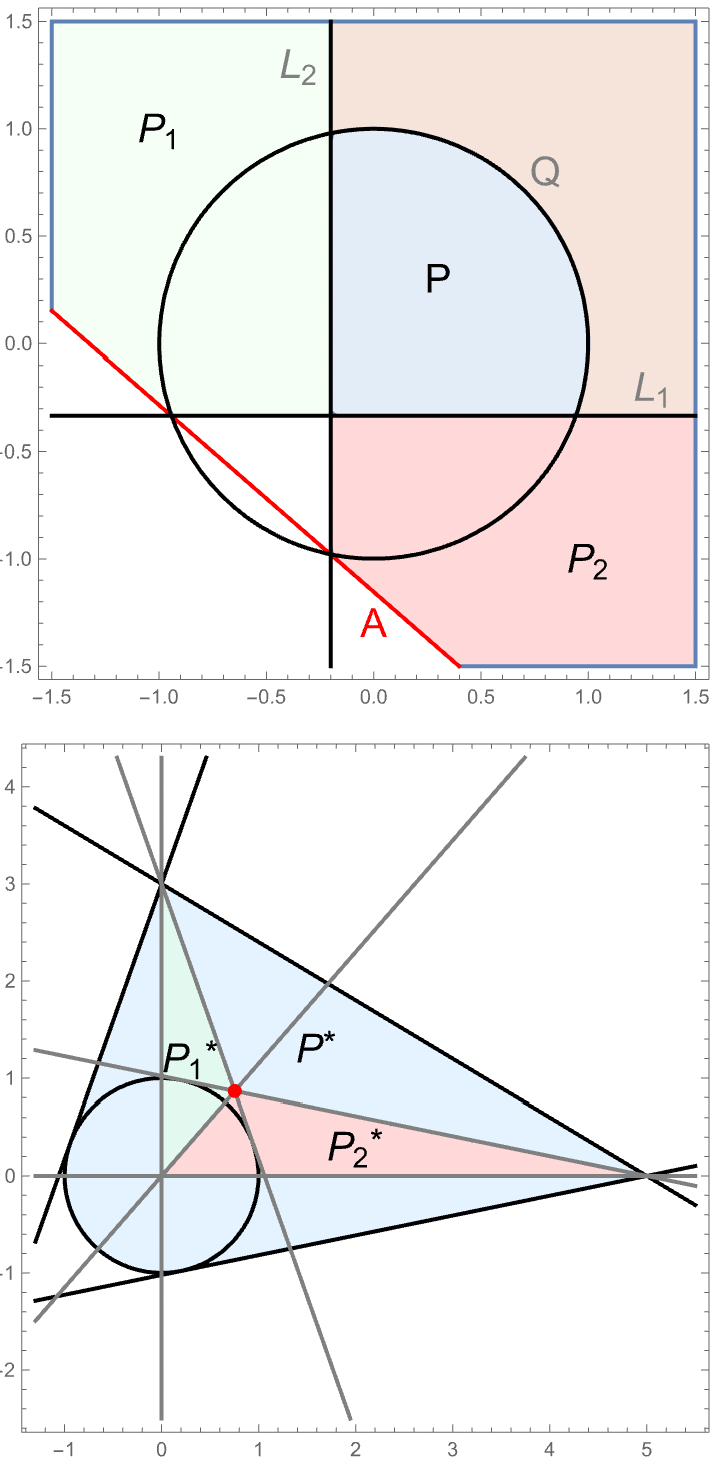}
\end{minipage}
\hfill
\begin{minipage}[c]{0.327\linewidth}\label{fig:quad2lines2}
\includegraphics[width=\linewidth]{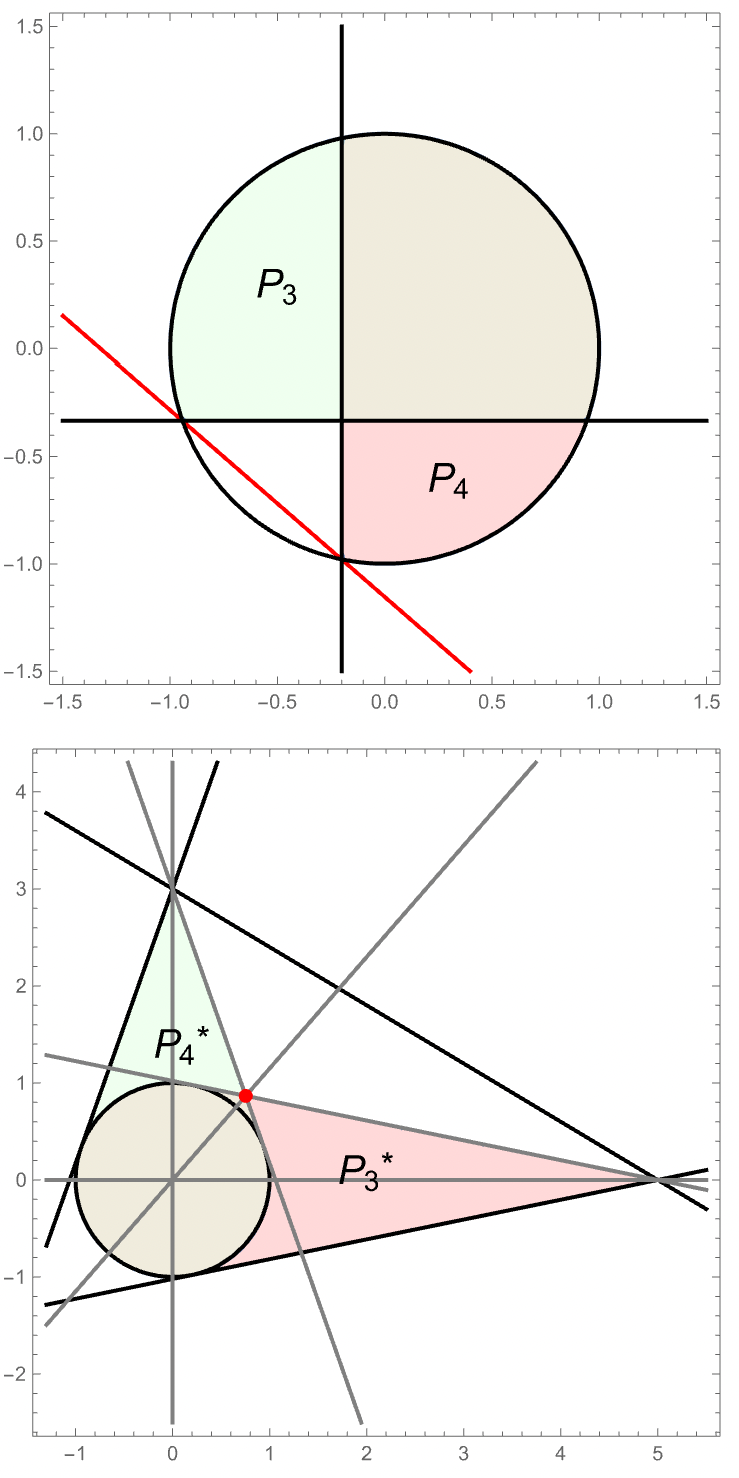}
\end{minipage}%
\hfill
\begin{minipage}[c]{0.325\linewidth}\label{quad2lines3}
\includegraphics[width=\linewidth]{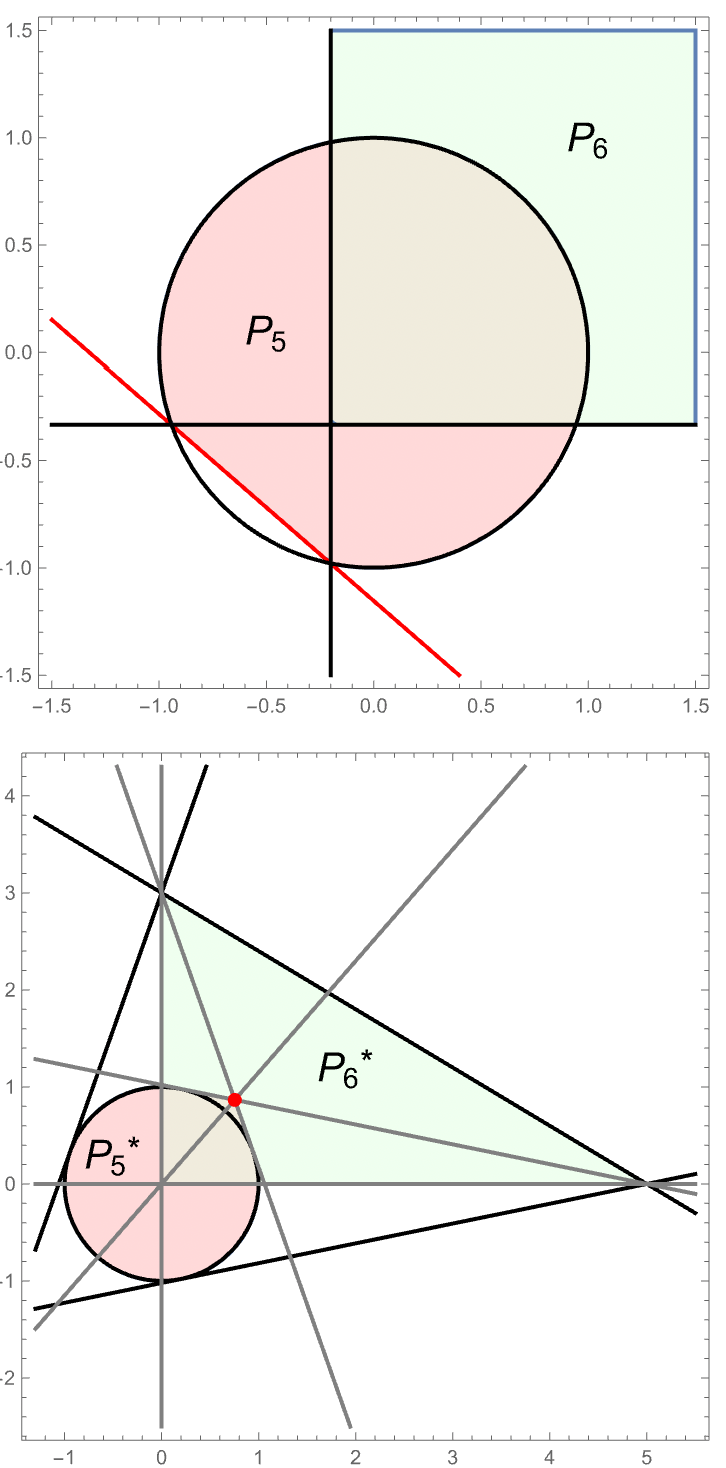}
\end{minipage}%
\caption{We show the semialgebraic sets $P_i$, above, and their duals $P_i^*$, below, for $i=1,\dots,6$ as in \eqref{eq:six_cones}. These realize the canonical form triangulation~\eqref{eq:ext_tr_pizza} of the positive geometry $P$, appearing in blue in the first picture above with its adjoint line in red. Its dual $P^*$ is in blue in the first picture below, and in the pictures below we mark in red the point projectively dual to the adjoint line. Note that the latter lies outside the conic $Q^*$; for instance, $P_5^*$ is equal to the convex hull of $Q^*$ and this point, and in the figure $P_5^*$ looks like the disk with a small horn. The sets $P^*_i$ form a \textit{signed} triangulation of $P^*$, according to~\eqref{eq:ext_tr_pizza}.}
\label{fig:pizza_tr}
\end{figure}

We show that the computation carried out in the previous subsection is enough to determine the measure representing the canonical function~\eqref{eq:can_fct_pizza_slice}.
For that, we use a triangulation of canonical forms~\cite[Section 3.2]{Positive_geometries}. Consider the following six semialgebraic cones in $\mathbb{R}^3$:
\begin{equation}\label{eq:six_cones}
\begin{aligned}
    C_1 &= \{\alpha(x) > 0 \, , \ \ell_1(x) > 0  \, , \ x_3 > 0 \} \, , \\
    C_2 &= \{\alpha(x) > 0 \, , \ \ell_2(x) > 0 \, , \ x_3 > 0 \} \, , \\
    C_3 &= \{q(x) > 0 \, , \ \ell_1(x) > 0  \, , \ x_3 > 0 \} \, ,\\
    C_4 &= \{q(x) > 0 \, , \ \ell_2(x) > 0  \, , \ x_3 > 0 \} \, , \\
    C_5 &= \{q(x) > 0 \, , \ \alpha(x) > 0  \, , \ x_3 > 0 \} \, , \\
     C_6 &= \{\ell_1(x) > 0 \, , \ \ell_2(x) > 0  \, , \ x_3 > 0 \} \, . \\
\end{aligned}
\end{equation}
These are cones over positive geometries $P_i=\mathbb{P}(C_i) \subset \mathbb{P}^2_\mathbb{R}$ with the properties that each $P_i$ is bounded by either three lines or by one line and a conic, see Figure~\ref{fig:pizza_tr}. Also, $P \subset P_i$ and $P_i$ is convex for every $i$. One checks that
\begin{equation}\label{eq:ext_tr_pizza}
    \Omega_{\widehat{P}} =  -\Omega_{C_1} - \Omega_{C_2} + \Omega_{C_3} + \Omega_{C_4} - \Omega_{C_5} + \Omega_{C_6} \, ,
\end{equation}
where the sign of each $\Omega_{C_i}$ is chosen such that $\Omega_{C_i}$ is positive on $C_i$. Since $\widehat{P} \subset C_i$, then  $C_i^* \subset \widehat{P}^*$, and we can obtain the measure representing $\Omega_{\widehat{P}}$ by summing the measures for $\Omega_{C_i}$. 
Let us denote by $\mu_{\widehat{P}}$ the measure representing the canonical function of $P$. Then, ~\eqref{eq:ext_tr_pizza} translates at the level of the measures, and we find that $\mu_{\widehat{P}}$ is constant equal to one on $\widehat{P}^* \setminus C^*$, while inside $C^*$ it receives contributions only from the measures representing $\Omega_{C_i}$ with $i=3,4,5$. For every $y \in C^*$, we therefore compute 
\begin{equation}\label{eq:mu_pizza_slice2}
\begin{aligned}
    \mu_{\widehat{P}}(y) &= \frac{1}{2} + \frac{1}{\pi} \arctan( \frac{\ell_i^*(y)}{\sqrt{q^*(y)}}) + \frac{1}{\pi} \arctan( \frac{\ell_2^*(y)}{\sqrt{q^*(y)}}) - \frac{1}{\pi} \arctan( \frac{\alpha^*(y)}{\sqrt{q^*(y)}})\\ &= \frac{1}{2} - \frac{1}{\pi}  \arctan( \frac{\ell_1^*(y) \, \ell_2^*(y) \, \alpha^*(y) + (\ell_1^*(y) + \ell_2^*(y) - \alpha^*(y)) \, q^*(y)}{\sqrt{q^*(y)}\left((\ell_1^*(y) + \ell_2^*(y)) \, \alpha^*(y) -\ell_1^*(y) \, \ell_2^*(y) + q^*(y)\right)}) \, ,  
\end{aligned}
\end{equation}
where we used the notation introduced below~\eqref{eq:mu_half_pizza}. In the second equality we used the addition formula for the inverse tangent function, which requires a careful treatment of the range of values of inverse tangent function\footnote{The addition formula for the inverse tangent function is
\begin{equation}\label{eq:arctan_addition}
    \arctan(a) + \arctan(b) = \arctan (\frac{a+b}{1 - a b}) + \delta(a,b) \, \pi \, ,
\end{equation}
for $a,b \in \mathbb{R}$, where $\delta(a,b) $ is equal to $0$ if $ab < 1$, to $1$ if $ab>1$ and $a,b \geq 0$, or to $-1$ if $ab>1$ and $a,b<0$.}. Note that in our case $q=q^*$. Since $\arctan$ is bounded between $-\pi/2$ and $\pi/2$, ~\eqref{eq:mu_pizza_slice2} is non-negative on $C^*$. Hence, $(\mathbb{P}^2, P)$ is a completely monotone positive geometry.

An example is presented in Figure~\ref{fig:pizza_slice2}. The homogeneous polynomials of degree three and two appearing in the second line of ~\eqref{eq:mu_pizza_slice2} as the numerator and denominator, respectively, cut out varieties in $\mathbb{P}^2_\mathbb{R}$ which we also plot in the figure. These have interesting interpolation features with respect to the algebraic boundary of $\widehat{P}^*$ and the lines defined by $\ell_i^*$ and $\alpha^*$.

\subsection{More lines and a conic}
\label{subsec:Any number of lines and a conic}

In this subsection we generalize the previous examples to any number of lines but with a single conic. Let us take a real conic $Q \subset \mathbb{P}_\mathbb{R}^2$ and denote by $C$ its hyperbolicity cone. Up to a projective transformation, we may take $C$ to be the ice-cream cone in~\eqref{eq:ice_cream_C} and $Q$ accordingly. We are interested in the following class of~\textit{polycons}~\cite{Polypols}.
\begin{definition}\label{def:polycon_r_s}
    Let $P(r,s) \subset \mathbb{P}(C)$ be a full-dimensional semialgebraic set whose boundary has $r \geq 1$ components on $Q$ and $s  $ linear components cut out by lines $L_j$, each intersecting the conic in two real distinct points. We call $P(r,s)$ a \textit{polycon of type $(r,s)$}.
\end{definition}
Therefore, $P(r,s)$ looks like an $(r+s)$-gon with $r$ curvy and $s$ linear edges. In particular, $P(0,s)$ is an usual $s$-gon. It is immediate to check that because we have a single conic, $r \leq s$. Any polycon $P(r,s)$ is a positive geometry by~\cite[Theorem 2.15]{Polypols}.

By extending the argument in~Subsection~\ref{subsec:wo lines and a conic}, we now prove that the measure computed in Subsection~\ref{subsec:A line and a conic} for a line and a conic in $\mathbb{P}^2$ is in fact all we need to compute the measure for any polycon $P(r,s)$. For that, we use the compatibility of canonical forms with respect to triangulations, and in particular, the notion of \textit{canonical form triangulation}~\cite[Section 3.2]{Positive_geometries}. A finite family $\{P_i\}$ of positive geometries canonical form triangulates a positive geometry $P$ (all lying in the same ambient variety $X$), if the sum of the canonical forms of $P_i$ is equal to the canonical form of $P$.   
\begin{figure}[t]
\centering
\subfigure[]{
  \includegraphics[width=0.30\linewidth]{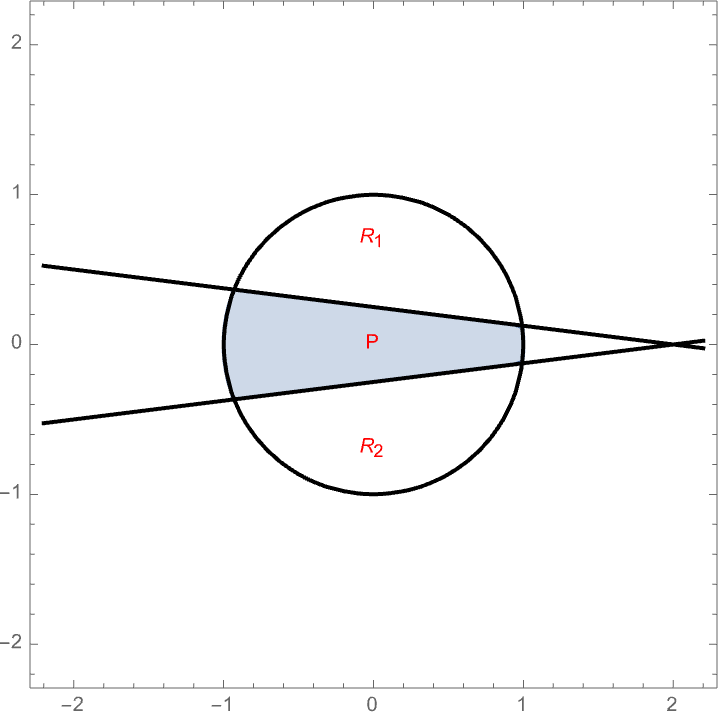}
  \label{fig:lines_circ_1}}
  \hspace{0.3in}
\subfigure[]{
  \includegraphics[width=0.30\linewidth]{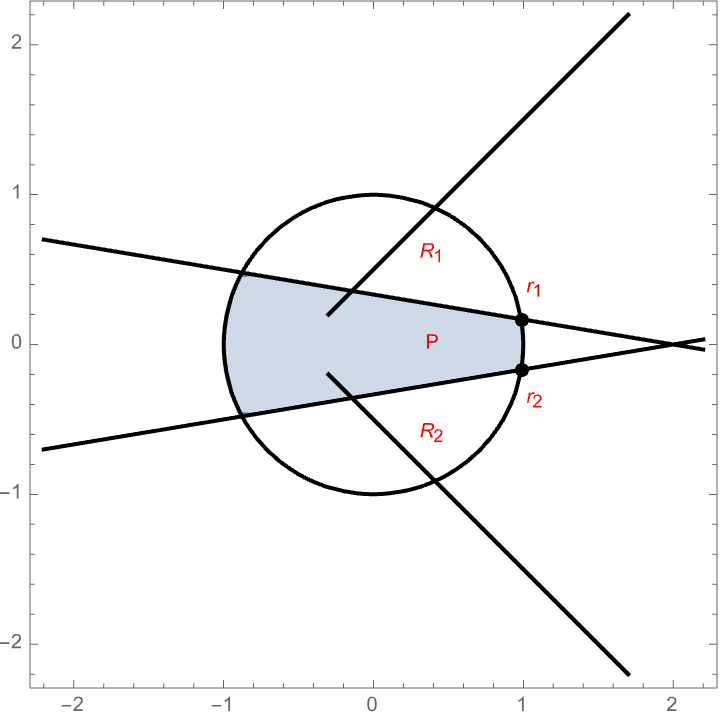}
  \label{fig:lines_circ_2}}
 \caption{Pictorial representation of the iterative argument for building a canonical form triangulation of any polycon $P(r,s)$, see Definition~\ref{def:polycon_r_s}. On the left, we have a polycon $P(2,2)$.}
\end{figure}

\begin{proposition}\label{prop:ext_tr_con_lines}
    Let $P=P(r,s)$ be a polycon of type $(r,s)$. Then, there exists a finite collection of polycons $\{P_i=P(r_i,s_i)\}$, satisfying the following conditions for every $i$:
    \begin{enumerate}
    \item $(r_i,s_i)= (1,1)$, or $(r_i,s_i) = (0,3)$,
        \item $P\subset P_i$,
        \item $P_i$ is convex,
        \item $\Omega_{\widehat{P}} = \sum_i \varepsilon_i \, \Omega_{\widehat{P}_i}  $, where $ \varepsilon_i \in \{\pm 1\}$, with the convention that $\Omega_{\widehat{P}_i}$ is positive on~$\widehat{P}_i$.
    \end{enumerate} 
    In particular, the measures representing the corresponding canonical functions satisfy
    \begin{equation}\label{eq:measure_sum}
        \mu_{\widehat{P}} = \sum_i \varepsilon_i \, \mu_{\widehat{P}_i} \, .
    \end{equation}
\end{proposition}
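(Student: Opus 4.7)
The plan is to prove the proposition by induction on $n = r + s$, the total number of edges of $P(r,s)$. The base cases $(r,s) = (1,1)$ and $(r,s) = (0,3)$ are immediate: take the trivial family $\{P\}$ with $\varepsilon_1 = +1$. For the inductive step, assuming $n \geq 4$, I would exhibit a \emph{one-step} signed decomposition
\begin{equation*}
    \Omega_{\widehat{P}} \;=\; \sum_{k} \eta_k \, \Omega_{\widehat{Q_k}} ,
\end{equation*}
where each $Q_k$ is a convex polycon of strictly smaller total edge count satisfying $P \subset Q_k$. Applying the inductive hypothesis to every $Q_k$ then yields the desired decomposition of $P$, after collecting signs.

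To construct the one-step decomposition I would split into two cases, mirroring the argument carried out for $(r,s) = (1,2)$ in Subsection~\ref{subsec:wo lines and a conic}. In the case $r \geq 1$, pick a curvy edge of $P$ together with an adjacent linear edge cut out by a line $L$; since $r \leq s$ and $P$ is a convex hyperbolicity region, such an adjacent pair always exists. The line $L$ meets the conic $Q$ in a second real point $w$ distinct from the current vertex; extending along $L$ past $w$ produces a convex $(1,1)$ polycon $Q_1 \supset P$ bounded by $L$ and an arc of $Q$. The ``opposite'' piece $Q_2$, obtained by removing the edge on $L$ and lengthening the curvy edge, is a convex polycon of type $(r, s-1)$ with $P \subset Q_2$, and a direct residue match gives $\Omega_{\widehat{P}} = \Omega_{\widehat{Q_2}} - \Omega_{\widehat{Q_1}}$. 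In the case $r = 0$, $P$ is a convex $s$-gon with $s \geq 4$; pick a vertex $v$ and, for each non-adjacent edge $f_j$ of $P$, form the triangle $Q_j = P(0,3)$ with vertex $v$ and base on the line supporting $f_j$, oriented so that $P \subset Q_j$. The standard signed fan decomposition for polytopes (as in~\cite[Section 6]{Positive_geometries}) then expresses $\Omega_{\widehat{P}}$ as $\sum_j \varepsilon_j \, \Omega_{\widehat{Q_j}}$.

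The main obstacle is verifying the signed canonical-form identity at each inductive step: in particular, the adjoint of the smaller polycon $Q_2$ introduces a new curve distinct from that of $P$, and its spurious residues must cancel in the signed sum. I would handle this using the uniqueness of the canonical form (Definition~\ref{def:PG}): it suffices to check that both sides are rational of the correct bidegree with matching Poincaré residues along every irreducible component of $\partial_a P$, and that the residues along the spurious divisors cancel pairwise. These residue identities themselves follow by a shorter induction, since the Poincaré residue along a fixed component reduces the problem to a polycon in $\mathbb{P}^1$, where the claim is trivial. Finally, equation~\eqref{eq:measure_sum} is immediate once the canonical-form identity is established: each $\widehat{P_i} \supset \widehat{P}$ implies $\widehat{P_i}^* \subset \widehat{P}^*$, so every $\mu_{\widehat{P_i}}$ is supported in $\widehat{P}^*$, and linearity of the inverse Fourier--Laplace transform from Theorem~\ref{thm:Riesz_p/q} converts the signed sum of canonical functions into the same signed sum of representing measures.
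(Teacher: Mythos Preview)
Your two-term identity in the $r \geq 1$ step, $\Omega_{\widehat P}=\Omega_{\widehat{Q_2}}-\Omega_{\widehat{Q_1}}$, is false. For the quarter-disk $P=P(1,2)$ bounded by $L_1:\{x_1=0\}$, $L_2:\{x_2=0\}$ and the unit circle, your construction gives $Q_1=\{x_1>0\}\cap\mathbb{P}(C)$ and $Q_2=\{x_2>0\}\cap\mathbb{P}(C)$, and in the chart $x_3=1$
\[
\Omega_{\widehat{Q_2}}-\Omega_{\widehat{Q_1}}
=\frac{2(x_1-x_2)}{(1-x_1^2-x_2^2)\,x_1 x_2},
\qquad
\Omega_{\widehat{P}}=\frac{x_1+x_2+1}{(1-x_1^2-x_2^2)\,x_1 x_2}.
\]
These differ; the left side even changes sign inside $P$ along $x_1=x_2$. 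The residue check you sketch fails for the same reason: $\mathrm{Res}_{L_1}\Omega_{\widehat P}$ is the canonical form of the segment $[0,1]\subset L_1$, whereas $\mathrm{Res}_{L_1}(\Omega_{\widehat{Q_2}}-\Omega_{\widehat{Q_1}})=-\mathrm{Res}_{L_1}\Omega_{\widehat{Q_1}}$ is minus the canonical form of $[-1,1]$. The root cause is that the adjoint numerator of $\Omega_{\widehat P}$ interpolates the residual points of $L_1,L_2$ on $Q$ away from $P$, and no two-term combination of half-disks can manufacture that numerator.

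The paper's argument supplies the missing mechanism. It inducts on $s$ and exploits that the full disk $\mathbb{P}(C)$ is a pseudo-positive geometry with $\Omega_C=0$. Given a curvy edge with adjacent lines $L_1,L_2$, one adjoins to $P$ the neighbouring chambers $R_1,R_2$ of the line arrangement inside $C$, setting $P_i=P\cup R_i$ and $P_3=P\cup R_1\cup R_2$; these are convex, contain $P$, have fewer linear edges, and satisfy $\Omega_{\widehat P}=\Omega_{\widehat{P_1}}+\Omega_{\widehat{P_2}}-\Omega_{\widehat{P_3}}$. The $s=2$ case with the lines meeting inside $C$ is handled separately by the six-term decomposition~\eqref{eq:ext_tr_pizza} involving the adjoint line, precisely because $P_3$ would then fail to be convex. (Your $r=0$ paragraph is also imprecise---the fan triangles from a vertex lie \emph{inside} the polygon, not around it---but the $r\geq 1$ step is where the argument actually breaks.)
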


\begin{proof}
    Our argument crucially relies on the fact that $\mathbb{P}(C)$ is a \textit{pseudo} positive geometry with canonical form equal to zero~\cite[Section 2.2]{Positive_geometries}. In particular, if a finite family of positive geometries $\{P_i\}$ form a \textit{signed triangulation} of $\mathbb{P}(C)$, then the sum of their canonical forms is equal to zero~\cite[Section 3]{Positive_geometries}. We work by induction on $s$. 
    If $s=1$, the claim is is vacuously true. If $s=2$, and the two lines intersect in a point inside $C$, then this case is explicitly worked out in~\eqref{eq:ext_tr_pizza}. If instead the two lines intersect outside $C$, we take $P$ to be bounded by two arcs of $Q$ and the two lines. There are three regions inside $C$: $P$, $R_1$ and $R_2$, see Figure~\ref{fig:lines_circ_1}. Pick $P_i=P \cup R_i$ for $i=1,2$. Then $\Omega_{P_1} + \Omega_{\widehat{P}_2} = \Omega_C + \Omega_{\widehat{P}} = \Omega_{\widehat{P}}$, where we always take the sign of the forms such that they are positive on the respective semialgebraic sets. For $s\geq 3$, we argue in a similar way. By assumption, $P$ has a boundary component on~$Q$. Take any connected component $S$ of $\partial P \cap Q$. The relative boundary of $S$ in $Q$ consists of two points $r_1$ and $r_2$, which are by assumption distinct, and given by the intersection of $Q$ with two lines $L_1$ and $L_2$ forming the boundary of $P$. Let us denote by $R_1$ and $R_2$ the regions inside $C$, different from $P$, containing $r_1$ and $r_2$, respectively, see Figure~\ref{fig:lines_circ_2}. Then, $P_i = P \cup R_i$ for $i=1,2$ and $P_3 = P \cup R_1 \cup R_3$ are all convex sets containing $P$. Moreover, each $P_1$ and $P_2$ is bounded by $s-1$ lines, while $P_3$ by $s-2$ lines. Finally, $\Omega_{\widehat{P}_1} + \Omega_{\widehat{P}_2} - \Omega_{\widehat{P}_3} = \Omega_C + \Omega_{\widehat{P}} = \Omega_{\widehat{P}}$, and the claim follows by induction on $s$.
\end{proof}

\begin{eg}[Minimally-curvy $(s+1)$-gon]\label{eg:curvy_pentagon}
    Let us look at the case when $r=1$ and the boundary of the polycon has only one curvy component. Then, $P(1,s)$ looks like an $(s+1)$-gon with one curvy edge, see Figure~\ref{fig:curvy_pentagon}. Let us denote by $L_i$ and $\ell_i$ for $i=1,\dots,s$ the lines bounding $P(1,s)$ and their corresponding equations, respectively. Assume that the index $i$ is ordered such that the lines constitute consecutive edges. Following the proof of Proposition~\ref{prop:ext_tr_con_lines}, we obtain the following canonical form triangulation:
    \begin{equation}\label{eq:tra_curvy_pent}
    \begin{aligned}
        \Omega_{\widehat{P}(1,s)} = \sum_{i=1}^{s-1} \Omega_{ii+1}  - \sum_{i=s+1}^{s-1} \Omega_i   \, ,
    \end{aligned}
    \end{equation}
    where $\Omega_{I}$ with $I \subset \{1,\dots,4\}$ indicates the canonical form of a polycon of type $(1,|I|)$ bounded by the conic and the lines $L_i$ with $i \in I$. We take the plyocons to  contain $P(1,s)$. Formula~\eqref{eq:tra_curvy_pent} follows easily by induction on $s \geq 1$.
    \begin{figure}[t]
\begin{minipage}[c]{0.31\linewidth}
\includegraphics[width=\linewidth]{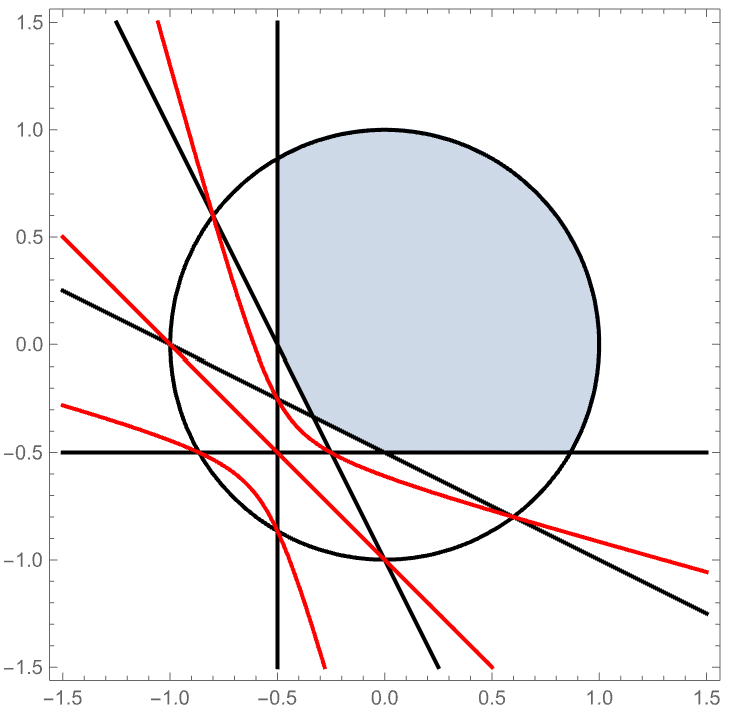}
\end{minipage}
\hfill
\begin{minipage}[c]{0.31\linewidth}
\includegraphics[width=\linewidth]{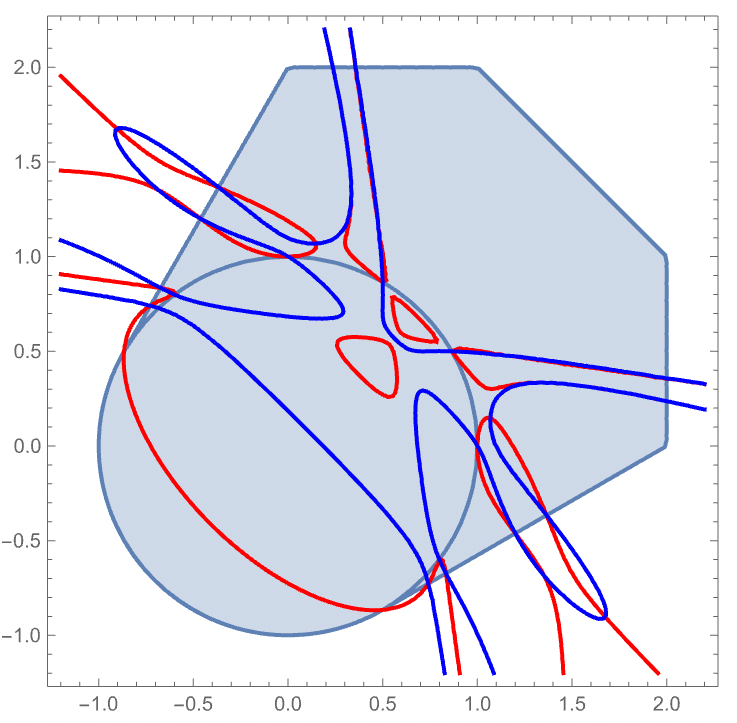}
\end{minipage}%
\hfill
\begin{minipage}[c]{0.37\linewidth}
\includegraphics[width=\linewidth]{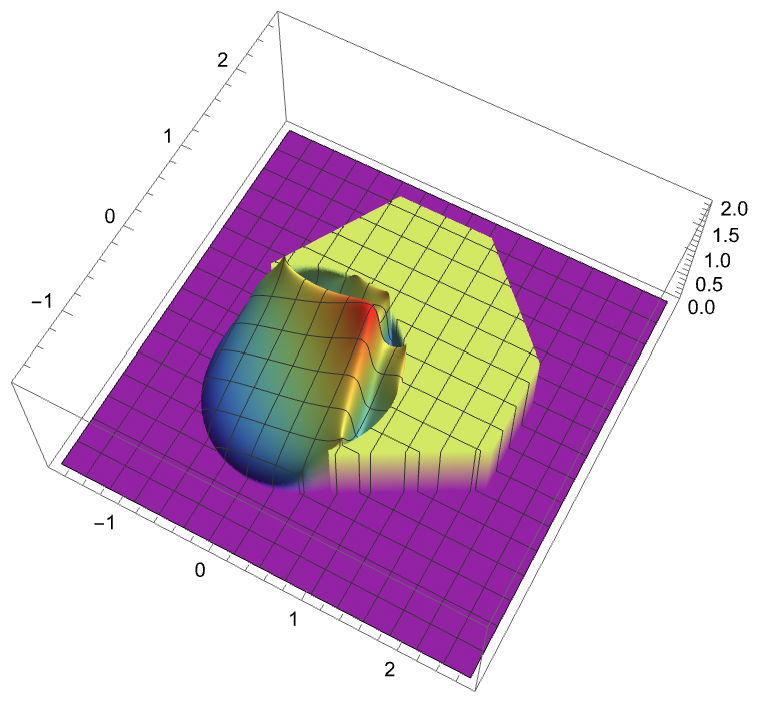}
\end{minipage}%
\caption{On the left, the polycon $P=P(1,4)$ from Example~\ref{eg:curvy_pentagon} with its adjoint cubic curve in red. In the middle, its dual $P^*$, with additional curves: in blue is the vanishing locus of the degree-seven polynomial~$f_{\widehat{P}}$, and in blue that of the degree-six $g_{\widehat{P}}$, see~\eqref{eq:dual_symbol}. On the right, a plot of the graph of the measure, see~\eqref{eq:mu_penta}. Note that the function is non-negative, supported on $P^*$, constant equal to one on $P^* \setminus C^*$ and continuous on $P^*$ bot not on $\partial P^*$.}
\label{fig:curvy_pentagon}
\end{figure}
    Note that~\eqref{eq:tra_curvy_pent} can be further reduced to a sum with therms involving only polycons of type $(1,1)$ and $(0,3)$, following Subsection~\ref{subsec:wo lines and a conic}. 
    We compute the measure representing $\Omega_{\widehat{P}(1,s)}$ as in~\eqref{eq:measure_sum}. On $\widehat{P}(1,s)^* \setminus C^*$ it is constant equal to one, while inside $C^*$ it is computed by
    \begin{equation}\label{eq:mu_penta}
    \mu_{\widehat{P}(1,s)}(y) = \frac{1}{2} + \frac{1}{\pi} \sum_{i=1}^s \arctan\left( \frac{\ell_i^*(y)}{\sqrt{q^*(y)}}\right) - \frac{1}{\pi} \sum_{i=1}^{s-1} \arctan\left( \frac{\ell_{ii+1}^*(y)}{\sqrt{q^*(y)}}\right) \, , \quad \forall \, y \in C^* \, ,
\end{equation}
with the same notation as below~\eqref{eq:mu_half_pizza}, where $\ell_{ii+1}$ is the line through the two points in $(L_i \cap Q) \cup (L_{i+1} \cap Q)$ not lying on $P(1,s)$. In ~\eqref{eq:mu_penta} the terms involving the inverse tangent can be collected to a single term, see~\eqref{eq:arctan_addition}, to obtain the form
\begin{equation}\label{eq:dual_symbol}
   \mu_{\widehat{P}(1,s)}(y) = \frac{1}{2} + \frac{1}{\pi} \arctan(\frac{f_{\widehat{P}(1,s)}(y)}{g_{\widehat{P}(1,s)}(y) \sqrt{q^*(y)} }) + k(y) \, , \quad \forall \, y \in C^* \, ,
\end{equation}
where $k(y) \in \mathbb{Z}$, and $f_{\widehat{P}(1,s)}$ and $g_{\widehat{P}(1,s)}$ are homogeneous polynomials of degree $ 2s-1$ and $2s-2$, respectively. Note that $k(y) \in \mathbb{Z}$ is a function on $C^*$, which is piecewise constant on the complement of the variety $V(f_{\widehat{P}(1,s)}) \cup V(g_{\widehat{P}(1,s)}) \subset \mathbb{R}^3$, and encodes the correct branch choice in the inverse tangent addition formula~\eqref{eq:arctan_addition}.

For $s=4$, we choose the polypol of $P(1,4)$ as in Figure~\ref{fig:curvy_pentagon} with boundary lines cut out by
\begin{equation}
    \ell_1 = (1, 0, 1/2) \, , \ \ell_2 = (0, 1, 1/2) \, , \ \ell_3 = (1, 2, 1) \, , \ \ell_4 = (2, 1, 1) \, .
\end{equation}
The polynomials $f_{\widehat{P}(1,4)}$ and $g_{\widehat{P}(1,4)}$ in~\eqref{eq:dual_symbol} have degree seven and six, respectively, and consist of around 500 terms. We plot their vanishing loci in Figure~\ref{fig:curvy_pentagon}. The graph of $\mu_{\widehat{P}(1,4)}$ is plotted in Figure~\ref{fig:curvy_pentagon}, and presents interesting features: it has a global maximum inside $C^*$. We observe that in this case $k(y)$ in~\eqref{eq:dual_symbol} is not constant on all $C^*$.
\end{eg}

From~\eqref{eq:mu_penta} it is not obvious that $\mu_{\widehat{P}(1,s)}$ is non-negative on $C^*$, but this is in fact true.

\begin{lemma}\label{lemma:mu_nonnegative}
    The function in~\eqref{eq:mu_penta} is non-negative. In particular, any polycon $(\mathbb{P}^2,P(1,s))$ of type $(1,s)$ is a completely monotone positive geometry.
\end{lemma}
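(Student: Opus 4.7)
The plan is to deduce non-negativity of the formula \eqref{eq:mu_penta} from a telescoping rearrangement combined with a geometric inequality between the polar linear forms appearing in it. Writing $F(x):=\tfrac{1}{2}+\tfrac{1}{\pi}\arctan x$ for the Cauchy distribution function, which is strictly increasing with $F(\mathbb{R})=(0,1)$, formula \eqref{eq:mu_penta} on $C^*$ reads
\[
  \mu_{\widehat{P}(1,s)}(y) \;=\; \sum_{i=1}^{s}F\bigl(a_i(y)\bigr)-\sum_{i=1}^{s-1}F\bigl(b_i(y)\bigr) \;=\; F\bigl(a_1(y)\bigr)+\sum_{i=1}^{s-1}\bigl(F(a_{i+1}(y))-F(b_i(y))\bigr),
\]
where $a_i=\ell_i^*/\sqrt{q^*}$ and $b_i=\ell_{i,i+1}^*/\sqrt{q^*}$. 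Since $F(a_1)\ge 0$, non-negativity follows from the pointwise comparison $a_{i+1}(y)\ge b_i(y)$ on $\widehat{P}(1,s)^*\cap C^*$, which by monotonicity of $F$ forces each bracket in the telescoped sum to be non-negative.

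The main geometric step is therefore to prove the linear inequality $\ell_{i+1}^*(y)\ge \ell_{i,i+1}^*(y)$ on $\widehat{P}(1,s)^*\cap C^*$ for every $i=1,\dots,s-1$. Using the explicit formula $\ell^*=-A^{-1}\ell/c(\ell,q)$ given below \eqref{eq:mu_half_pizza}, both polars are rational expressions in the conic and in the corresponding primal lines. The lines $L_{i+1}$ and $\ell_{i,i+1}$ share the spurious intersection point $p_{i+1}^-=L_{i+1}\cap\ell_{i,i+1}\cap Q$, so the difference $\ell_{i+1}^*-\ell_{i,i+1}^*$ is controlled by the remaining intersection points with $Q$: namely $p_{i+1}^+\in L_{i+1}\cap Q$, which lies on $\partial P(1,s)$, and $p_i^-\in\ell_{i,i+1}\cap Q$, which lies on the side of $Q$ opposite to $P(1,s)$. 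A direct computation — cleanest in the ice-cream model \eqref{eq:ice_cream_C} and then transported by projective duality — shows that the difference has a definite sign on $\widehat{P}(1,s)^*\cap C^*$, precisely the sign yielding $\ell_{i+1}^*\ge \ell_{i,i+1}^*$ there. Together with the value $1$ on $\widehat{P}(1,s)^*\setminus C^*$ already noted after \eqref{eq:mu_penta}, this gives $\mu_{\widehat{P}(1,s)}\ge 0$ on all of $\widehat{P}(1,s)^*$, and Theorem~\ref{thm:BHWC} upgrades non-negativity of the Riesz measure to complete monotonicity of $\Omega_{\widehat{P}(1,s)}$, proving that $(\mathbb{P}^2,P(1,s))$ is a completely monotone positive geometry.

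The main obstacle I expect is the sign bookkeeping in the geometric lemma: the normalization of $c(\ell,q)$, the orientation of the adjoint implicit in the canonical-form triangulation \eqref{eq:tra_curvy_pent}, and the labeling of which intersection of $L_i$ with $Q$ is declared $p_i^+$ versus $p_i^-$ all affect the sign of $\ell_{i+1}^*-\ell_{i,i+1}^*$, so one must check that these conventions combine to give the ``correct'' orientation uniformly on $\widehat{P}(1,s)^*\cap C^*$. An alternative route that avoids this geometric lemma is to proceed by induction on $s$ using Proposition~\ref{prop:ext_tr_con_lines} together with the arctangent addition identity \eqref{eq:arctan_addition}, at the cost of tracking the branch corrections $k(y)\in\mathbb{Z}$ from \eqref{eq:dual_symbol} chamber by chamber in the complement of $V(f_{\widehat{P}(1,s)})\cup V(g_{\widehat{P}(1,s)})$.
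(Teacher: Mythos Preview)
Your approach is essentially the same as the paper's, and in one respect cleaner. Both arguments reduce the non-negativity of \eqref{eq:mu_penta} to the single linear inequality $\ell_{i+1}^*(y)\ge \ell_{i,i+1}^*(y)$ on $C^*$ (note that $C^*\subset\widehat{P}(1,s)^*$, so your restriction to the intersection is just $C^*$). The paper phrases this as induction on $s$ and then, somewhat unnecessarily, invokes the arctangent addition formula \eqref{eq:arctan_addition} and splits into cases $ab<1$ versus $ab>1$; your telescoping plus the monotonicity of $F$ sidesteps that case analysis entirely and is the more direct route.

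Where the paper is sharper is in the geometric step you flag as the main obstacle. Rather than a ``direct computation in the ice-cream model,'' the paper observes that the line $\{\ell_{i+1}^*-\ell_{i,i+1}^*=0\}$ passes through the two points $\ell_{i+1}$ and $\ell_{i,i+1}$ (regarded as points in the dual plane), and is therefore projectively dual to the point $L_{i+1}\cap L_{i,i+1}$, which by construction lies on $Q$. Hence this line is \emph{tangent} to $Q^*$, so it does not separate $C^*$, and $\ell_{i+1}^*-\ell_{i,i+1}^*$ has constant sign on all of $C^*$; a single evaluation then fixes the sign as positive. This tangency argument replaces your sign bookkeeping entirely and works uniformly for all $i$, so you should use it in place of the coordinate computation you sketch. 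With that substitution your proof is complete, and your ``alternative route'' via the addition formula and branch tracking is in fact closer to what the paper does---but is not needed.
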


\begin{proof}
    We prove the claim by induction over $s \geq 1$. For $s=1$, the claim follows from the explicit expression in~\eqref{eq:mu_half_pizza}. Let $s>1$. Then, we can write ~\eqref{eq:mu_penta} as
    \begin{equation}\label{eq:mu_s_s-1}
         \mu_{\widehat{P}(1,s)}(y) = \mu_{\widehat{P}(1,s-1)}(y)  + \frac{1}{\pi} \arctan\left( \frac{\ell_s^*(y)}{\sqrt{q^*(y)}}\right) - \frac{1}{\pi} \arctan\left( \frac{\ell_{s-1,s}^*(y)}{\sqrt{q^*(y)}}\right) \, , \quad \forall \, y \in C^* \, .
    \end{equation}
    By induction hypothesis, $\mu_{\widehat{P}(1,s-1)}$ is non-negative on $C^*$, and we show that also the other term in~\eqref{eq:mu_s_s-1} is non-negative. For that, we use the addition formula for inverse tangent functions in~\eqref{eq:arctan_addition} with $a=\ell^*_{s}(y)/\sqrt{q^*(y)}$ and $b=-\ell^*_{s-1,s}(y)/\sqrt{q^*(y)}$. Then, once the two terms are added the argument of the inverse tangent is equal to
    \begin{equation}\label{eq:added_argument}
        \frac{a+b}{1-ab} = \frac{\ell_s^*(y)-\ell_{s-1,s}^*(y)}{\ell_s^*(y)\ell_{s-1,s}^*(y) + q^*(y)}\sqrt{q^*(y)} \, , \quad \forall \, y \in C^* \, .
    \end{equation}
    If $ab < 1$, then $\delta(a,b)$ in~\eqref{eq:arctan_addition} is equal to zero. We argue that in this case $\ell_s^*(y)-\ell_{s-1,s}^*(y) \geq 0$, so that~\eqref{eq:added_argument} is non-negative, and hence so is~\eqref{eq:mu_s_s-1}, as $\arctan(z) \geq 0$ for $z \geq 0$. In fact, $\ell_s^*(y)-\ell_{s-1,s}^*(y) =0 $ defines the equation of the line $L$ through $\ell_s^*$ and $\ell_{s-1,s}^*$. This is projectively dual to the intersection point of $L_s $ with $L_{s-1,s}$, which lies on $Q$. Therefore, $L$ is tangent to $Q^*$, and hence $\ell_s^*-\ell_{s-1,s}^*$ is either positive or negative on $C^*$. Is is easy to verify that by construction we have that $\ell_s^*-\ell_{s-1,s}^*$ is positive on $C^*$, see Figure~\ref{fig:lemma_proof}.

    We now consider the case when $ab > 1$, i.e. when $-\ell_s(y) \ell_{s-1,s}(y) >q^*(y)$. If $a <0$, i.e. $\ell_s(y) \leq 0$, then by construction one can verify that $\ell_{s-1,s}(y) \leq 0$, see Figure~\ref{fig:lemma_proof}. Since $q^*(y)>0$ on $C^*$, this region is empty. On the other hand, if $a>0$, then $\delta(a,b) = 1$ in~\eqref{eq:arctan_addition}. Hence, in this case the resulting function is always non-negative as $\arctan$ is bounded between $-\pi/2$ and~$\pi/2$.
\end{proof}

\begin{figure}[t]
\centering
  \includegraphics[width=0.32\linewidth]{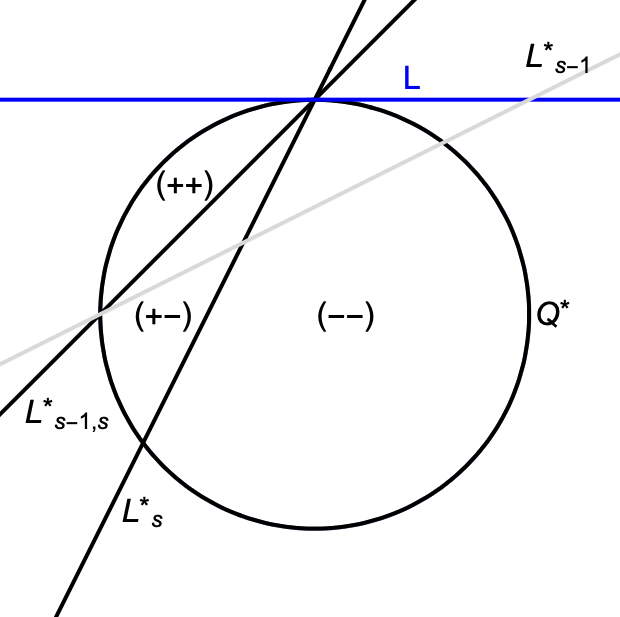}
 \caption{Figure for the argument in the proof of Lemma~\ref{lemma:mu_nonnegative}. The sign-regions $(++),(+-)$ and $(--)$ inside $C^*$ are with respect to $(\ell_s^*(y_1,y_2,1),\ell^*_{s-1,s}(y_1,y_2,1))$. The line $L$ is cut out by $\ell_s^*(y)-\ell^*_{s-1,s}(y)=0$, and we have that $\ell_s^*(y)-\ell^*_{s-1,s}(y) \geq 0$ for every $y \in C^*$.}
 \label{fig:lemma_proof}
\end{figure}

From this result, we deduce the following.

\begin{theorem}\label{thm:polycones_CM}
    Every polycon $(\mathbb{P}^2,P(r,s))$ is a completely monotone positive geometry.
\end{theorem}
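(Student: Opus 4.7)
The plan is to proceed by induction on the number $r$ of curvy boundary components of $P$. The base cases are immediate: for $r=0$, $P$ is a convex polygon and Theorem~\ref{thm:polytopes_CM} applies; for $r=1$, this is exactly Lemma~\ref{lemma:mu_nonnegative}.

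For the inductive step with $r\geq 2$, I would construct a canonical-form triangulation
\begin{equation*}
\Omega_{\widehat{P}} \;=\; \sum_i c_i\, \Omega_{\widehat{P}_i}, \qquad c_i>0,
\end{equation*}
in which each $P_i\supsetneq P$ is a polycon having at most $r-1$ curvy edges. The key point is that $P\subset P_i$ forces $\widehat{P}_i^{*}\subset\widehat{P}^{*}$, so each measure $\mu_{\widehat{P}_i}$, non-negative by the inductive hypothesis, has support inside $\widehat{P}^{*}$. Uniqueness of the Laplace representation on $\widehat{P}^{*}$ coming from Theorem~\ref{thm:hyp_PDE} then lifts the identity to the measure level, giving $\mu_{\widehat{P}}=\sum_i c_i\,\mu_{\widehat{P}_i}\geq 0$ and hence complete monotonicity of $(\mathbb{P}^2,P)$.

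To produce such a triangulation, I would exploit that $\mathbb{P}(C)$, the hyperbolicity region of the single conic $Q$, is a pseudo positive geometry with vanishing canonical form. Letting $R_1,\dots,R_k$ denote the connected components of $\mathbb{P}(C)\setminus P$ and setting $P_j:=P\cup R_j$, the relation $0=\Omega_C=\Omega_{\widehat{P}}+\sum_j\Omega_{\widehat{R}_j}$ combined with $\Omega_{\widehat{P}_j}=\Omega_{\widehat{P}}+\Omega_{\widehat{R}_j}$ yields
\begin{equation*}
\sum_{j=1}^{k}\Omega_{\widehat{P}_j} \;=\; (k-1)\,\Omega_{\widehat{P}},
\end{equation*}
so one may take $c_j=1/(k-1)$ whenever $k\geq 2$. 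Each $P_j$ strictly contains $P$, and when $R_j$ is a simple cap (bounded by a single chord and a single arc) the merging across the shared chord consolidates two pairs of adjacent boundary arcs into single arcs and so strictly reduces the total arc count. When $R_j$ is not a simple cap, one first sub-triangulates $R_j$ via the argument of Proposition~\ref{prop:ext_tr_con_lines} so that the pieces to be merged with $P$ are simple caps and the arc count again strictly decreases.

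The main obstacle will be the combinatorial bookkeeping: checking that $k\geq 2$ whenever $r\geq 2$, and that the merging procedure always yields polycons with strictly fewer curvy edges. Both should follow from a careful analysis of how the $s$ chords subdivide $\mathbb{P}(C)$ around the convex polycon $P$, using that distinct arcs of $\partial P$ are separated by chords and therefore lie on the boundaries of distinct complementary regions, and that by Definition~\ref{def:polycon_r_s} every bounding line meets $Q$ in two transverse real points, so each chord endpoint is an arc–chord vertex and each merging does eliminate arcs on both sides.
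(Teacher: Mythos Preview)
Your inductive scheme is sound, and it differs from the paper's route. The paper does not induct on $r$ at all; it observes that the boundary of $P(r,s)$ breaks into $r$ maximal runs of $s_1,\dots,s_r$ consecutive linear edges, and writes in one stroke the canonical-form identity $\Omega_{\widehat P}=\sum_{i=1}^r\Omega_{\widehat P_i}$, where each $P_i$ is the polycon of type $(1,s_i)$ cut out by only the $i$-th run of lines (and containing $P$). Lemma~\ref{lemma:mu_nonnegative} then applies to every summand simultaneously, and this yields the closed formula~\eqref{eq:mu_r_s} for $\mu_{\widehat P}$. Your averaging identity $\Omega_{\widehat P}=\tfrac{1}{k-1}\sum_j\Omega_{\widehat P_j}$ is more uniform and avoids naming the pieces explicitly, but it does not produce the explicit measure formula without unwinding the recursion.

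Two remarks will clean up your argument considerably. First, one always has $k=r$: since $P$ is convex and meets $\partial\mathbb{P}(C)$ in exactly $r$ arcs, the complementary arcs $B_1,\dots,B_r$ on $Q$ are in bijection with the connected components of $\mathbb{P}(C)\setminus\overline P$, so the condition $k\ge 2$ for $r\ge 2$ is automatic. Second, the simple-cap/non-simple-cap dichotomy and the fallback to Proposition~\ref{prop:ext_tr_con_lines} are unnecessary. For \emph{every} $j$ one has
\[
P_j \;=\; P\cup R_j \;=\; \bigcap_{i\neq j}\bigl(\mathbb{P}(C)\setminus R_i\bigr),
\]
and each factor $\mathbb{P}(C)\setminus R_i$ is convex (its boundary is a circular arc together with the $i$-th chord-sequence of $\partial P$, which is convex from the $P$ side). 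Hence $P_j$ is convex; and its intersection with $Q$ is $(\bigcup_k A_k)\cup B_j$, in which $B_j$ absorbs its two flanking arcs into a single arc, giving exactly $r-1$ arc components. Thus $P_j$ is a convex polycon of type $(r-1,s-s_j)$ and the induction closes without any subdivision of $R_j$. Your proposed fallback would in fact be dangerous: Proposition~\ref{prop:ext_tr_con_lines} produces signs $\varepsilon_i=\pm 1$, and a single minus sign would wreck the nonnegativity conclusion.
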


\begin{proof}
    A polycon $P=P(r,s)$ of type $(r,s)$ is also specified by a sequence of positive integer numbers $(s_1, s_2 ,\dots, s_r)$ with $s= \sum_{i=1}^r s_i$, denoting the number of consecutive linear edges. Note that the sequences of $s_i$ and $s_{i+1}$ linear edges is separated by exactly one curvy edge on the conic $Q$. As usual, let us denote by $C$ the hyperbolicity cone of $Q$. Using the argument as in the proof of Proposition~\ref{prop:ext_tr_con_lines}, we have the canonical form triangulation $\Omega_{\widehat{P}} = \sum_i \Omega_{\widehat{P}_i}$, where $P_i $ is the polycon of type $(s_i,1)$ bounded by the same $s_i$ consecutive lines bounding $P$, such that $P \subset P_i$. We can then compute the measure $\mu_{\widehat{P}}$ representing $\Omega_{\widehat{P}}$. We find that $\mu_{\widehat{P}}$ is constant equal to one on $\widehat{P}^* \setminus C^*$, while 
    \begin{equation}\label{eq:mu_r_s}
        \mu_{\widehat{P}}(y) = \sum_{i=1}^r \mu_{\widehat{P}_i}(y) = \frac{1}{2} + \frac{1}{\pi} \arctan( \frac{f_{\widehat{P}}(y)}{g_{\widehat{P}}(y) \sqrt{q^*(y)} } ) + k(y) \, , \quad \forall \, y \in C^* \, ,
    \end{equation}
    where $f_{\widehat{P}}$, $g_{\widehat{P}}$ are homogeneous polynomials of degree $2s-r$ and $2s-r-1$, respectively\footnote{In the second equality of ~\eqref{eq:mu_r_s} we implicitly used the equation $\arctan(\frac{1}{x}) = {\rm sign}(x) \, \frac{\pi}{2} - \arctan(x)$, valid for every $x \in \mathbb{R} \setminus\{0\}$.}, and $k(y) \in \mathbb{Z}$ is constant on $C^* \setminus (V(f_{\widehat{P}}) \cup V(g_{\widehat{P}}))$.
    In~\eqref{eq:mu_r_s} the measure $\mu_{\widehat{P}_i}$ has the form as in ~\eqref{eq:mu_penta}, with the $s_i$ lines $\ell_i$ being those bounding $P_i$. By Lemma~\ref{lemma:mu_nonnegative} we have that $\mu_{\widehat{P}_i}$ is non-negative for every $1 \leq i \leq r$, and hence~\eqref{eq:mu_r_s} is also non-negative. Note that~\eqref{eq:mu_r_s} gives a formula for computing the measure of any polycon of type $(r,s)$. 
\end{proof}

\begin{definition}\label{def:dual_symbol}
    Given a polygon $P$ of type $(r,s)$, we call the homogeneous polynomials $f_{\widehat{P}}$ and $g_{\widehat{P}}$ in~\eqref{eq:mu_r_s} the \textit{dual letters}\footnote{This terminology is motivated by physics, where the \textit{letters} of an integral describe its singular loci and are used to construct the arguments of the functions expressing them~\cite{Goncharov:2010jf, Duhr:2011zq}.} of~$P$. These determine the measure $\mu_{\widehat{P}}$ representing the canonical function of $P$, up to a piecewise integer constant function $k(y)$ on $C^*$.
\end{definition}

Note that the dual letters can be computed from ~\eqref{eq:mu_penta} and ~\eqref{eq:mu_r_s} solely from $\ell_i$ and $q$, which in turn allow to compute $\ell_{ii+1}$, and then by repeated application of the addition formula for the inverse tangent function~\eqref{eq:arctan_addition}.
On the other hand, it would be interesting to provide a geometric understanding of dual letters, in terms of their interpolation conditions with the algebraic boundary of $\widehat{P}^*$, the lines $\ell_i^*$ and the adjoint hypersurface of $P$, see for example Figure~\ref{fig:pizza_slice2}.

\begin{eg}[The maximally-curvy $2r$-gon]\label{eg:curvy_k_gons}
Let us consider the case of as many curvy edges as possible, namely when $r=s$.
Then, $P(r,r)$ looks like a curvy $2r$-gon, with $r$ linear edges sequentially alternating to $r$ curvy edges on the conic, see Figure~\ref{fig:line_conic_polypols}. Following the proof of Theorem~\ref{thm:polycones_CM}, we find $\Omega_{\widehat{P}(r,r)} = \sum_{i=1}^r \Omega_{\widehat{P}_i}$, where the polycons $P_i$ are of type $(1,1)$.
The dual semialgebraic set $P(r,r)^*$ looks like a disk with $r$ horns, see Figure~\ref{fig:line_conic_polypols}. On the horns, outside $C^*$, the measure $\mu_{\widehat{P}(r,r)}$ representing $\Omega_{\widehat{P}(r,r)}$ is constant equal to one. Inside the $C^*$ it is equal to
\begin{equation}\label{eq:mu_k_gon}
    \mu_{\widehat{P}(r,r)}(y) = \frac{r}{2} + \frac{1}{\pi} \sum_{i=1}^r \arctan\left( \frac{\ell_i^*(y)}{\sqrt{q^*(y)}}\right) \, , \quad \forall \, y \in C^* \, .
\end{equation}

Let us rewrite and plot the measure for the case of $r=4$. We take the eight vertices of $P(4,4)$ to be evenly distributed on the unite circle, starting from $(0,1,1)$, see Figure~\ref{fig:line_conic_polypols}.
We can then write~\eqref{eq:mu_k_gon} as
\begin{equation}\label{eq:mu_k_4}
    \frac{1}{2} - \frac{1}{\pi}\arctan \left(
 \tfrac{ 17 y_1^4 - 128 y_1^3 y_2 + 34 y_1^2 y_2^2 + 128 y_1 y_2^3 + 17 y_2^4 - 
   2 (33 + 20 \sqrt{2}) (y_1^2 + y_2^2) y_3^2 + (49 + 40 \sqrt{2}) y_3^4}{8 \sqrt{2 + \sqrt{2}}
    y_3  (-7 (y_1^2 + y_2^2) + (7 + 4 \sqrt{2}) y_3^2) \sqrt{y_3^2 -y_2^2 - y_1^2}
   }\right) \, .
\end{equation}
This gives explicit expressions for the dual letters of this polycon, whose vanishing locus
we plot in Figure~\ref{fig:line_conic_polypols}.
    
\end{eg}

\begin{figure}[t]
\begin{minipage}[c]{0.31\linewidth}
\includegraphics[width=\linewidth]{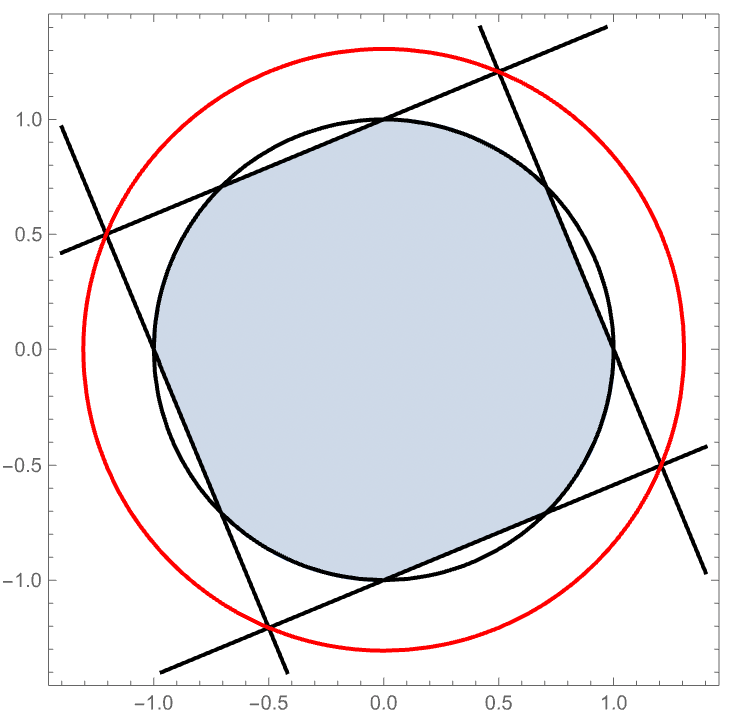}
\end{minipage}
\hfill
\begin{minipage}[c]{0.31\linewidth}
\includegraphics[width=\linewidth]{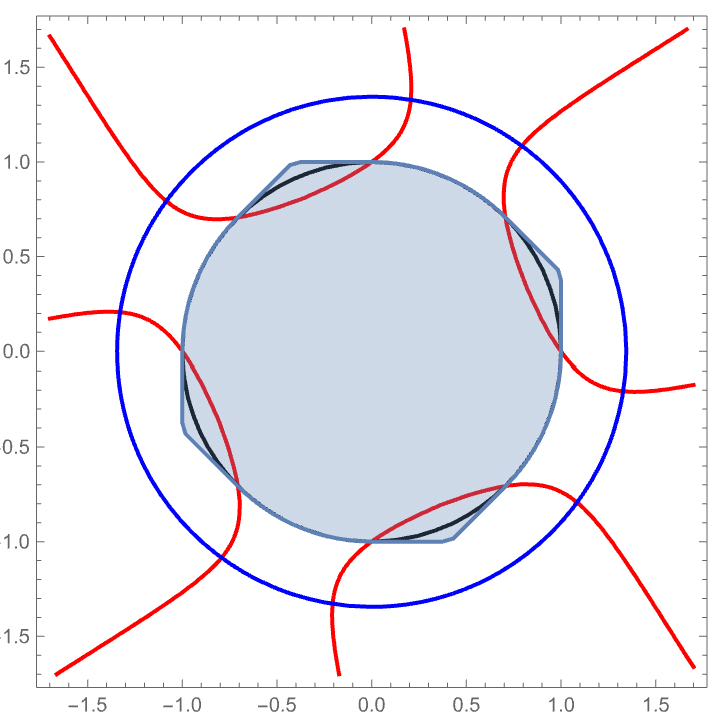}
\end{minipage}%
\hfill
\begin{minipage}[c]{0.37\linewidth}
\includegraphics[width=\linewidth]{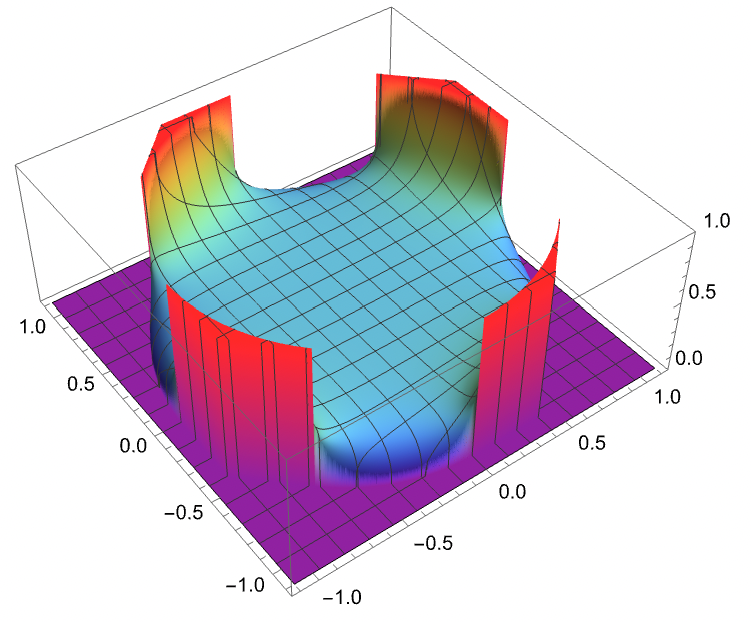}
\end{minipage}%
\caption{On the left, an octagonal polycon $P(4,4)$ as in Example~\ref{eg:curvy_k_gons} with its adjoint curve in red (which consists also of the line at infinity $x_3=0$). In the middle, its dual $P(4,4)^*$, with additional curves: in red is the vanishing locus of the quartic in the numerator of~\eqref{eq:mu_k_4}, while in blue that of the quadric in the denominator (the denominator involves also the line at infinity $y_3=0$). On the right, a plot of the graph of the measure $\mu_{\widehat{P}(4,4)}$, see~\eqref{eq:mu_k_4}.}
\label{fig:line_conic_polypols}
\end{figure}

\subsection{More conics}
\label{subsec:Any number of lines and conics}

In this subsection we show that knowing the measure for any number of lines and a single conic allows to compute the measure also for positive geometries bounded by several conics. Let $t \geq 1$ and $Q_j \subset \mathbb{P}_\mathbb{R}^2$ for $j=1,\dots,t$ be conics with hyperbolicity regions $C_j$ and assume that $C:=\bigcap_{j=1}^t C_j$ is non-empty. Let also $L_i$ be $s$ lines. We assume that each pair of conics intersects in either two or four real points, which implies that no pair of conics are disjoint or tangent. We also assume that each line intersects every conics in two real points. Let $P$ be a semialgebraic set bounded by the conics $Q_j$ and the lines $L_i$. By our assumptions, $(\mathbb{P}^2,P)$ is a positive geometry~\cite[Theorem 2.15]{Polypols}. Also, by Corollary~\ref{cor:CM_PG_hyperbolic} a necessary condition for $P$ to be hyperbolic, and hence completely monotone, is that $P \subset C$. 

\begin{definition}\label{def:polycon_r_s_t}
    Let $P(r,s,t) \subset \mathbb{P}(C)$ be a full-dimensional semialgebraic set whose boundary consists of $r$ components on the conics $Q_j$ and $s$ components on the lines $L_i$. We call $P(r,s,t)$ a \textit{polycon of type $(r,s,t)$}.
\end{definition}

In the of a single conic, $t=1$, we retrieve the polycons of type $(r,s)$ considered in the previous subsection. We now extend Proposition~\ref{prop:ext_tr_con_lines} to the following.

\begin{proposition}\label{prop:ext_tr_con_lines_conics}
    Let $P=P(r,s,t)$ be a polycon of type $(r,s,t)$. Then, the same conclusion of Proposition~\ref{prop:ext_tr_con_lines} holds true for $P$. 
\end{proposition}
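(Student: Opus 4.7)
The plan is to extend the inductive argument of Proposition~\ref{prop:ext_tr_con_lines}, with the flip construction now taking place in any of the hyperbolicity disks $\mathbb{P}(C_j)$ for $j = 1, \ldots, t$, each of which is a pseudo positive geometry with vanishing canonical form. I would induct on the total number of boundary components $N = r + s$ of $P$, reducing at each step to a signed sum of polycons with strictly fewer boundary components. The base cases are polycons already of types $(0,3)$ or $(1,1)$ (trivial), together with single-conic polycons of type $(r,s,1)$ which are directly covered by Proposition~\ref{prop:ext_tr_con_lines}.

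For the inductive step, with $N \geq 4$, pick any arc $S$ of $\partial P$ on some conic $Q_j$ whose two endpoints $r_1, r_2$ lie on distinct other boundary components $B_1, B_2$ of $P$ (each either a line or an arc on another conic). Extend the components of $\partial P \setminus S$ to full algebraic curves inside the disk $\mathbb{P}(C_j)$, partitioning it into regions of which one is $P$. Let $R_1, R_2$ denote the regions immediately adjacent to $P$ across $B_1, B_2$, and put $P_1 := P \cup R_1$, $P_2 := P \cup R_2$, $P_3 := P \cup R_1 \cup R_2$. Since $P_1 \cap P_2 = P$ and $P_1 \cup P_2 = P_3$, the inclusion–exclusion property of canonical forms across signed triangulations yields
\[
\Omega_{\widehat{P}} \;=\; \Omega_{\widehat{P}_1} + \Omega_{\widehat{P}_2} - \Omega_{\widehat{P}_3}.
\]
Each $P_i$ is a convex polycon containing $P$, with one fewer boundary component at the chosen vertex; the induction hypothesis then gives the desired decomposition on each $P_i$, and assembly produces the decomposition of $\Omega_{\widehat{P}}$.

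The hard part will be the \emph{degenerate flip}, in which every admissible arc $S$ has both its endpoints on a single common boundary component of $P$, so $B_1 = B_2$ and hence $R_1 = R_2$ and the identity collapses trivially. This occurs for lens-like polycons of type $(2,0,2)$ and for certain higher multi-conic configurations in which adjacent arcs share both vertices. To address this, I would pre-process $P$ by inserting an auxiliary chord through a suitable pair of conic intersection points; this introduces a new linear edge and separates the shared endpoints onto distinct boundary components. The chord-split sub-polycons then admit the non-degenerate flip, and by exploiting the vanishing $\Omega_{\mathbb{P}(C_j)} = 0$ inside a chosen disk one can rewrite the signed sum so that every polycon appearing still contains $P$. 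Verifying the containment $\widehat{P} \subset \widehat{P}_i$ and the convexity of each $P_i$ at every step is then the principal technical work, carried out by always choosing $R_1, R_2$ to be the regions of $\mathbb{P}(C_j)$ immediately adjacent to $P$.
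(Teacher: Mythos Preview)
Your approach differs from the paper's in a key structural way. You induct on $N=r+s$ and reuse the single-conic flip inside an arbitrary disk $\mathbb{P}(C_j)$. The paper instead inducts on $t$: it picks a conic $Q_1$, replaces every arc of $\partial P$ on $Q_1$ by a polygonal approximation built from finitely many tangent lines to $Q_1$, and so produces polycons $P_1,P_2,P_3\supset P$ involving strictly fewer conics. Tangent lines are supporting hyperplanes of $C_1$, so $P_i$ is automatically convex, contains $P$, and by choosing the tangent points densely enough along each arc one keeps $P_i$ inside the remaining $C_k$'s; hence $P_i$ is again a polycon of type $(\cdot,\cdot,t-1)$ or $(\cdot,\cdot,t-2)$ and the induction closes.

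The gap in your argument is exactly the degenerate case, and it is not peripheral: your flip \emph{produces} such cases. For instance, applying your step to the curvy $4$-gon $P(4,0,2)$ yields $P_1,P_2$ that are curvy $2$-gons $P(2,0,2)$ and $P_3=\mathbb{P}(C_1)$. So your base cases are not only $(0,3)$, $(1,1)$, and $t=1$; every lens $P(2,0,2)$ must be handled directly. Your proposed fix --- inserting the chord through the two conic intersection points --- splits $P$ \emph{internally} into two $(1,1)$ pieces $P_a,P_b\subset P$, violating the containment $P\subset P_i$ you need for~\eqref{eq:measure_sum}. Passing to complements via $\Omega_{\mathbb{P}(C_j)}=0$ does not repair this: $\mathbb{P}(C_j)\setminus P_a$ still omits $P_a\subset P$, and the half-disk on the $P_a$ side of the chord omits $P_b$. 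Your sentence ``one can rewrite the signed sum so that every polycon appearing still contains $P$'' is precisely the point that needs a construction, and none is given. The tangent-line idea is what supplies it.
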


\begin{proof}
    We work by induction on $t$. The claim for $t=1$ is precisely the content of Proposition~\ref{prop:ext_tr_con_lines_conics}. Note that $r \geq t$, so let us assume that $r\geq t >1$. Consider a conic $Q_1$ and all connected boundary components $S_k$ of $P$ on $Q_1$, where $k=1, \dots,N$. Note that each $S_k$ is an arc on $Q_1$, extending between two points $a_{k},b_{k} \in Q_1$. We claim that for every $k=1,\dots,N$ there exists a finite sequence of points $r_{k,l} \in S_k$ for $l=1, \dots, n_k$ starting from $a_{k}$ and ending with $b_{k}$, such that the set $R_k$, bounded by $Q_1$ and the $n_k$ tangent lines $T_{k,l}$ to $Q_1$ at $r_{k,l}$, adjacent to $P$, lies in $\bigcap_{j >1}^{t} C_j$. The latter property is equivalent to $R_k$ being disjoint from any $Q_j$ for $j \neq 1$. By the assumption on the conics, this is true for $S_k$. Since the complement of $\mathbb{P}^2_\mathbb{R} $ by $\mathbb{P}(\bigcup_{j=1}^t V(Q_j))$ is an open set, it is possible to choose a sequence of points $r_{k,l}$ such that $R_k$ lies within such complement. Then, we set $R = \cup_{k=1}^N R_k$ and $P_1 := P \cup R$. By construction, $P_1$ is convex and contains $P$. Moreover, $P_1$ is a polycon of type $(r,s+n,t-1)$, with $n=\sum_{k=1}^N n_k$, since every boundary component $S_k$ for $k=1, \dots, N$ on $Q_1$ has been replaced by the linear components $T_{k,l}$ for $l=1,\dots,n_k$. We repeat the same procedure for another conic $Q_2$ forming the boundary of $P$, and obtain a set $P_2$ with the same properties as $P_1$. Then, we define the semialgebraic set $P_3 = P_1 \cup P_2$. Again, $P_3$ is convex, contains $P$, and it is a polycon of type $(r,s+m,t-2)$, for some positive integer $m$. Note that $P_3 \subset \bigcap_{j>2}^t C_j$ if $t>2$, and $P_3$ is a polygon if $t=2$. Hence, $P_3$ is hyperbolic.
    Then, we have the canonical form triangulation
    \begin{equation}
        \Omega_{\widehat{P}} = \Omega_{\widehat{P}_1} + \Omega_{\widehat{P}_2} - \Omega_{\widehat{P}_3} \, .
    \end{equation}
    The claim therefore follows by induction on $t$.
\end{proof}

In particular, Proposition~\ref{prop:ext_tr_con_lines_conics} provides an algorithm for computing the measure of any polycon $P$ of type $(r,s,t)$. Such measure is then constant equal to one on $\widehat{P}^* \setminus (\bigcup_{j=1}^t C^*_j)$ and piecewisely equal to a an inverse tangent function on $\bigcup_{j=1}^t C^*_j$.
We illustrate this with two examples. Through the following, let $Q_1$ and $Q_2$ be two conics in $\mathbb{P}^2_\mathbb{R}$, with hyperbolicity cones $C_1$ and $C_2$, respectively.

\begin{figure}[t]
\centering
\begin{minipage}[c]{0.32\linewidth}
\includegraphics[width=\linewidth]{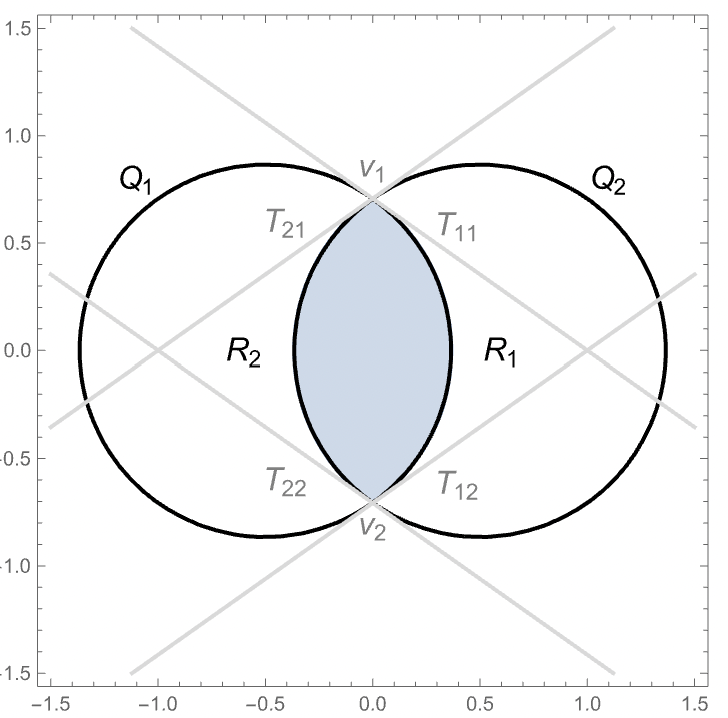}
\end{minipage}
\hfill
\begin{minipage}[c]{0.32\linewidth}
\includegraphics[width=\linewidth]{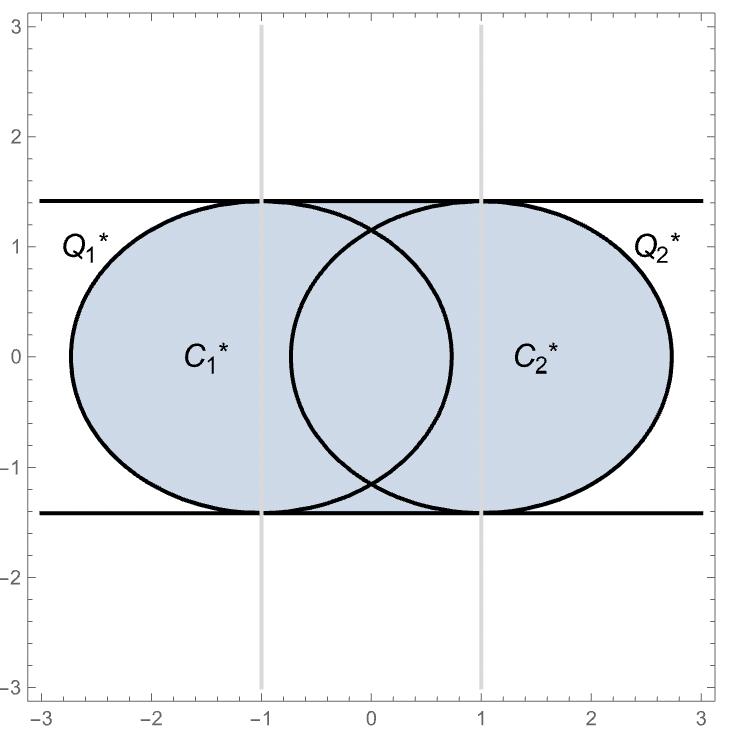}
\end{minipage}%
\hfill
\begin{minipage}[c]{0.35\linewidth}
\includegraphics[width=\linewidth]{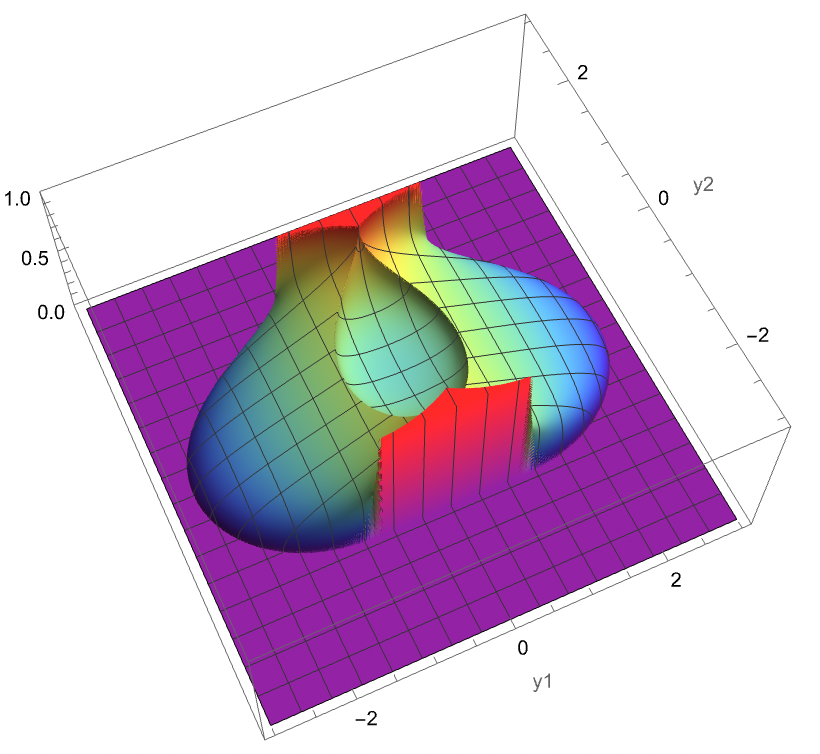}
\end{minipage}
\caption{On the left, a curvy two-gon $P$, a polycon of type $(2,0,2)$, from Example~\ref{eg:curvy_2gon}. This is a positive geometry and has an adjoint curve given by the line at infinity ${x_3=0}$. In the middle, its dual $P^*$ and on the right the plot of the measure $\mu_{\widehat{P}}$ for the canonical function of $P$, see~\eqref{eq:meas_2con}. Note that $\mu_{\widehat{P}}$ is non-negative, and hence $(\mathbb{P}^2,P)$ is a completely monotone positive geometry. }
\label{fig:2con}
\end{figure}

\begin{eg}[The curvy $2$-gon]\label{eg:curvy_2gon}
     Assume that $Q_1$ and $Q_2$ intersect in two distinct real points $v_1$, $v_2$ and consider the polycon $P$ of type $(2,0,2)$ given by $\widehat{P}=C_1 \cap C_2$, see Figure~\ref{fig:2con}.

     We triangulate the canonical function of $P$ as follows. Denote by $T_{ij}$ the line tangent to $Q_i$ at $v_j$ for $i,j \in \{1,2\}$. Denote by $R_i$ the region bounded by $Q_i$, $T_{i1}$ and $T_{i2}$, adjacent to $P$. Set $P_i = P \cup R_i$ and $P_3 = P \cup R_1 \cup R_2$. Then, for every $i=1,2,3$ we have that $P_i$ is a convex positive geometries containing $P$, see Figure~\ref{fig:2con_tr}. Note that $P_1$ and $P_2$ are polycons of type $(1,2)$, while $P_3$ is a polygon with four edges supported by the lines $T_{ij}$. Then, we have a canonical form triangulation $\Omega_{\widehat{P}} = \Omega_{\widehat{P}_1} + \Omega_{\widehat{P}_1} - \Omega_{\widehat{P}_3}$, which allows to compute the measure $\mu_{\widehat{P}}$ representing $\Omega_{\widehat{P}}$. The dual cone $\widehat{P}^*$ is the convex hull of the union of the dual conics $Q^*_1 \cup Q^*_2$. Let us denote by $\mu_{\widehat{P}_i}$ the measure representing $\Omega_{\widehat{P}_i}$. Since $\widehat{P}_3$ is a polygon, $\mu_{\widehat{P}_3}$ is constant equal to one on the dual quadrilateral $P^*_3$, and zero elsewhere. 

It is then easy to verify that
    \begin{equation}\label{eq:meas_2con}
        \mu_{\widehat{P}}(y)=
\begin{cases}
1 \, , & y \in \widehat{P}^* \setminus \bigl(C_1^* \cup C_2^*\bigr) \, ,\\[2pt]
\mu_{\widehat{P}_1}(y) \, , & y \in C_1^* \setminus C_2^* \, ,\\[2pt]
\mu_{\widehat{P}_2}(y) \, , & y \in C_2^* \setminus C_1^* \, ,\\[2pt]
\mu_{\widehat{P}_1}(y)+\mu_{\widehat{P}_2}(y)-1 \, , & y \in C_1^* \cap C_2^* \, ,
\end{cases}
    \end{equation}
where $\mu_{\widehat{P}_i}$ can be computed according to~\eqref{eq:mu_pizza_slice2}. A priori, it is not clear that $\mu_{\widehat{P}}$ is non-negative on $C^*_1 \cap C^*_2$, but this is true for the examples we computed, see for instance~Figure~\ref{fig:2con}.

\begin{figure}[t]
\centering
\begin{minipage}[c]{0.25\linewidth}
\includegraphics[width=\linewidth]{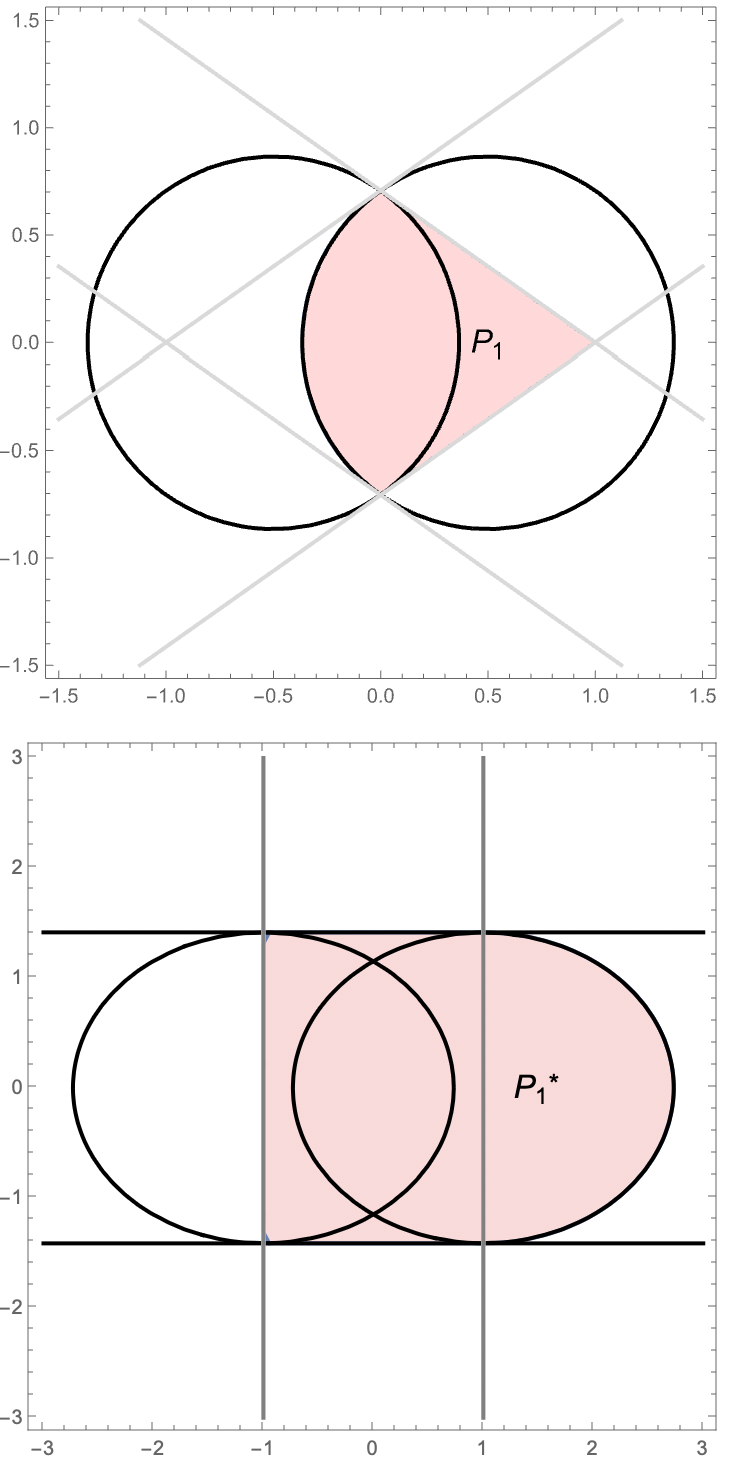}
\end{minipage}
\hfill
\begin{minipage}[c]{0.24\linewidth}
\includegraphics[width=\linewidth]{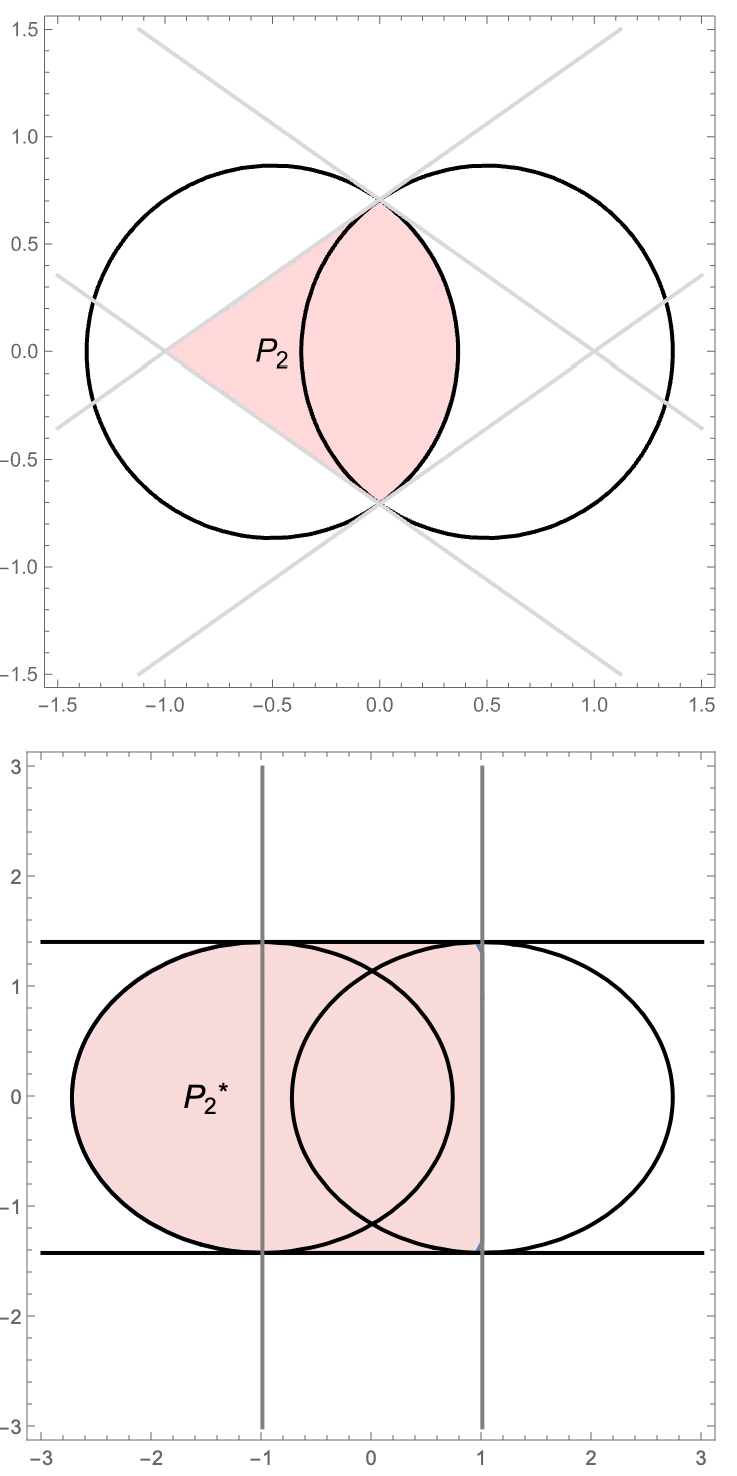}
\end{minipage}%
\hfill
\begin{minipage}[c]{0.24\linewidth}
\includegraphics[width=\linewidth]{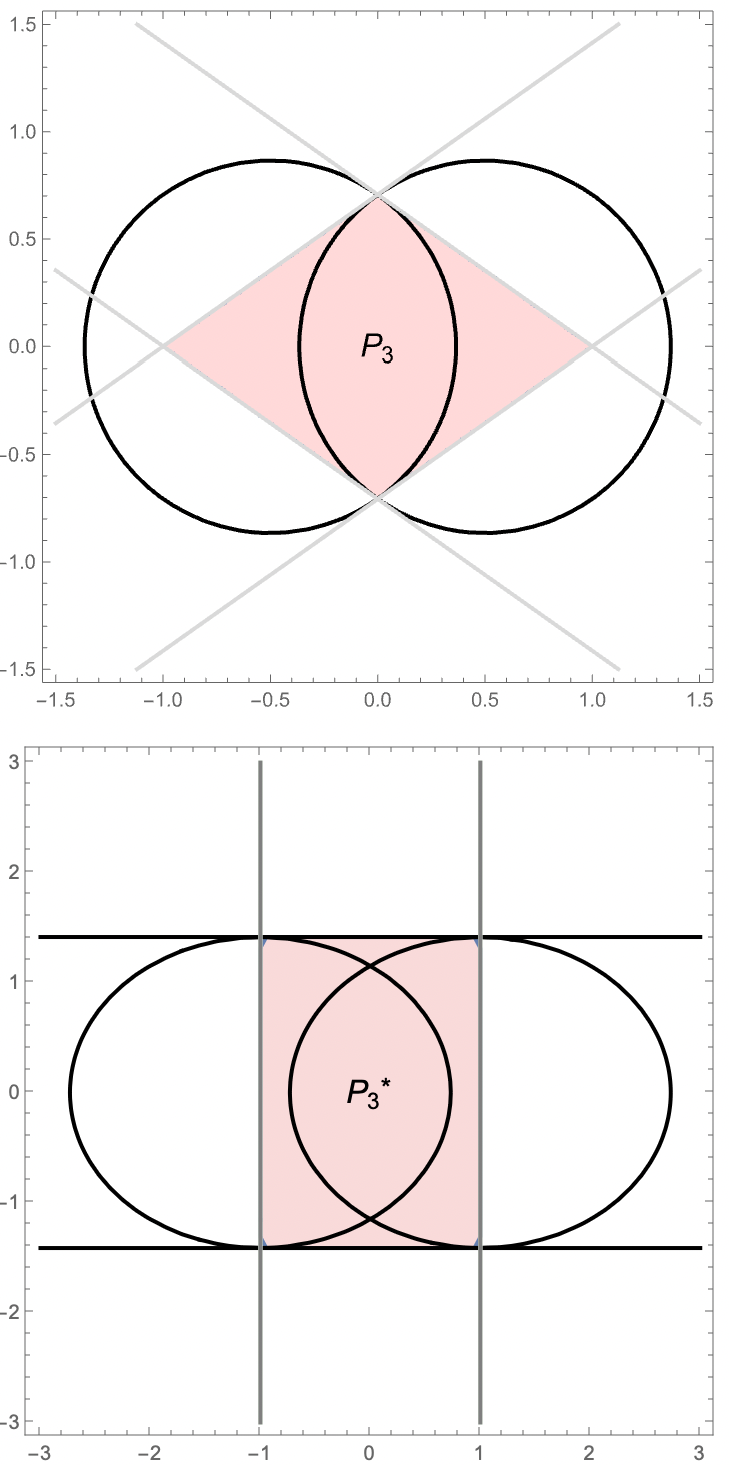}
\end{minipage}
\caption{The polycons $P_i$, above, together with their duals $P_i^*$, below, see Example~\ref{eg:curvy_2gon}. These form a canonical form triangulation for the curvy two-gon in Figure~\ref{fig:2con}, and yield the formula~\eqref{eq:meas_2con} for the representing measure.}
\label{fig:2con_tr}
\end{figure}

\end{eg}

\begin{eg}[Curvy four-gon]\label{eg:curvy_4gon}
   Assume now that $Q_1$ and $Q_2$ intersect in four distinct real points and consider the polycon $P$ of type $(4,0,2)$, given by $C_1 \cap C_2$. Then, $P$ is a curvy four-gon with vertices $v_i$, see Figure~\ref{fig:curvy_4gon}. Let us order the indices of $v_j$ such that $v_i$ form consecutive vertices of $P$, and the boundary component containing $v_1$ and $v_2$ is supported on $Q_1$. Similarly to the previous example, we introduce the eight tangent lines $T_{ji}$, where $T_{ji}$ is tangent to $Q_j$ at $v_{i}$, for $j=1,2$ and $i=1,\dots,4$. To obtain a canonical form triangulation of $P$, consider the four polycons $P_{i}$ of type $(1,2)$, bounded by $T_{1,i}$, $T_{1,i+1}$ and $Q_2$ if $i=1,3$ and by by $T_{2,i}$, $T_{2,i+1}$ and $Q_1$ if $i=2,4$, where we take the index $i$ modulo four. 
   
   \begin{figure}[t]
\centering
\begin{minipage}[c]{0.32\linewidth}
\includegraphics[width=\linewidth]{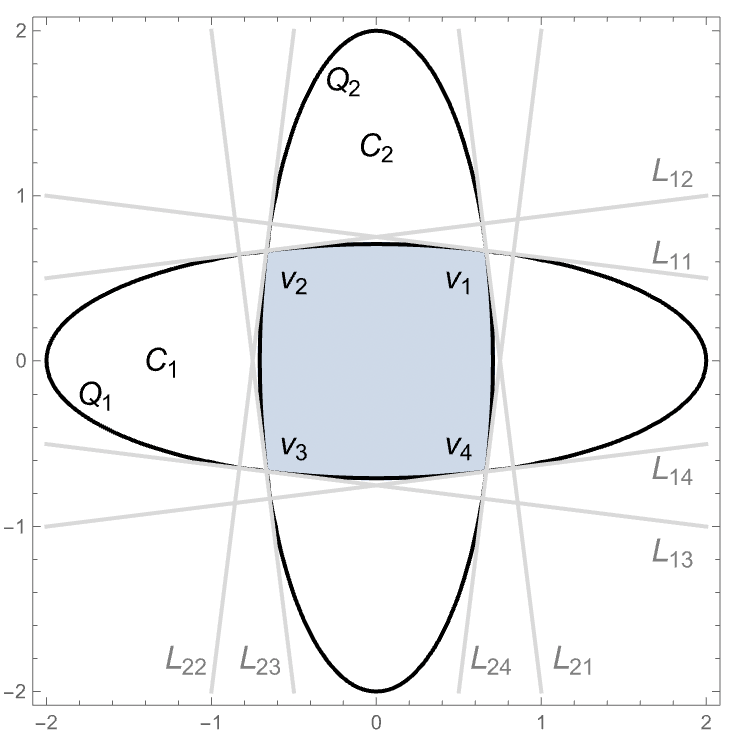}
\end{minipage}
\hfill
\begin{minipage}[c]{0.32\linewidth}
\includegraphics[width=\linewidth]{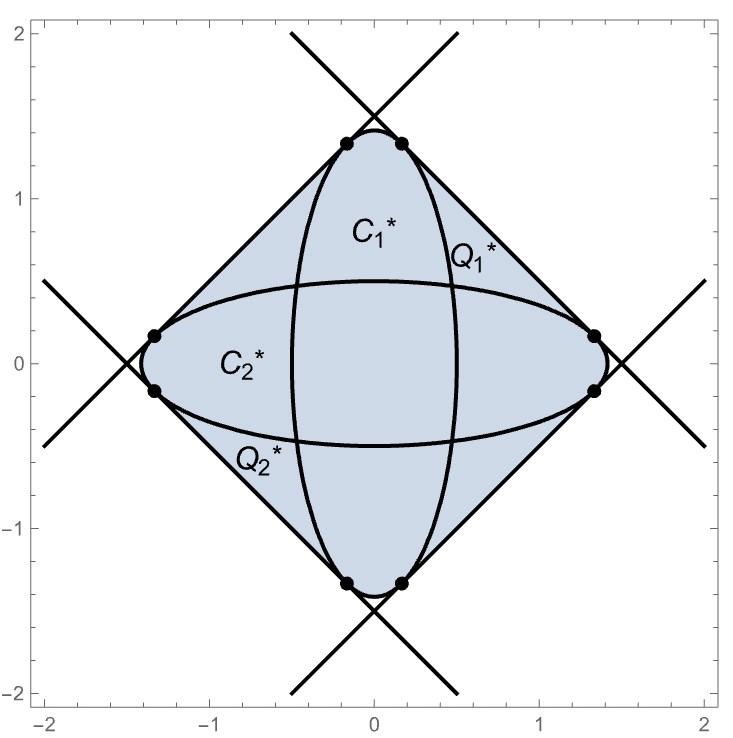}
\end{minipage}%
\hfill
\begin{minipage}[c]{0.35\linewidth}
\includegraphics[width=\linewidth]{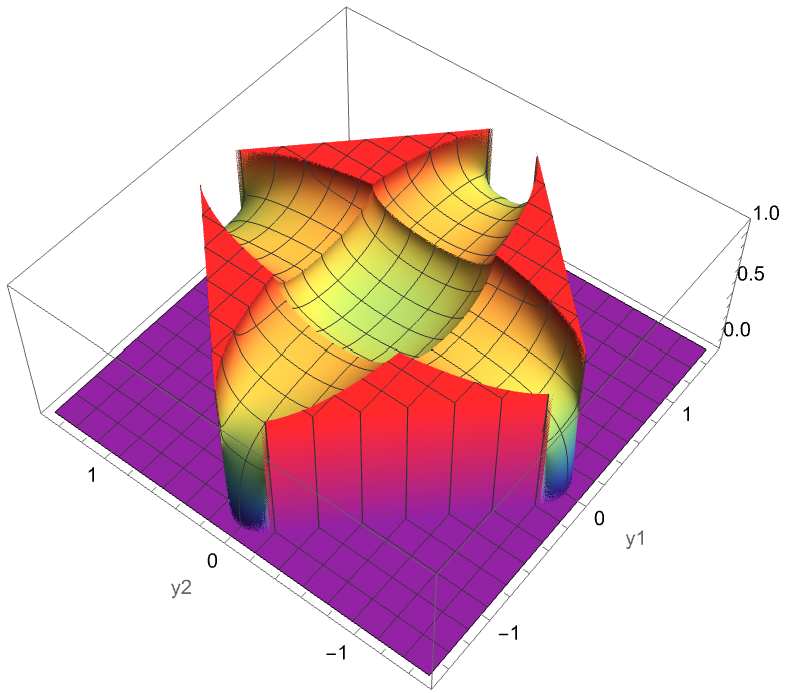}
\end{minipage}
\caption{On the left, a curvy four-gon $P$ as in Example~\ref{eg:curvy_4gon}. This is a positive geometry with an adjoint curve given by the line at infinity $x_3=0$. In the middle, its dual and on the right a plot of the measure for its canonical function, see~\eqref{eq:meqs_4con}. The measure is non-negative, and hence $(\mathbb{P}^2,P)$ is a completely monotone positive geometry.}
\label{fig:curvy_4gon}
\end{figure}

   We choose the sets $P_i$ such that $P \subset P_i$. Lastly, take $P_5$ to be the octagon bounded by the eight lines $T_{ji}$, such that $P \subset P_5$. Then, we have the canonical form triangulation 
   \begin{equation}\label{eq:form_tr_4gon}
       \Omega_{\widehat{P}} = \Omega_{\widehat{P}_1} + \Omega_{\widehat{P}_2} + \Omega_{\widehat{P}_3} + \Omega_{\widehat{P}_4} - \Omega_{\widehat{P}_5} \, .
   \end{equation}

   The dual sets $P^*_i$ all lie inside $P^*$: for example, $P^*_5$ is the octagon given by the convex hull of the black points in the middle of Figure~\ref{fig:curvy_4gon}.
   From~\eqref{eq:form_tr_4gon} we compute
    \begin{equation}\label{eq:meqs_4con}
        \mu_{\widehat{P}}(y)=
   \begin{cases}
1 \, , & y \in \widehat{P}^* \setminus \bigl(C_1^* \cup C_2^*\bigr) \, ,\\[2pt]
\mu_{\widehat{P}_1}(y) + \mu_{\widehat{P}_3}(y) \, ,  & y \in C_1^* \setminus C_2^* \, ,\\[2pt]
\mu_{\widehat{P}_2}(y) + \mu_{\widehat{P}_4}(y) \, , & y \in C_2^* \setminus C_1^* \, ,\\[2pt]
\mu_{\widehat{P}_1}(y)+\mu_{\widehat{P}_2}(y) + \mu_{\widehat{P}_3}(y) + \mu_{\widehat{P}_4}(y)-1 \, , & y \in C_1^* \cap C_2^* \, ,
\end{cases}
\end{equation}
where $\mu_{\widehat{P}_i}$ can be computed from~\eqref{eq:mu_pizza_slice2} for every $i=1,\dots,4$. As in the previous example, it is not clear that $\mu_{\widehat{P}}$ is non-negative on $C_1^* \cap C_2^*$, but this is true as one can see in Figure~\ref{fig:curvy_4gon}. We also checked ~\eqref{eq:meqs_4con} against \cite[Section 6]{wagner4}, where Wagner provides formulae for the fundamental solution to the product of two hyperbolic quadratic operators in three variables. The canonical function of the corresponding semialgebraic set is then obtained by differentiating the fundamental solution, according to Theorem~\ref{thm:Riesz_p/q}. Following this, we found numerical agreement with our result.

\end{eg}

By the  examples above, we believe that the generalization of Theorem~\ref{thm:polycones_CM} is true for every polycon of type $(r,s,t)$, although currently we do not have a proof of this.

\subsection{The nodal cubic}
\label{subsec:Cubics}

In the previous subsections we computed the measure for lines and conics in $\mathbb{P}^2$. Our next example is the positive geometry bounded by a nodal cubic. 
We choose the equation of the cubic as
\begin{equation}\label{eq:cubic}
    p(x) = 2 x_2^2 x_3 + (x_1+x_3)(x_3-x_1) = \det \begin{pmatrix}
        2x_3 & x_1+x_3 & 0 \\ x_1+x_3 & -x_1-x_3 & x_2 \\ 0 & x_2 & x_1+ x_3
    \end{pmatrix} \, .
\end{equation}
Then, $V(p)$ has a nodal singularity on (the ray spanned by) $(0,-1,1)$ and it is hyperbolic with hyperbolicity cone $\widehat{P}$ containing e.g. $(0,0,1)$, see Figure~\ref{fig:cubic}.
\begin{figure}[t]
\centering
\begin{minipage}[c]{0.31\linewidth}
\includegraphics[width=\linewidth]{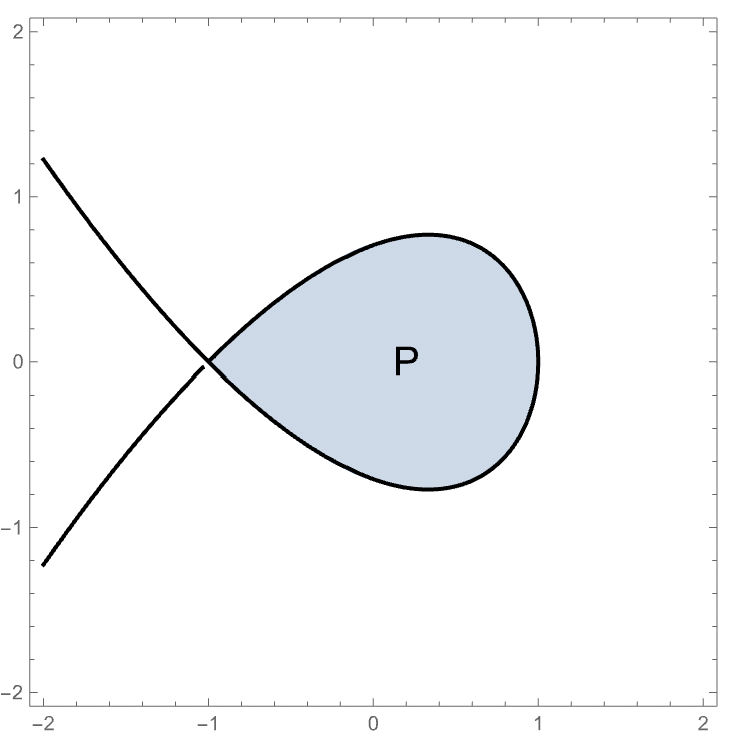}
\end{minipage}
\hfill
\begin{minipage}[c]{0.31\linewidth}
\includegraphics[width=\linewidth]{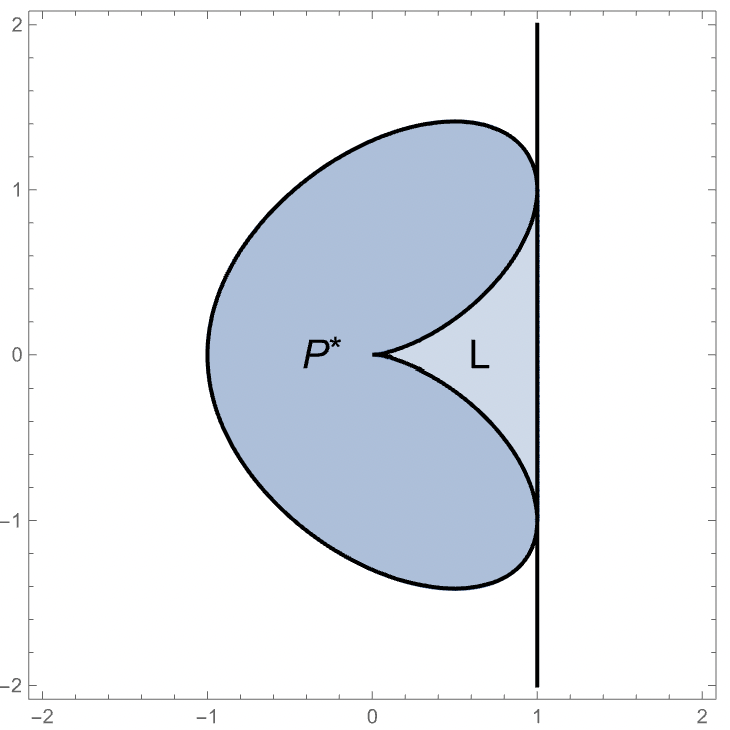}
\end{minipage}%
\hfill
\begin{minipage}[c]{0.35\linewidth}
\includegraphics[width=\linewidth]{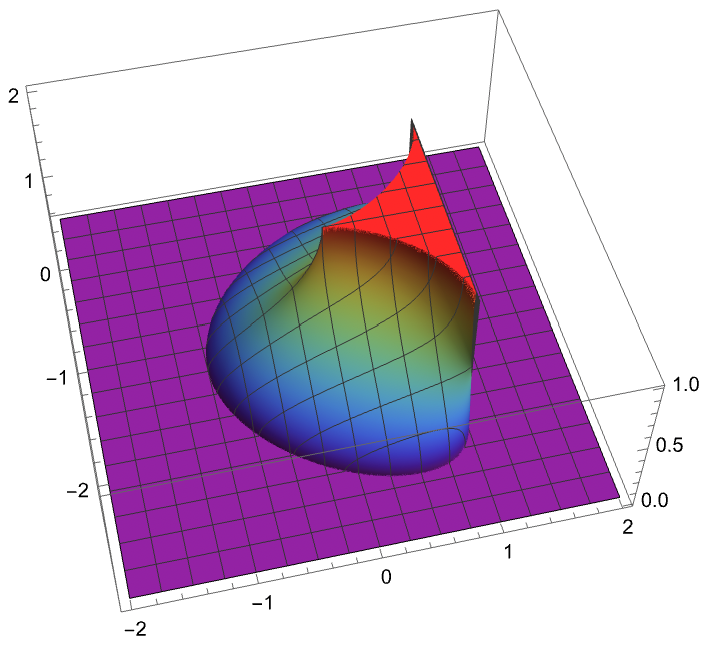}
\end{minipage}
\caption{On the left the nodal cubic cut out by ~\eqref{eq:cubic} with its hyperbolicity region~$P$, shaded in blue, on the affine slice $x_3=1$. In the middle, the dual $P^*$ given by the whole shaded region. The lightly-shaded region is the lacuna $L$, and in black one sees the two components of the algebraic boundary: the one of degree four given by the vanishing of~\eqref{eq:dual_cubic_pol}, and the line $x_1=1$ dual to the node of the cubic. On the right we plot the graph of the measure in~\eqref{eq:cubic_mu_period_2}.}
\label{fig:cubic}
\end{figure}

In particular, $\widehat{P}$ is a minimal spectrahedral cone.
Moreover, $P=\mathbb{P}(\widehat{P})$ is a positive geometry in $\mathbb{P}^2$ with canonical function equal to\footnote{The normalization constant is such that the two-fold residue of the canonical function at the node of the cubic in the slice $x_3=1$ is equal to $\pm 1$. Such residue can be computed as in~\cite[Section 5.4]{Brown:PG_Hodge}.}
\begin{equation}\label{eq:form_cubic}
    \Omega_{\widehat{P}}(x) = \frac{4}{p(x)} \, ,
\end{equation}
see~\cite[Section 5.4]{Brown:PG_Hodge}. By Corollary~\ref{cor:simplex_det_is_CM}, $(\mathbb{P}^2,P)$ is a completely monotone positive geometry.
We now compute the Riesz measure for~\eqref{eq:form_cubic}.

Let us first comment on what is known in the literature on this matter.
By Theorem~\ref{thm:hyp_PDE}, the Riesz measure for $p^{-1}$ is the fundamental solution $E$ in~\eqref{eq:fund_sol} to the associated PDE $p(-i\partial)$ with constant coefficients. Such solution can in principle be computed as an inverse Fourier transform. Remarkably, the fundamental solution for a smooth cubic in three variables was computed explicitly by Wagner~\cite{wagner1,wagner2}. As expected, $E$ is supported on the propagation cone, the dual $C^*$ to the hyperbolicity cone $C$ of the cubic. Moreover, there is an open subcone $L \subset C^*$, called the \textit{lacuna}\footnote{More generally, the lacuna for the fundamental solution $E$ to a hyperbolic partial differential equation with constant coefficients is defined as a region in $\mathbb{R}^n \setminus {\rm WF}(E)$, see Remark~\ref{rem:regularity_of_E}, on which $E$ is polynomial, and hence smooth~\cite{1Garding_1970}. In particular, $E$ vanishes on the complement of (the closure of) the propagation cone, and hence the latter defines a part of the lacuna of $E$.} on which $E$ is constant. For $y \in C^* \setminus L$ it turns out that $E(y)$ evaluates to an elliptic integral of the first kind. We stress that the derivation in~\cite{wagner2} is a quite involved computation, the result there cannot be directly applied to our setting, as the cubic of our interest~\eqref{eq:cubic} is singular. Note in fact that a smooth cubic does not define a positive geometry.

We therefore proceed in computing the measure for~\eqref{eq:form_cubic} with $p$ as in ~\eqref{eq:cubic} from its spectrahedral description, using the results below Proposition~\ref{Prop:det_is_CM}. 
Let us start by some comments on the dual cone $\widehat{P}^*$, the support of $\mu_{\widehat{P}}$, depicted on the right in Figure~\ref{fig:cubic}.
Its algebraic boundary consists of two components, one of degree four, cut out by the dual variety to $V(p)$ and given by the vanishing locus of
\begin{equation}\label{eq:dual_cubic_pol}
    q(y) = -27 y_2^2 y_3^2 + 16 y_2^4 - 18 y_2^2 y_1 y_3 + 13 y_2^2 y_1^2 + 8 y_1^3 y_3 + 8 y_1^4 \, ,
\end{equation}
and the line $y_1-y_3=0$, dual to the node of the cubic. Following~\cite{wagner2}, we expect the lacuna to be given by
\begin{equation}\label{eq:lacuna}
    L = \big\{y \in \mathbb{R}^3 \, : \, q(y) < 0 \, , \ 0<y_1<y_3 \, , \ -y_3< y_2<y_3 \, , \ y_3 > 0  \big\} \, ,
\end{equation}
i.e. we expect $\mu_{\widehat{P}}$ to be constant on $L$. Turning to the computation of $\mu_{\widehat{P}}$,
since in our case $m=3$ and $\alpha = 1$, we cannot use~\eqref{eq:Reisz_det_1} directly. Nevertheless, we find the following trick to work. Let us start with a generic power $\alpha > 3/2$ and write the Riesz measure $\mu_\alpha$ for $4 p^{-\alpha}$ as in~\eqref{eq:Reisz_det_1}. We now have to perform a three-dimensional integral as in~\eqref{eq:Reisz_det}. We can easily perform a one-dimensional integration, and notice that the result is regular in the limit $\alpha \rightarrow 1$. The validity of commuting the limit $\alpha \rightarrow 1$ with the integration is justified by the fact that we obtain the correct answer. After an appropriate change of variables, we arrive at the following integral representation for the Riesz kernel of \eqref{eq:form_cubic}:
\begin{equation}\label{eq:cubic_mu_period}
    \mu_{\widehat{P}}(y) = \int_{\mathcal{R}(y)} dr \, dt \,  \frac{ 4r}{\pi^2 \sqrt{(1-t^2) \, h(t,r,y)}} \, , \quad \forall \, y \in \mathbb{R}^3 \, ,
\end{equation}

where $h(t,r,y) =- y_2^2 - 4 r^2 \big(r^2 - y_1 - r t \sqrt{2(y_3-y_1)} \big)$ and the integration region is given by
\begin{equation}
    \mathcal{R}(y) = \{ (r,t) \in \mathbb{R}^2 \, : \, r>0 \, , \ 0<t<1 \, , \ h(t,r,y) > 0 \} \, .
\end{equation}
Note that~\eqref{eq:dual_cubic_pol} arises as the discriminant of the quartic~$h(1,r,y)$ in~$r$.
The integral ~\eqref{eq:cubic_mu_period} is a period of an algebraic curve of genus one, and by general results it evaluates to known complete elliptic integrals. In the following we refrain from finding an explicit expression in terms of elliptic integrals, but discuss further the properties of $\mu_{\widehat{P}}$ and plot its graph on the propagation cone.

First of all, as expected $\mu_{\widehat{P}}$ vanishes for $ y \notin \widehat{P}^*$, since we checked numerically that in this case $\mathcal{R}(y) = \emptyset$. Moreover, as expected $\mu_{\widehat{P}}$ is constant on $L$ in ~\eqref{eq:lacuna}. The constant value can be computed analytically from~\eqref{eq:cubic_mu_period}, by evaluating the integral for example at the point $y = (1/2,0,1) \in L$. We find 
\begin{equation}
    \mu_{\widehat{P}} (y) = 1 \, , \quad \forall \, y \in \overline{L} \, .
\end{equation}
For $y \in \widehat{P}^* \setminus \overline{L}$, we check numerically that the integration region becomes
\begin{equation}
    \mathcal{R}(y) = \Big\{ (r,t) \in \mathbb{R}^2 \, : \, r_{-}(y)<r<r_{+}(y) \, , \  \frac{4r^4 - 4 r^2 y_1 + y_2^2}{4 r^3 \sqrt{2(y_3-y_1)}} <t<1 \Big\} \, , \quad \forall \, y \in \widehat{P}^* \setminus \overline{L} \, ,
\end{equation}
where $r_{-}(y)<r_{+}(y)$ are the only two real positive roots of $h(1,r,y)$ in $r$ when $y \in \widehat{P}^* \setminus \overline{L}$. We can then perform the integral in $t$ in~\eqref{eq:cubic_mu_period} and obtain the following representation
\begin{equation}\label{eq:cubic_mu_period_2}
    \mu_{\widehat{P}}(y) =  \int_{r_{-}(y)}^{r_+(y)}dr \, \frac{2^{5/4}}{ \pi^2 \sqrt{r} \, (y_3-y_1)^{1/4}} \,  K\left(\frac{h(1,r,y)}{
  8 \, r^3 \sqrt{2(y_3-y_1)} } \right) \, , \quad \forall \, y \in \widehat{P}^* \setminus \overline{L} \, ,
\end{equation}
where $K$ is the complete elliptic integral of the first kind~\cite{elliptic_book}. As mentioned,\eqref{eq:cubic_mu_period} should be expressible in terms of complete elliptic integrals with appropriate algebraic prefactors. On the other hand,~\eqref{eq:cubic_mu_period} is suitable for numerical evaluation, and we plot the graph of $\mu_{\widehat{P}}$ in Figure~\ref{fig:cubic}.


Let us briefly comment on the regularity of $\mu_{\widehat{P}}$. As expected by Remark~\ref{rem:regularity_of_E}, $\mu_{\widehat{P}}$ is smooth away from the wave front set, which in our setting is given by $V(q) \cup \partial \widehat{P}^*$. Moreover, we find that $\mu_{\widehat{P}} $ is continuous away from the (cone over the) segment $ \partial L \cap \partial \widehat{P}^*$. This is to be compared with \cite{wagner2}, since in that case the fundamental solution was continuous everywhere except at the origin. In fact, as our cubic is singular, the boundary of the lacuna shares a part with the boundary of the propagation cone, which does not happen for a smooth cubic.

\section{Open problems}
\label{sec:Open problems}

We list here some of the followup questions that we find interesting.

\begin{question}[Refining the class of completely monotone positive geometries]
We showed in Corollary~\ref{cor:CM_PG_hyperbolic} that if a positive geometry in projective space is completely monotone, then it must be a hyperbolicity region of its algebraic boundary. Is the converse implication also true? In Corollary~\ref{cor:simplex_det_is_CM} we showed that simplex-like hyperbolic determinantal positive geometries are completely monotone. Does this result extend to non-simplex-like positive geometries? Note that an affirmative answer to the first question would imply an affirmative answer to the second one.
\end{question}

\begin{question}[Properties of the measure]
    By Remark~\ref{rem:regularity_of_E} we have that the measure representing the canonical function over a hyperbolic positive geometry is smooth away from the wave front set. It would be interesting to deduce more properties about the values of the measure on the wave front set, and in particular, what are its values on the boundary of the dual cone? 
\end{question}

\begin{question}[Inverse moment problem] The non-negative measure in~\eqref{eq:Choquet} can be interpreted as a probability measure supported on the dual cone. Then, its Laplace transform is the moment generating function of this distribution. This perspective naturally raises the question: under what conditions is the associated moment problem solvable, e.g. for (rational) canonical functions of positive geometries in projective space? Understanding these conditions might shed light on possible constraints on the types of measures that can arise in this context.
\end{question}


\begin{question}[Computing the measure]
    As we have seen in Section~\ref{sec:Examples and computations}, even though we have a method to in principle compute the measure for the class of positive geometries that are minimal spectrahedra, see Subsection~\ref{subsec:determinantal representations and spectrahedral shadows}, to obtain explicit expressions in terms of known functions is a complicated task. This is because the measure evaluated at a point is a period integral. Can one find a more systematic way of computing the measure, for example using D-module techniques~\cite[Section 7]{Pos_Certif}?
\end{question}

\begin{question}[Finding positive geometries with computable measures]
    Another approach, related to the previous question, would be instead to identify a class of positive geometries for which it is possible to compute the representing measure in terms of known functions. We propose for example the family of positive geometries in projective space bounded by hyperplanes and quadrics. Can we understand which transcendental functions are needed to express the measure in this case? 
\end{question}


\begin{question}[Extend complete monotonicity to functions on subvarieties]
We are interested in extending the notion of complete monotonicity to functions defined on semialgebraic sets in (embedded) real projective varieties. In fact, our main motivation is to study the dual volume representation for canonical forms of amplituhedra, which are certain semialgebraic sets in the Grassmannian~\cite[Section 6.6]{Positive_geometries}. In order to approach this question, we should answer the following: how should we define complete monotonicity for a real-valued function defined on a subset of the real Grassmannian? 
\end{question}

\section{Acknowledgements}
We would like to thank Lorenzo Baldi, Clemens Brüser, Hadleigh Frost, Johannes Henn, Joris Koefler, Khazhgali Kozhasov, Mateusz Michałek, Lizzie Pratt, Giulio Salvatori, Rainer Sinn, Bernd Sturmfels for helpful discussions. This project is funded by the European Union (ERC, UNIVERSE PLUS, 101118787). Views and opinions expressed are those of the authors only and do not necessarily reflect those of the European Union or the European Research Council Executive Agency. Neither the European Union nor the granting authority can be held responsible for them.

\appendix

\printbibliography
\end{document}